\newif\iflong
\title{Fault-Tolerant Computing with Unreliable Channels (Extended Version)}
\author{Alejandro Naser-Pastoriza}
       {IMDEA Software Institute, Madrid, Spain \and
        Universidad Politécnica de Madrid, Spain}{}{}{}
\author{Gregory Chockler}
       {University of Surrey, Guildford, UK}{}{}{}
\author{Alexey Gotsman}
       {IMDEA Software Institute, Madrid, Spain}{}{}{}
\authorrunning{A. Naser-Pastoriza, G. Chockler and A. Gotsman}
\titlerunning{Fault-Tolerant Computing with Unreliable Channels}
\keywords{Consensus, network partitions, liveness, synchronizers}
\begin{document}

\newtheorem*{theorem*}{Theorem}

\newtheorem{innercustomthm}{Theorem}
\newenvironment{customthm}[1]
  {\renewcommand\theinnercustomthm{#1}\innercustomthm}
  {\endinnercustomthm}

\newtheorem{innercustomcor}{Corollary}
\newenvironment{customcor}[1]
  {\renewcommand\theinnercustomcor{#1}\innercustomcor}
  {\endinnercustomcor}

\theoremstyle{claimstyle}
\newtheorem{myclaim}{Claim}

\SetKwProg{Function}{function}{:}{}
\SetKwBlock{SubAlgoBlock}{}{end}
\newcommand{\SubAlgo}[2]{#1 \SubAlgoBlock{#2}}
\let\oldnl\nl
\newcommand{\nonl}{\renewcommand{\nl}{\let\nl\oldnl}}

\newcommand{\tr}[2]{\iflong{}\S#1\else{}\cite[\S#2]{this-extended-version}\fi}
\newcommand{\appabd}{A}
\newcommand{\appregfromconsensus}{B}
\newcommand{\appsync}{C}
\newcommand{\appconsensus}{D}

\newcommand{\when}{\textbf{when}\xspace}
\newcommand{\onreceive}{\textbf{when received}\xspace}
\newcommand{\onaccumulate}{\textbf{when accumulated}\xspace}
\newcommand{\send}{\textbf{send}\xspace}
\newcommand{\from}{\textbf{from}\xspace}
\newcommand{\ToAll}{\textbf{to all}\xspace}
\newcommand{\To}{\textbf{to}\xspace}
\newcommand{\precond}{\textbf{pre:}\xspace}
\newcommand{\poscond}{\textbf{pos:}\xspace}
\newcommand{\assign}[2]{\ensuremath{#1} \ensuremath{\leftarrow} \ensuremath{#2}}
\newcommand{\trigger}{\textbf{trigger}\xspace}
\newcommand{\nop}{{\bf nop}\xspace}
\newcommand{\whentimer}{{\bf when the timer }\xspace}
\newcommand{\delete}{\textbf{delete}\xspace}
\newcommand{\store}{\textbf{store}\xspace}

\newcommand{\FromAQuorum}{\from \ {\bf each $p_j$ in a quorum} $\mathcal{Q}$}
\newcommand{\ToAllMinusSelf}{\To $\mathcal{P} \setminus \{p_i\}$}
\newcommand{\AQuorumOf}{\ {\bf a quorum of}\ }

\newcommand{\view}{\mathsf{view}}
\newcommand{\currview}{\mathsf{curr\_view}}
\newcommand{\views}{\mathsf{views}}
\newcommand{\viewtimer}{\mathsf{view\_timer}}
\newcommand{\expired}{\mathsf{expired}}
\newcommand{\TRUE}{{\tt TRUE}}
\newcommand{\FALSE}{{\tt FALSE}}
\newcommand{\advanced}{\mathsf{advanced}}
\newcommand{\status}{\mathsf{status}}
\newcommand{\phase}{\mathsf{phase}}
\newcommand{\cview}{\mathsf{cview}}
\newcommand{\logg}{\mathsf{log}} %
\newcommand{\nextt}{{\sf next}} %
\newcommand{\lastdelivered}{{\sf last\_delivered}}

\newcommand{\onexpire}{\textbf{when the timer}\xspace}
\newcommand{\expires}{\textbf{expires}\xspace}
\newcommand{\enterview}{\mathsf{enter\_view}}
\newcommand{\newview}{\mathsf{new\_view}}
\newcommand{\stoptimer}{\mathsf{stop\_timer}}
\newcommand{\starttimer}{\mathsf{start\_timer}}
\newcommand{\stopalltimers}{\mathsf{stop\_all\_timers}}
\newcommand{\adv}{\mathsf{advance}}
\newcommand{\start}{\mathsf{start}}
\newcommand{\timerdelivery}{\mathsf{timer\_delivery}}
\newcommand{\timerrecovery}{\mathsf{timer\_recovery}}
\newcommand{\timercommit}{\mathsf{timer\_commit}}
\newcommand{\durdelivery}{\mathsf{dur\_delivery}}
\newcommand{\durrecovery}{\mathsf{dur\_recovery}}
\newcommand{\durcommit}{\mathsf{dur\_commit}}
\newcommand{\broadcast}{{\sf broadcast}}
\newcommand{\deliver}{{\sf deliver}}

\newcommand{\GST}{{\sf GST}}
\newcommand{\GV}{{\sf GV}}
\newcommand{\LV}{{\sf LV}}
\newcommand{\dom}{{\sf dom}}
\newcommand{\F}{\mathcal{F}}
\newcommand{\PP}{\mathcal{P}}
\newcommand{\GG}{\mathcal{G}}
\newcommand{\V}{\mathcal{V}}
\newcommand{\Q}{\mathcal{Q}}
\newcommand{\Sy}{\mathcal{S}}
\newcommand{\hub}{\mathcal{H}}
\newcommand{\E}{\overline{E}}
\newcommand{\A}{\overline{A}}
\newcommand{\leader}{{\sf leader}}

\newcommand{\WISH}{{\tt WISH}}
\newcommand{\ENTER}{{\tt ENTER}}
\newcommand{\STATE}{{\tt STATE}}
\newcommand{\NEWSTATE}{{\tt NEW\_STATE}}
\newcommand{\NEWSTATEACK}{{\tt NEW\_STATE\_ACK}}
\newcommand{\BROADCAST}{{\tt BROADCAST}}
\newcommand{\ACCEPT}{{\tt ACCEPT}}
\newcommand{\ACCEPTACK}{{\tt ACCEPT\_ACK}}
\newcommand{\COMMIT}{{\tt COMMIT}}
\newcommand{\RESENDCOMMIT}{{\tt RESEND\_COMMIT}}
\newcommand{\BROADCASTSTATE}{{\tt BROADCAST\_STATE}}
\newcommand{\ABORT}{{\tt ABORT}}

\newcommand{\OGST}{{\sf GST} + \rho}
\newcommand{\AF}[1]{A_{\rm first}^{#1}}
\newcommand{\AL}[1]{A_{\rm last}^{#1}}
\newcommand{\EF}[1]{E_{\rm first}^{#1}}
\newcommand{\EL}[1]{E_{\rm last}^{#1}}
\newcommand{\AEF}{{E}_{\rm first}}
\newcommand{\AEL}{{E}_{\rm last}}
\newcommand{\AAF}{{A}_{\rm first}}
\newcommand{\AAL}{{A}_{\rm last}}
\newcommand{\CF}{C_{\rm fin}}
\newcommand{\CI}{C_{\rm inf}}
\newcommand{\fdef}{\mathpunct{\downarrow}}
\newcommand{\fndef}{\mathpunct{\uparrow}}

\newcommand{\RECOVERING}{\textsc{recovering}}
\newcommand{\FOLLOWER}{\textsc{follower}}
\newcommand{\LEADER}{\textsc{leader}}
\newcommand{\ADVANCED}{\textsc{advanced}}
\newcommand{\ELECTION}{\textsc{election}}

\renewcommand{\_}{\texttt{\textunderscore}}

\newcommand{\diameter}{\mathsf{diameter}}

\newcommand{\propose}{\mathsf{propose}}
\newcommand{\decide}{\mathsf{decide}}
\newcommand{\val}{\mathsf{val}}
\newcommand{\pval}{\mathsf{my\_proposal}}
\newcommand{\POA}{{\tt P1A}}
\newcommand{\POB}{{\tt P1B}}
\newcommand{\PTA}{{\tt P2A}}
\newcommand{\PTB}{{\tt P2B}}
\newcommand{\MSGOB}{\mathsf{M1B}}
\newcommand{\MSGTA}{\mathsf{M2A}}
\newcommand{\MSGTB}{\mathsf{M2B}}
\newcommand{\VOB}{\textit{V1B}}
\newcommand{\VTA}{\textit{V2A}}
\newcommand{\VTB}{\textit{V2B}}

\newcommand{\timerdecision}{\mathsf{decision\_timer}}
\newcommand{\durdecision}{\mathsf{timeout}}

\newcommand{\RG}[2]{{#1}\setminus{#2}}
\newcommand{\FS}{\mathcal{F}}
\newcommand{\KFS}{\mathcal{F}_M}
\newcommand{\CC}{\mathcal{C}}
\newcommand{\CQ}{\mathcal{S}}
\newcommand{\ALG}{\mathcal{A}}
\newcommand{\RD}{{\it read }}
\newcommand{\MRL}{\mathcal{M}_{{\rm ARD}}}
\newcommand{\MRU}{\mathcal{M}_{{\rm AEF}}}
\newcommand{\MCL}{\mathcal{M}_{{\rm PRD}}}
\newcommand{\MCU}{\mathcal{M}_{{\rm PEF}}}

\newcommand{\M}{\mathcal{M}}

\newcommand{\MRUH}{\mathcal{S}^{\star}}

\newcommand{\itval}{{\it val}}
\newcommand{\itview}{{\it view}}
\newcommand{\itcview}{{\it cview}}

\newcommand{\ENTERED}{{\tt ENTERED}}
\newcommand{\PROPOSED}{{\tt PROPOSED}}
\newcommand{\ACCEPTED}{{\tt ACCEPTED}}
\newcommand{\DECIDED}{{\tt DECIDED}}

\newcommand{\return}{\textbf{return}\xspace}
\newcommand{\wait}{\textbf{wait until}\xspace}

\newcommand{\idle}{\textsc{idle}}
\newcommand{\wrquery}{\textsc{wr\_query}}
\newcommand{\wrpropagate}{\textsc{wr\_propagate}}
\newcommand{\wrdone}{\textsc{wr\_done}}
\newcommand{\rdquery}{\textsc{rd\_query}}
\newcommand{\rdpropagate}{\textsc{rd\_propagate}}
\newcommand{\rddone}{\textsc{rd\_done}}

\newcommand{\oread}{\mathsf{read}}
\newcommand{\owrite}{\mathsf{write}}
\newcommand{\ts}{\mathsf{tag}}
\newcommand{\wrval}{\mathsf{wr\_val}}
\newcommand{\rdval}{\mathsf{rd\_val}}
\newcommand{\seq}{\mathsf{seq}}
\newcommand{\query}{\mathsf{query}}
\newcommand{\queryack}{\mathsf{query\_ack}}
\newcommand{\wwrite}{\mathsf{write}}
\newcommand{\wwriteack}{\mathsf{write\_ack}}
\newcommand{\retval}{\textit{ret\_val}}

\newcommand{\sseq}{\mathit{seq}}
\newcommand{\sval}{\mathit{val}}
\newcommand{\sts}{\mathit{tag}}

\newcommand{\ACK}{{\tt ACK}}
\newcommand{\QUERY}{{\tt QUERY}}
\newcommand{\QUERYACK}{{\tt QUERY\_ACK}}
\newcommand{\WRITE}{{\tt WRITE}}
\newcommand{\WRITEACK}{{\tt WRITE\_ACK}}

\newcommand{\requested}{\mathsf{requested}}
\newcommand{\pending}{\mathsf{pending}}
\newcommand{\ordered}{\mathsf{ordered}}
\newcommand{\completed}{\mathsf{completed}}
\newcommand{\responded}{\mathsf{responded}}
\newcommand{\vertices}{\mathsf{vertices}}
\newcommand{\lastwrites}{\mathsf{lastwrites}}
\newcommand{\precc}{\mathsf{prec}}
\newcommand{\tagg}{\mathsf{tag}}
\newcommand{\mintag}{\mathsf{mintag}}
\newcommand{\dagg}{\mathsf{dag}}
\newcommand{\edges}{\mathsf{edges}}
\newcommand{\ops}{\mathcal{O}}
\newcommand{\newtag}{\mathsf{newtag}}

\maketitle
\begin{abstract} 
  We study implementations of basic fault-tolerant primitives, such as consensus
  and registers, in message-passing systems subject to process crashes and a
  broad range of communication failures. Our results characterize the necessary
  and sufficient conditions for implementing these primitives as a function of
  the connectivity constraints and synchrony assumptions. Our main contribution
  is a new algorithm for partially synchronous consensus that is resilient to
  process crashes and channel failures and is optimal in its connectivity
  requirements. In contrast to prior work, our algorithm assumes the most
  general model of message loss where faulty channels are {\em flaky}, i.e., can
  lose messages without any guarantee of fairness. This failure model is
  particularly challenging for consensus algorithms, as it rules out standard
  solutions based on leader oracles and failure detectors. To circumvent this
  limitation, we construct our solution using a new variant of the recently
  proposed \emph{view synchronizer} abstraction, which we adapt to the
  crash-prone setting with flaky channels.

\end{abstract}

\bigskip
\section{Introduction}
\label{sec:intro}

We are concerned with implementing basic fault-tolerant primitives, such as
registers and consensus, in systems where processes can crash and communication
channels may lose messages.  This setting is of practical importance, since
channel failures regularly occur in real-world
deployments~\cite{bailis-kingsbury,physalia,heidi-raft}. They arise from a
variety of reasons -- physical infrastructure failures, bugs in switch software,
configuration errors -- and they often lead to system outages. For example,
Alquraan et al.~\cite{osdi-partitions} conducted a comprehensive study of
failures due to network partitions in widely used replicated data stores.  It
found that a majority of such failures led to catastrophic effects and that the
resolution of almost half of these failures required redesigning a system
mechanism. Thus, the failures were not simply due to coding bugs, but to design
flaws.

\begin{figure}[t]
  \centering
    \includegraphics[width=0.85\linewidth,trim=0 240 0 240,clip]{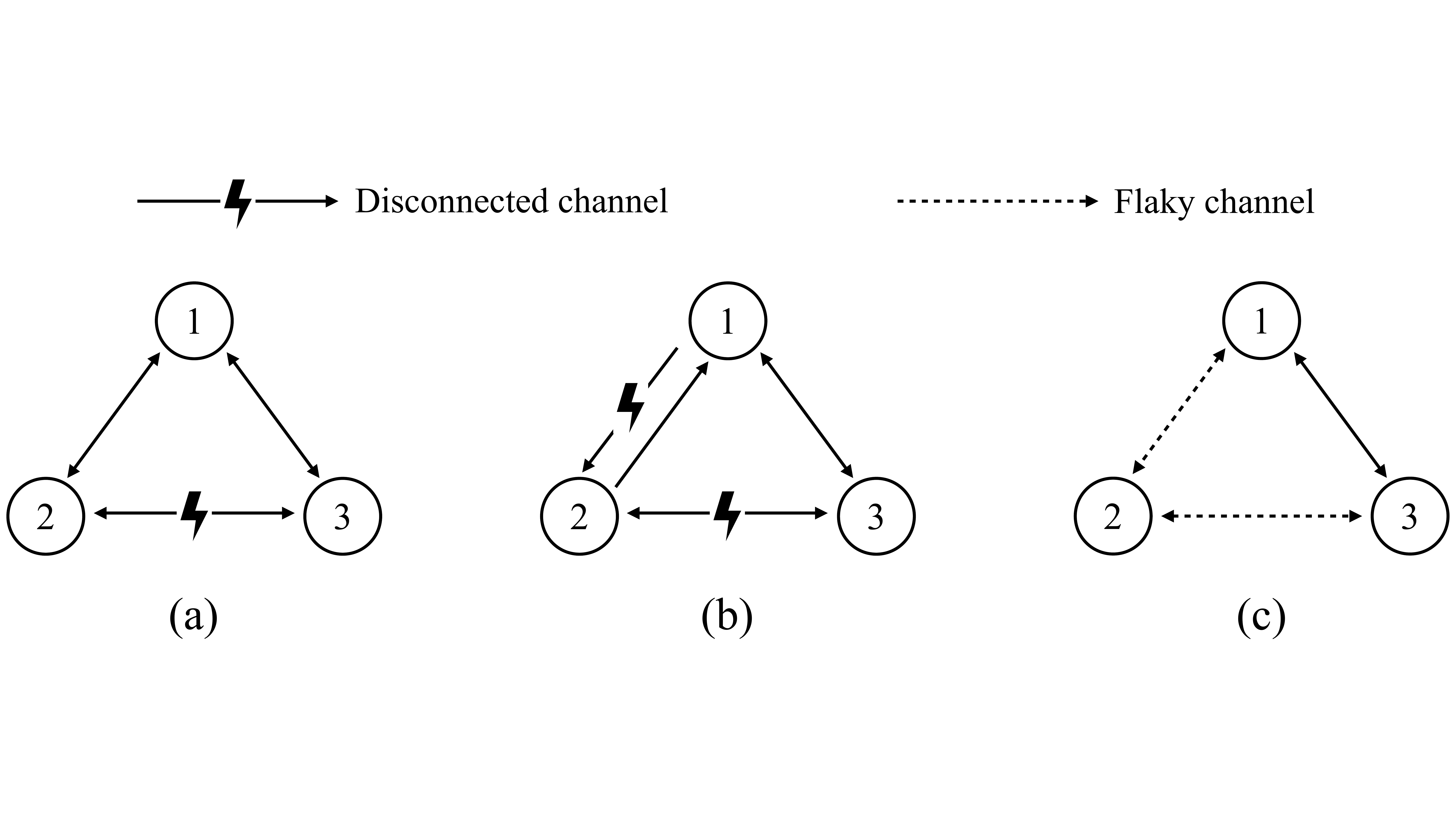}
    \caption{Examples of irregular connectivity configurations:
    (a) indirect connectivity; (b) asymmetric connectivity; and (c) flaky
    connectivity. All processes are correct.}
    \label{fig:examples}
\end{figure}

Perhaps the most challenging aspect of real-world network failures is that
the ways the network can break down can be arbitrarily complex,
resulting in a wide variety of connectivity configurations.
While in the simplest case individual channel failures may not affect the overall
network connectivity (Figure~\ref{fig:examples}a), in more complex scenarios
the network may become partitioned into multiple components,
some of which may end up connected in one direction,
but not in the other (Figure~\ref{fig:examples}b).
In the worst case, some components may become intermittently connected,
causing an arbitrary subset of 
the messages transmitted between them to get lost (Figure~\ref{fig:examples}c).

Intermittent connectivity has so far received little attention in the theory
research. Most network models for studying consensus assume that channels are
either (eventually) reliable or (eventually) disconnected
(Figures~\ref{fig:examples}a-b). Reliable channels are sometimes replaced by
\emph{fair lossy} ones, which only guarantee to deliver a message if it was sent
infinitely many times. However, reliable and fair lossy channels are
computationally equivalent: the former can be implemented from the latter by
repeatedly resending each message until it gets acknowledged and filtering out
duplicates~\cite{lossy-channels-AfekAFFLMWZ94}.

Although the classical abstractions of failure or leader detectors~\cite{CT96}
can be adapted to solve consensus in the above
settings~\cite{paxos,raft,aguilera-heartbeat,friedman1,friedman2,friemdan-podc-ba},
they are no longer useful in the presence of unconstrained message loss.
Intuitively, the reason is that in this case they may fail to correctly identify
processes with reliable connectivity, which is necessary to ensure the liveness
of consensus.
For example, suppose that the pattern of message loss experienced by the channels $2 \leftrightarrow 1$ and
$2 \leftrightarrow 3$ in Figure~\ref{fig:examples}c is such that all leader
election messages are getting through. Thus, it is possible for process $2$
to be elected as a leader. However, since any message 
sent by process $2$ after being elected can be dropped (no matter how many times
it is resent),
the process may not be able to drive consensus to completion, violating liveness.

In this paper 
we investigate the possibility of implementing
basic fault-tolerant abstractions resilient to a broad range of communication
failures, including those induced by intermittent connectivity.
To this end, we obtain several lower and upper bounds that
characterize the solvability of registers and consensus
as a function of the channel failure model, 
connectivity constraints and synchrony assumptions.
Our main contribution is a new crash fault-tolerant algorithm that solves
partially synchronous consensus under the \emph{most general} model
of message loss and is \emph{optimal}
in terms of its network connectivity requirements. Our
algorithm furthermore offers a new way of constructing modular protocols 
for unreliable networks, which does not rely on the leader election or
failure detection abstractions. 
Below we review our results in more detail.

\subparagraph{Lower bounds.}  We first present a general framework for
specifying fault-tolerance assumptions on process and channel failures
(\S\ref{sec:model}-\ref{sec:correctness}) by generalizing the classical notion
of a {\em fail-prone system}~\cite{bqs}.
For example, this framework allows us to specify whether an algorithm is meant
to tolerate the failure patterns depicted in Figure~\ref{fig:examples}.
We then use this framework to establish minimal
connectivity requirements necessary to implement registers or consensus
(\S\ref{section:lower_bound}). To obtain strong lower bounds, we consider strong
network models where correct channels are reliable and faulty channels are disconnected,
i.e., drop all messages. We consider asynchrony for registers (model $\MRL$ in
Figure~\ref{fig:models}) and partial synchrony~\cite{dls} for consensus (model
$\MCL$): the execution starts with an asynchronous period and then becomes
synchronous. The most interesting aspect of our lower bounds is that, unlike 
the prior work on consensus under unreliable connectivity
(e.g.,~\cite{aguilera-podc04,topology-consensus}), they are proven for
a very weak termination guarantee that only requires 
obstruction-free termination at a \emph{subset} of processes. 
Informally, we show that for any $n$-process implementation of a register or consensus:
\begin{enumerate}
\item All processes where obstruction-freedom holds must be strongly connected
  via correct channels.
\item If the implementation tolerates $k$ process crashes and $n=2k+1$, then any
  process where obstruction-freedom holds must belong to a set of $\ge k+1$
  correct processes strongly connected by correct channels.
\end{enumerate}
We call the largest set satisfying the condition in (2) the {\em connected core}
of the network. For example, in Figure~\ref{fig:examples}a the connected core is
$\{1, 2, 3\}$, whereas in Figures~\ref{fig:examples}b-c it is $\{1, 3\}$.

The above results generalize the celebrated CAP theorem~\cite{brewer,cap}, which
says that it is impossible to guarantee all of Consistency, Availability and
network Partition-tolerance (\S\ref{sec:cap-revisited}). Whereas CAP only says
that to ensure availability and consistency, {\em some} channels must be
correct, we establish {\em how many} are needed; and whereas CAP requires
availability at {\em all} processes, we establish conditions required to ensure
it only in {\em a part} of the system. Result (2) furthermore establishes
stronger requirements for algorithms resilient to a minority of crashes (which
is optimal~\cite{lynch-dist-algos,dls}).
It shows that, even if obstruction-freedom is required only at a {\em single}
process, a majority thereof must still be strongly connected by correct
channels.

\begin{figure}[t]
\begin{center}
\begin{tabular}{|c|c|c|}
  \hline
  & Reliable/Disconnected
  &
    Eventually reliable/Flaky \\
  & (lower bound)
  &
    (upper bound) \\
  \hline
  Asynchronous (register)&
  $\MRL$ &
  $\MRU$
\\
  \hline
  Partially synchronous (consensus) &
  $\MCL$ &
  $\MCU$
\\
  \hline
\end{tabular}
\end{center}
\caption{System models with different channel types and synchrony assumptions.}
\label{fig:models}
\end{figure}

\subparagraph{Upper bounds.}  The second part of our contribution is to propose
algorithms for registers and consensus that match our lower bounds. These
algorithms assume the existence of the connected core and guarantee wait-freedom
at all of its members.
They are designed to work in an adversarial model where correct channels are only
{\em eventually} reliable and faulty channels are {\em flaky}, i.e., can drop
any messages sent on them without any guarantee of fairness
(models $\MRU$ and $\MCU$ in Figure~\ref{fig:models}). Flakiness is a very
broad failure mode that captures a number of failure patterns occurring in
practice, including both full and intermittent loss of connectivity. It also
subsumes the previously considered variants of lossy channels, such as
eventually disconnected~\cite{aguilera-heartbeat,gcs-survey} and fair
lossy~\cite{lossy-channels-AfekAFFLMWZ94,basu96}. Although irregular connectivity
patterns have been studied before~\cite{friedman1,friedman2,friemdan-podc-ba,aguilera-heartbeat,aguilera-podc03-journal},
to the best of our knowledge we are the first to propose register and consensus
implementations in the presence of flaky channels.

The implementation for consensus is the more challenging one
(\S\ref{section:consensus}). Since, as we explained above, failure and leader
detectors are not useful in the presence of flaky channels, we take a different
approach: we generalize the abstraction of a {\em view
  synchronizer}~\cite{multishot-disc22,bftlive-dc,cogsworth,oded-linear},
recently proposed for Byzantine consensus, to benign failure settings with flaky
channels. Roughly, a view synchronizer facilitates a commonly used design
pattern in which the protocol execution is divided into {\em views} (aka {\em
  rounds}). Each view has a designated leader that coordinates the process
interactions within that view.
The task of the synchronizer is to ensure that sufficiently many correct
processes are eventually able to spend sufficient time in the same view with a
correct leader, which would then be able to drive consensus to
completion. Supporting this functionality under partial synchrony is nontrivial,
as during asynchronous periods clocks can diverge and messages used to
synchronize views could get lost or delayed.
View synchronizers encapsulate the necessary logic to deal with these complexities, 
thereby enabling modular design of consensus protocols.

In contrast to failure or leader detectors, a synchronizer does not attempt to
identify the set of misbehaving processes (which, as we argued above, is
impossible with flaky channels), but instead delegates this task
to the top-level protocol. This can monitor the current leader using timeouts
and other protocol-specific logic and ask the synchronizer to switch to another
view with a different leader if it detects a lack of progress.
For example, if processes $1$ and $3$ in
Figure~\ref{fig:examples}c do not observe progress on behalf of 
process $2$ due to message loss, they would eventually initiate a view change
and try another leader.

We present a specification of a view synchronizer sufficient to implement
consensus in the presence of flaky channels, its implementation, and prove that
the implementation satisfies our specification.
Even though handling crashes is easier than Byzantine faults, flaky channels create
another challenge: processes outside the connected core 
(such as process $2$ in Figures~\ref{fig:examples}b-c)
falsely suspecting the current leader should not be able to force 
a view change, disrupting a working view.
Using our synchronizer we then design a consensus protocol that tolerates flaky
channels and prove its correctness.

Finally, we also demonstrate how our lower and upper bounds can be applied to
establish consensus solvability in various existing models of weak
connectivity~\cite{aguilera-podc03-journal,aguilera-podc04,dahlia-t-accessible,dahila-t-moving-source,antonio-intermittent-star}. In
particular, we show that some connectivity models strong enough to implement the
$\Omega$ leader detector are nevertheless too weak to implement consensus
(\S\ref{sec:cons-via-omega}).

\section{System Model}
\label{sec:model}

We consider a set $\PP$ of $n$ processes which can fail by crashing.
A process is {\em correct}\/ if it never crashes, and {\em faulty}\/ otherwise.
Processes communicate by exchanging messages through a set of unidirectional
channels $\CC$: for every pair of processes $p, q \in \PP$ there is a channel
$(p, q) \in \CC$ for sending messages from $p$ to $q$.

We consider two notions of channel correctness ({\em reliable} and {\em
  eventually reliable}) and two notions of channel faultiness ({\em
  disconnected} and {\em flaky}). Given correct processes $p$ and $q$:
\begin{itemize}
\item a channel $(p, q)$ is {\em reliable} if it delivers every message sent by
  $p$ to $q$;
\item a channel $(p, q)$ is {\em eventually reliable} if there exists a time $t$
  such that the channel delivers every message sent by $p$ to $q$ after $t$;
\item a channel $(p, q)$ is {\em disconnected} if it drops all messages sent by
  $p$ to $q$; and
\item a channel $(p, q)$ is {\em flaky} if it drops an arbitrary subset of
  messages sent by $p$ to $q$.
\end{itemize}
Flakiness is a very broad failure mode: it subsumes disconnections and allows
the channel to choose which messages to drop, without any guarantee of
fairness. Thus, flaky channels are strictly more permissive than fair lossy,
which are computationally equivalent to reliable, as we explained in
\S\ref{sec:intro}.

Our lower bounds assume the stronger notion of channel correctness (reliable)
and the more restrictive notion of faultiness (disconnected), whereas our upper
bounds assume weaker correctness (eventually reliable) and broader faultiness
(flaky). We combine these channel reliability assumptions with two types of
synchrony guarantees for correct channels, which yields four models used in our
results (Figure~\ref{fig:models}). We use the {\em asynchronous} model in our
results about registers, and the {\em partially synchronous}
model~\cite{dls,CT96} in our results about consensus. The latter assumes that
there exists a {\em global stabilization time} $\GST$ and a duration $\delta$
such that after $\GST$ message delays between correct processes connected by
correct channels are bounded by $\delta$. However, messages sent before $\GST$
can get arbitrarily delayed. Partial synchrony also assumes that processes have
clocks that can drift unboundedly from the real time before $\GST$, but do not
drift thereafter. Both $\GST$ and $\delta$ are unknown to our protocols.

To state our results we need an ability to restrict which processes and channels
can fail. We do this by generalizing the classical notion of a {\em fail-prone
  system}~\cite{bqs}, which we formulate in a generic fashion irrespective of a
specific channel failure type (flaky or disconnected). We specify the failure
type when stating our lower or upper bounds. A {\em failure pattern} is a pair
$(P, C) \in 2^\PP \times 2^\CC$ that defines which processes and channels
fail in a single execution. We assume that $C$ only contains channels
between pairs of correct processes, since those incident to faulty processes
fail automatically: $(p,q) \in C \implies \{p,q\} \cap P = \emptyset$.  For a
failure pattern $f=(P, C)$, an execution $\sigma$ of the system is
$f$-\emph{compliant} if exactly the processes in $P$ and the channels in $C$
fail in $\sigma$. A {\em fail-prone system} $\FS$ is a set of failure patterns.
For example, a standard fail-prone system where any minority of processes can
fail and channels between correct processes cannot fail corresponds to
$\KFS=\{(Q,\emptyset) \mid Q \subseteq \mathcal{P} \wedge |Q| \le \lfloor
\frac{n-1}{2} \rfloor\}$.

\section{Correctness Properties}
\label{sec:correctness}

We consider algorithms $\ALG$ that implement shared memory objects
$\mathcal{O}$, such as a register or consensus, in the models just introduced.
For an arbitrary domain of values $\mathit{Val}$, the \emph{register} interface
consists of operations $\mathit{write}(v)$, $v\in \mathit{Val}$ and
$\mathit{read}()$, which return $\mathit{ack}$ and a value in $\mathit{Val}$,
respectively. The interface of a consensus object consists of a single operation
$\mathit{propose}(v)$, $v\in \mathit{Val}$, which returns a value in
$\mathit{Val}$.

\subparagraph{Safety.}
The consensus object is required to satisfy the standard safety properties:
(\emph{Agreement}) all terminating $\mathit{propose}$ invocations must return
the same value; and (\emph{Validity}) $\mathit{propose}$ can only return a value
passed to some $\mathit{propose}$ invocation.

We now specify the safety properties for registers.
An operation $\mathit{op'}$ {\em follows}\/ an operation $\mathit{op}$, denoted
$\mathit{op} \rightarrow \mathit{op'}$, if $\mathit{op'}$ is invoked after
$\mathit{op}$ returns; $\mathit{op'}$ is {\em concurrent}\/ with $\mathit{op}$
if neither $\mathit{op} \rightarrow \mathit{op'}$ nor
$\mathit{op'} \rightarrow \mathit{op}$. The $\mathcal{O}$'s {\em sequential
  specification}\/ specifies the behavior of $\mathcal{O}$ in the executions in
which no operations are concurrent with each other.
The sequential specification of a register requires every read to return the
value written by the latest preceding write operation.  
An execution $\sigma$ of $\ALG$ is \emph{linearizable}~\cite{linear} if there
exists a set of responses $X$ and a sequence
$\pi = \mathit{op}_1,\mathit{op}_2,\dots$ of all complete 
operations in $\sigma$ and some subset of incomplete operations paired with responses 
in $X$ such that $\pi$ complies with $\mathcal{O}$'s sequential specification and satisfies
$\mathit{op}_i \rightarrow \mathit{op}_j \implies i < j$.  An execution $\sigma$
of a register satisfies \emph{safeness} if the subsequence of $\sigma$
consisting of all write operations and all read operations not concurrent with
any writes is linearizable. An implementation $\ALG$ of a register is
\emph{safe}~\cite{lamport1985on} if each one of its executions satisfies
safeness; $\ALG$ is \emph{atomic} if each one of its executions is
linearizable.

\subparagraph{Liveness.} Due to faulty channels, some correct processes may not
be able to terminate, e.g., if they are partitioned off from the rest of the
system. Therefore, to state liveness we parameterize the classical notions of
obstruction-freedom and wait-freedom by the failures allowed and the subsets of
correct processes where termination is required. We use the weaker
obstruction-freedom in our lower bounds and the stronger wait-freedom in our
upper bounds.  

For a failure pattern $f=(P,C)$, we say that $\ALG$ is
$(f,T)$-\emph{wait-free} if $T \subseteq \PP \setminus P$ and for every process
$p \in T$, operation $\mathit{op}$ and $f$-compliant fair execution $\sigma$ of
$\ALG$, if $\mathit{op}$ is invoked by $p$ in $\sigma$, then $\mathit{op}$
eventually returns. An operation $\mathit{op}$ 
\emph{eventually executes solo} in an execution $\sigma$ if either \emph{(i)} $op$ 
returns in $\sigma$, or \emph{(ii)} there exists a suffix $\sigma'$ of $\sigma$
such that for all operations $op'$, if $op'$ is concurrent with $op$ in $\sigma$,
then the process that invoked $op'$ is crashed in $\sigma'$.
We say that $\ALG$ is $(f,T)$-\emph{obstruction-free} if
$T \subseteq \PP \setminus P$ and
for every process $p \in T$, operation $\mathit{op}$ and $f$-compliant fair
execution $\sigma$ of $\ALG$, if $\mathit{op}$ is invoked by $p$ and 
eventually executes solo 
in $\sigma$, then $\mathit{op}$ eventually returns. 
Note that our notion of obstruction-freedom mirrors its well-known
shared memory counterparts, such as solo termination~\cite{solo-orig}
and the formalization of obstruction-freedom~\cite{obf-orig}
given in~\cite{obf-formal}.

We next lift these notions to a fail-prone system $\FS$ and a {\em termination
  mapping} $\tau: \FS \rightarrow 2^\PP$ -- a function mapping each failure
pattern to the set of correct processes whose operations are required to
terminate (thus,
$\forall f=(P, C) \in \FS.\ \tau(f) \subseteq \PP \setminus P$). Namely, we say
that $\ALG$ is $(\FS, \tau)$-\emph{obstruction-free} (respectively,
\emph{wait-free}) if for every $f \in \FS$, $\ALG$ is
$(f,\tau(f))$-obstruction-free (respectively, wait-free).  For example, the
standard guarantee of wait-freedom under a minority of processes failures
corresponds to $(\KFS, \tau_M)$-wait-freedom, where $\KFS$ is defined in
\S\ref{sec:model} and $\tau_M$ is such that
$\forall f=(P,\emptyset)\in \KFS.\, \tau_M(f) = \mathcal{P} \setminus P$.

\section{Inherent Connectivity Requirements for Registers and Consensus}
\label{section:lower_bound}

We now investigate which connectivity requirements are necessary to implement
first registers and then consensus in the models of Figure~\ref{fig:models}. We
start with a few simple results that serve as a stepping stone for our later lower
bounds. These results are also of independent interest because they generalize the
celebrated CAP theorem~\cite{brewer}: it is impossible to guarantee all of
Consistency, Availability and network Partition-tolerance.

\subsection{CAP Revisited}
\label{sec:cap-revisited}

The CAP formalization by Gilbert and Lynch~\cite{cap} considers an asynchronous
system without process failures. It models consistency as implementing an atomic
register, availability as providing wait-freedom at all processes, and
partition-tolerance as tolerating channels that can lose any messages (in our
terminology, flaky).

\begin{theorem}[CAP]
  \label{thm:cap}
  It is impossible to implement a wait-free atomic register in a message-passing system where
  processes do not fail, but all channels are flaky.
\end{theorem}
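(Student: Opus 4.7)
The plan is to adapt the classical partitioning argument of Gilbert and Lynch~\cite{cap} into this paper's framework. Suppose for contradiction that some algorithm $\ALG$ implements a wait-free atomic register in the asynchronous model with no process failures but all channels flaky. The register being nontrivial, we may assume $|\PP|\ge 2$; pick distinct $p,q\in\PP$ and distinct values $v_0,v_1\in\mathit{Val}$, where $v_0$ is the register's initial value.

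First I would fix a single failure pattern $f=(\emptyset, C^\star)$, where $C^\star$ contains every channel (in both directions) between $\{p\}$ and $\PP\setminus\{p\}$. Since flaky channels may drop an arbitrary subset of messages, they may in particular drop every message, so $f$ is admissible for the fail-prone system of the theorem. By the hypothesized wait-freedom, every operation invoked in any $f$-compliant fair execution returns at every process in $\PP$.

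Next I would build three $f$-compliant fair executions and derive a contradiction from indistinguishability. In $\alpha_1$, process $p$ runs $\mathsf{write}(v_1)$ in isolation and, by wait-freedom, the operation completes. In $\alpha_2$, process $q$ runs $\mathsf{read}()$ in isolation with no prior operation; by wait-freedom it returns, and by the sequential specification of a register it must return $v_0$. In $\alpha_3$, I first execute $\alpha_1$ to completion and then let $q$ invoke $\mathsf{read}()$. Because every channel between $\{p\}$ and $\PP\setminus\{p\}$ drops all messages under $f$, no message generated by $p$ ever reaches any other process; hence the local history of $q$ in $\alpha_3$ is indistinguishable from its local history in $\alpha_2$, and its read must also return $v_0$. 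But in $\alpha_3$ the invocation of $\mathsf{write}(v_1)$ finishes strictly before $\mathsf{read}()$ starts in real time, so any linearization must place the write before the read, forcing the read to return $v_1$ -- contradicting atomicity.

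The only nontrivial step is the indistinguishability argument when $|\PP|>2$: I must rule out that some third process relays $p$'s state to $q$. This is handled by putting \emph{every} channel out of $p$ into $C^\star$, so the local state of each process in $\PP\setminus\{p\}$ evolves identically in $\alpha_2$ and $\alpha_3$ (it receives the same messages from the same sources and makes the same nondeterministic choices by construction of $\alpha_3$), and in particular the messages arriving at $q$ are the same in both executions. Everything else -- termination of each operation under $f$, and compliance of the constructed schedules with asynchrony -- is routine.
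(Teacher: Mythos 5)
Your proof is correct, but it takes a different route from the paper. The paper does not prove Theorem~1 (CAP) directly: it first establishes the much more general Theorem~2 --- that for \emph{any} failure pattern $f$ and any set $T$ where obstruction-freedom of a mere \emph{safe} register holds, $T$ must be strongly connected in $\GG\setminus f$ --- via an indistinguishability argument involving strongly connected components $S_u, S_v$ and their reach sets $R_u, R_v$, and then obtains CAP as a two-line corollary (a wait-free atomic register is in particular an obstruction-free safe register; disconnected channels are a special case of flaky ones; with all channels failed the graph is not strongly connected). You instead give a direct, self-contained adaptation of the Gilbert--Lynch partitioning argument: isolate a single process $p$ by having the adversary drop every message on the flaky channels incident to $p$, and derive the contradiction from a completed $\mathit{write}(v_1)$ at $p$ followed by a $\mathit{read}()$ at $q$ that cannot distinguish this run from one with no write. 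Your version is more elementary and avoids the machinery of reach sets and multi-way projections, which is all that is needed for this particular statement; the paper's detour buys the strictly stronger quantitative connectivity lower bounds (Theorems~2 and~5) that CAP alone does not give. Two small points worth tightening: since the theorem posits that \emph{all} channels are flaky, you should take the failure pattern to be $(\emptyset,\CC)$ and simply have the adversary drop nothing on the channels among $\PP\setminus\{p\}$ (a flaky channel may drop the empty subset), rather than placing only $p$'s channels in $C^\star$; and ``execute $\alpha_1$ to completion'' should be phrased as taking the finite prefix of the fair execution $\alpha_1$ ending with the write's response, as the paper does in its own constructions.
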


Despite its fame, the CAP theorem yields only a weak lower bound. First, it is
not tight: the theorem only says that, to implement a wait-free atomic register,
{\em some} channels must be correct, but obviously, a single correct channel
would not be sufficient in general. Second, CAP requires availability at {\em
  all} processes and thus is not applicable if we are trying to ensure
availability at least in {\em a part} of the system, as formalized by the
notions of liveness in \S\ref{sec:correctness}. Our first result lifts these
limitations and also strengthens CAP in other ways: it considers a weaker
object, obstruction-free safe register, and a stronger system model $\MRL$,
where channels can only fail by disconnection. Informally, we show that, in this
setting, all processes where obstruction-freedom holds must be transitively
connected via correct channels: no such process can be partitioned off from the
rest.

Formally, let $\GG=(\PP,\CC)$ be the directed graph constructed by taking
processes as vertices and channels as edges. For a failure pattern $f = (P, C)$,
let $\RG{\GG}{f}$ denote the subgraph of $\GG$ obtained by removing all
processes in $P$ along with their incident channels, as well as all channels in
$C$.
\begin{theorem}
\label{thm:1}
Let $f$ be a failure pattern and $T \subseteq \PP$. If some algorithm $\ALG$
implements an $(f,T)$-obstruction-free safe register over $\MRL$ (asynchronous /
reliable / disconnected), then $T$ is strongly connected in $\RG{\GG}{f}$.
\end{theorem}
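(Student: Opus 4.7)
The plan is to prove this by contradiction using a partitioning and indistinguishability argument. Assume for contradiction there exist $p, q \in T$ with no directed path from $p$ to $q$ in $\RG{\GG}{f}$. Let $B$ be the set of correct processes from which $q$ is reachable in $\RG{\GG}{f}$; by assumption $q \in B$ and $p \notin B$. Let $A = (\PP \setminus P) \setminus B$, so that $p \in A$. The crucial structural observation is that every channel $(r,s)$ with $r \in A$ and $s \in B$ must belong to $C$: otherwise $r$ would reach $q$ via $s$, contradicting $r \notin B$. Hence, in every $f$-compliant execution, no message sent from a process in $A$ can ever be delivered to a process in $B$.

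Fix two distinct values $v_0 \neq v_1$, where $v_0$ is the register's initial value, and construct three $f$-compliant fair executions. In $\alpha_1$, processes in $P$ crash immediately, $p$ invokes $\mathit{write}(v_1)$, and no other process invokes any operation; messages within $A$ are delivered normally while all $A \to B$ messages are dropped. Since no operation is concurrent with the write, $(f,T)$-obstruction-freedom forces the write to terminate. In $\alpha_2$, processes in $P$ crash immediately and only $q$ invokes $\mathit{read}()$; by obstruction-freedom the read terminates, and the register's sequential specification forces it to return $v_0$. Finally, $\alpha$ sequentially composes the two: $p$'s write proceeds exactly as in $\alpha_1$ and completes, and then $q$'s read proceeds exactly as in $\alpha_2$, with any messages $B$ sends into $A$ during the read delayed until after the read returns. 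This scheduling is admissible because reliable channels permit arbitrary finite delays and all $A \to B$ messages are dropped outright.

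The heart of the argument is the indistinguishability claim that $q$'s local history in $\alpha$ coincides with its local history in $\alpha_2$. During $p$'s write phase in $\alpha$, every process in $B$ receives no messages (since $A \to B$ channels are disconnected and $B$ itself is idle), so $B$ arrives at the moment $q$ invokes its read in the same joint state as at the analogous point in $\alpha_2$; during the read itself, no messages from $A$ can reach $q$ either, so the read terminates and returns $v_0$, just as in $\alpha_2$. However, in $\alpha$ the write $\mathit{write}(v_1)$ causally precedes the read, so the two are not concurrent, and safeness forces the read to return $v_1 \neq v_0$—the desired contradiction. The main subtlety I anticipate is confirming fairness and $f$-compliance of $\alpha$: after $q$'s read returns, $\alpha$ must be extended so that every pending message on a reliable channel is eventually delivered, but since both obstruction-freedom and safeness reason only about returned values and the local histories leading up to them, this extension does not disturb the contradiction already produced.
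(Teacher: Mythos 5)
Your overall strategy (partition the correct processes by reachability of $q$, then splice a write execution and a read execution into a single execution violating safeness) is the same as the paper's, but there is a genuine gap at the splicing step. The claim that during the write phase of $\alpha$ ``$B$ itself is idle'' is not something you are entitled to assume. Obstruction-freedom guarantees that $p$'s $\mathit{write}(v_1)$ terminates only in \emph{fair} executions, in which every process of $B$ takes infinitely many steps; and since channels from $B$ into $A$ may be correct (only the $A\to B$ direction is forced to be faulty), the termination of the write may genuinely depend on messages that processes in $B$ send into $A$. So you cannot, in general, obtain a finite prefix of $\alpha_1$ in which the write has returned while $B$ has taken no steps. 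Conversely, if you let $B$ run during the write phase (as it does in $\alpha_1$), then at the start of the read phase the processes of $B$ are no longer in the state they have at the corresponding point of your independently constructed $\alpha_2$, and the indistinguishability claim for $q$ collapses: a process of $B$ cannot take its startup steps once in the write phase and then take them again when you replay $\alpha_2$. The underlying issue is that the causal cones of the two operations can overlap --- a process may be able to reach both $p$ and $q$ --- and your two executions are built independently, so nothing forces them to agree on what those processes do.

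The paper closes exactly this gap by making one of the two executions an extension of the other's projection onto the shared part: it fixes the first execution, projects it onto a reachability-closed set, and then \emph{defines} the second execution to begin with that projection, so that the spliced execution assigns every process a single consistent local history. Your argument can be repaired the same way: take the write-completing prefix $\pi$ of $\alpha_1$, observe that $\pi|_B$ is by itself a legal behaviour of the processes in $B$ (they receive nothing from $A$, and $\pi|_B$ contains no operation invocation), and construct $\alpha_2$ as a fair execution that \emph{begins with} $\pi|_B$ before $q$ invokes its read; the read still returns $v_0$ because $\alpha_2$ contains no write, and the splice consisting of $\pi$ followed by the remainder of $\alpha_2|_B$ is then a consistent execution in which $\mathit{write}(v_1)$ precedes a read returning $v_0$. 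Without this step, the $\alpha$ you describe is not a valid execution of $\ALG$, so no contradiction is reached.
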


Before proving the theorem, we note some of its consequences.
The theorem yields the following corollary in the special case when the failure
pattern $f$ disallows process failures and obstruction-freedom is required at
all processes.%
\begin{corollary}
  \label{thm:our-cap}
  Let $f$ be a failure pattern that disallows process failures:
  $\exists C \subseteq \CC.\, f = (\emptyset, C)$. If some algorithm $\ALG$
  implements an $(f,\PP)$-obstruction-free safe register over $\MRL$
  (asynchronous / reliable / disconnected), then the graph $\GG \setminus f$ is
  strongly connected.
\end{corollary}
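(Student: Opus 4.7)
The plan is to derive this corollary as a direct instantiation of Theorem~\ref{thm:1}, since the hypotheses of the corollary match those of the theorem with the termination set $T$ taken to be the full process set $\PP$. Under the assumption $f = (\emptyset, C)$, the set $\PP \setminus P$ equals $\PP$, so the requirement that $T \subseteq \PP \setminus P$ for $(f,T)$-obstruction-freedom is vacuously satisfied when we choose $T = \PP$. Hence the premise ``$\ALG$ implements an $(f,\PP)$-obstruction-free safe register over $\MRL$'' is a legitimate instantiation of the premise of Theorem~\ref{thm:1}.

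Applying Theorem~\ref{thm:1} with this choice of $f$ and $T$ yields that $\PP$ is strongly connected in $\RG{\GG}{f}$. Next I would unfold the definition of $\RG{\GG}{f}$ in this special case: since $P = \emptyset$, no vertices are removed from $\GG$, and only the channels in $C$ are removed. Thus $\RG{\GG}{f}$ is precisely $\GG \setminus f$ (as this notation is used in the corollary's statement). So the conclusion that $\PP$ is strongly connected in $\RG{\GG}{f}$ is exactly the statement that $\GG \setminus f$ is strongly connected, which is what we need to prove.

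There is no genuine obstacle here: all the technical work lies in Theorem~\ref{thm:1}, and the corollary is obtained by specialising to the ``no process failures, termination required everywhere'' regime that matches the classical CAP setting. The only thing to be careful about is the notational alignment between $\RG{\GG}{f}$ (used in Theorem~\ref{thm:1}) and $\GG \setminus f$ (used in the corollary), which I would make explicit in one short sentence so that the two uses of set-minus notation on graphs are seen to coincide when $P = \emptyset$.
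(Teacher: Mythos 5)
Your proposal is correct and matches the paper exactly: the paper derives this corollary as the immediate specialization of Theorem~\ref{thm:1} to $T = \PP$ and $P = \emptyset$, with the same notational identification of $\RG{\GG}{f}$ and $\GG \setminus f$ when no processes fail. Nothing further is needed.
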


This implies CAP (Theorem~\ref{thm:cap}): if an algorithm $\ALG$ implements an
atomic wait-free register, then it also implements an obstruction-free safe
register, and by the above corollary, all processes must be strongly
connected by correct channels.
Corollary~\ref{thm:our-cap} also yields a tight lower bound. To see this,
consider its lifting to an arbitrary fail-prone system $\FS$ and the termination
mapping $\tau_\PP = \lambda f.\, \PP$:
\begin{corollary}
  \label{thm:our-cap2}
  Let $\FS$ be a fail-prone system that disallows process failures:
  $\forall f \in \FS.\ \exists C \subseteq \CC.\, f = (\emptyset, C)$. If some algorithm
  $\ALG$ implements an $(\FS, \tau_\PP)$-obstruction-free safe register over
  $\MRL$ (asynchronous / reliable / disconnected), then for all $f \in \FS$ the
  graph $\GG \setminus f$ is strongly connected.
\end{corollary}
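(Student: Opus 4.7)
The plan is to reduce this corollary to a pointwise application of Corollary~\ref{thm:our-cap}. Since $(\FS, \tau_\PP)$-obstruction-freedom is by definition just $(f, \tau_\PP(f))$-obstruction-freedom for every $f \in \FS$, and since the hypothesis on $\FS$ forces each such $f$ to have the form $(\emptyset, C)$, the hypotheses of Corollary~\ref{thm:our-cap} are immediately available for each individual $f \in \FS$, with $T := \PP$.

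Concretely, I would fix an arbitrary $f = (\emptyset, C) \in \FS$ and argue as follows. First, observe that the termination mapping $\tau_\PP = \lambda f.\, \PP$ is well-formed under this hypothesis: the side condition $\tau(f) \subseteq \PP \setminus P$ required at the end of Section~\ref{sec:correctness} degenerates to $\PP \subseteq \PP$ because $P = \emptyset$. Next, unfold the definition of $(\FS, \tau_\PP)$-obstruction-freedom to obtain that $\ALG$ is $(f, \PP)$-obstruction-free. Together with the safe-register assumption, and the fact that the underlying model is $\MRL$, this matches exactly the premise of Corollary~\ref{thm:our-cap}, whose conclusion yields that $\GG \setminus f$ is strongly connected. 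Since $f$ was arbitrary in $\FS$, the claim follows.

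There is no genuine obstacle in this argument: all the substantive content lives in Theorem~\ref{thm:1} and its specialization Corollary~\ref{thm:our-cap}, and what remains here is a purely definitional lifting from a single failure pattern to a fail-prone system. The only subtlety worth flagging is the verification that $\tau_\PP$ satisfies the well-formedness constraint on termination mappings, which is precisely why the hypothesis restricts $\FS$ to patterns with no crashed processes. The reason to state the result separately is expository: it repackages the connectivity requirement in the fail-prone-system vocabulary used throughout the rest of the paper, and it makes explicit that our generalization of CAP is already tight in the sense that the strong-connectivity conclusion must hold \emph{uniformly} across every admissible failure pattern, not merely for a single worst-case scenario as in~\cite{cap}.
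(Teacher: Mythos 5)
Your proof is correct and follows exactly the route the paper intends: Corollary~\ref{thm:our-cap2} is stated there as an immediate pointwise lifting of Corollary~\ref{thm:our-cap} (itself the $P=\emptyset$, $T=\PP$ case of Theorem~\ref{thm:1}), obtained by unfolding the definition of $(\FS,\tau_\PP)$-obstruction-freedom for each $f\in\FS$. The paper gives no separate proof precisely because the argument is the definitional reduction you describe.
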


Then the following proposition implies a matching upper bound: it shows that a
wait-free atomic register can be implemented in the more adversarial model
$\MRU$ (asynchronous / eventually reliable / flaky).  The proposition follows
from a more general result we present later (Theorem~\ref{thm:upper-reg}).
\begin{proposition}
  Let $\FS$ be a fail-prone system that disallows process failures and such that
  for all $f \in \FS$, the graph $\GG \setminus f$ is strongly connected. Then
  there exists an algorithm $\ALG$ implementing an $(\FS, \tau_\PP)$-wait-free
  atomic register over $\MRU$ (asynchronous / eventually reliable / flaky).
\end{proposition}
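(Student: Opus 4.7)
The plan is a two-layer construction. The lower layer builds an all-to-all end-to-end reliable communication primitive on top of the underlying channels of $\MRU$, and the upper layer runs a standard ABD-style register protocol over this primitive. Because $\FS$ disallows process crashes and termination is required at every process, I can simply wait for responses from all of $\PP$ in the upper layer, avoiding any quorum intersection bookkeeping.

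For the communication layer I would use epidemic-style flooding. Each process keeps a buffer of outstanding messages, each carrying a unique identifier and an intended destination, and periodically resends the whole buffer on every outgoing channel. Any process receiving a previously unseen message adds it to its own buffer (thereby joining in the forwarding) and, if the message is addressed to it, delivers it once and schedules an acknowledgment, which is transmitted through the same flooding mechanism. To see that every message is eventually delivered, fix the actual failure pattern $f = (\emptyset, C) \in \FS$. By the definition of $\MRU$, every channel outside $C$ is eventually reliable, so there is a finite time after which each of them delivers all future transmissions. By hypothesis $\GG \setminus f$ is strongly connected, so every pair of processes is joined by a directed path using only those channels. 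A straightforward induction along such a path, using the fact that each intermediate process reinjects the message into its buffer and retransmits it forever, shows that the message is eventually delivered to its destination, and the acknowledgment returns symmetrically.

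On top of this primitive I would run textbook ABD: a write picks a strictly larger $(\mathit{timestamp}, \mathit{process\text{-}id})$ tag via a query phase and propagates the new value in a second phase; a read queries, selects the highest tag seen, and writes that value back before returning. Both phases wait for responses from all $n$ processes, which is safe because no process crashes and which terminates because the communication layer delivers every message and every acknowledgment. Atomicity then follows from the standard ABD argument using tag monotonicity and the read write-back. The main subtlety I expect is in the reliable layer rather than in the register construction: one must argue that flooding succeeds even though a flaky channel in $C$ may drop an adversarially chosen infinite subset of the messages crossing it. The resolution is that the algorithm never stops retransmitting and never relies on any particular channel; strong connectivity of the correct subgraph guarantees that some path of eventually reliable channels carries each message through, which is all that wait-freedom requires.
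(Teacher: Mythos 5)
Your proof is correct, but it reaches the proposition by a different route than the paper. The paper does not prove this statement directly: it derives it as a special case of Theorem~\ref{thm:upper-reg}, whose proof (Appendix~\ref{sec:app-abd}) is an ABD variant in which the gossip is folded into the register protocol's own state --- each process keeps, per message type and per originator, only the entry with the highest sequence number, periodically floods these arrays, and merges on receipt --- and which waits for majority quorums so that it also tolerates crashes of a minority when only a connected core $\CQ_f$ survives. Your construction instead layers the argument: first a generic end-to-end reliable unicast built by epidemic flooding with per-message identifiers and unbounded buffers over the strongly connected subgraph of eventually reliable channels, then textbook ABD on top, waiting for all $n$ responses since $\FS$ disallows crashes. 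Both rest on the same essential insight (periodic retransmission relays information along a path of eventually reliable channels, so indirect connectivity suffices), and your induction along such a path is exactly the inductive step the paper uses in its liveness lemma. What your modularity buys is a cleaner, more elementary argument for this special case, since atomicity reduces to the classical ABD proof over reliable channels; what it gives up is generality and space: it does not extend to the crash-tolerant Theorem~\ref{thm:upper-reg} (waiting for all $n$ fails there, and one must argue termination only at the connected core), and the unbounded message buffers lose the bounded-space property that the paper explicitly highlights as the main technical interest of its implementation. Neither loss affects the validity of your proof of the proposition as stated, which only asserts existence of some algorithm.
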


\begin{proof}[Proof of Theorem~\ref{thm:1}]
Assume by contradiction that $\ALG$ is an $(f,T)$-obstruction-free implementation
of a safe register, but $T$ is not strongly connected in $\RG{\GG}{f}$.  Then
for some $u, v \in T$, either there is no path from $u$ to $v$ or from $v$ to
$u$ in $\RG{\GG}{f}$.  Without loss of generality we assume the former.  Let
$S_u$ be the strongly connected component
of $\RG{\GG}{f}$ containing $u$, and $S_v$ the strongly connected component of
$\RG{\GG}{f}$ containing $v$; then $S_u \cap S_v= \emptyset$.  Let $R_u$ be the
set of processes outside $S_u$ that can reach $u$ in $\RG{\GG}{f}$, and $R_v$
the set of processes outside $S_v$ that can reach $v$ (see
Figure~\ref{fig:diagram_lower}).

\begin{figure}[t]
  \centering
  \includegraphics[width=0.35\linewidth,trim=550 225 550 150,clip]{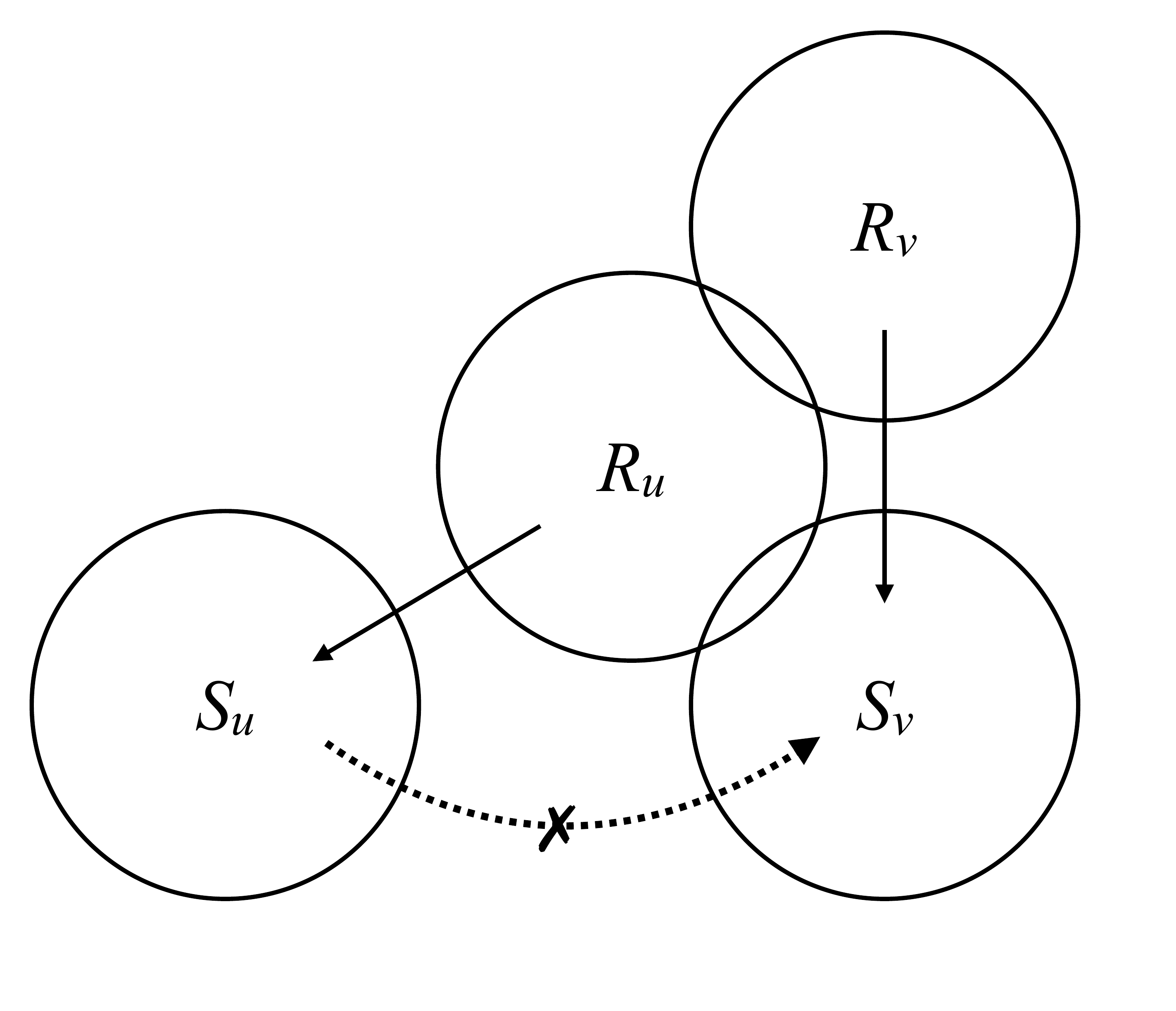}
  \caption{Illustration of the sets used in the proof of Theorem~\ref{thm:1}.}
  \label{fig:diagram_lower}
  \vspace{-3px}
\end{figure}

\setcounter{myclaim}{0}
\begin{myclaim}
\label{claim:8s3e}
For any $k \in \{u, v\}$, $R_k \cup S_k$ is unreachable from
$\PP \setminus (R_k \cup S_k)$ in $\RG{\GG}{f}$.
\end{myclaim}

\begin{myclaim}
\label{claim:1gc9}
For any $k \in \{u,v\}$, $R_k$ is unreachable from $S_k$ in $\RG{\GG}{f}$.
\end{myclaim}

The above claims easily follow from the definitions of $R_k$ and $S_k$.

\begin{myclaim}
$S_u \cap (R_v \cup S_v) = \emptyset$.
\end{myclaim}
\begin{claimproof}
  Assume by contradiction that $S_u \cap (R_v \cup S_v) \neq \emptyset$ and let
  $w \in S_u \cap (R_v \cup S_v)$.  Because $w \in S_u$, there exists a path
  from $u$ to $w$.  Because $w \in R_v \cup S_v$, there exists a path from $w$
  to $v$.  By concatenating these two paths we get a path from $u$ to $v$ in
  $\RG{\GG}{f}$, contradicting our assumption that there is no such path.
\end{claimproof}

Let $\alpha_1$ be a fair execution of $\ALG$ where processes and channels in $f$
fail at the beginning, the process $u$ invokes a \RD operation, and no other
operation is invoked in $\alpha_1$. Because $u \in T$ and $\ALG$ is
$(f,T)$-obstruction-free, the \RD 
operation must eventually terminate. Since there are no {\it write} invocations,
the \RD must return $0$ -- the initial value of the register. Let $\alpha_2$ be
the prefix of $\alpha_1$ ending with this response. By Claim~\ref{claim:8s3e},
$R_u \cup S_u$ is unreachable from $\PP \setminus (R_u \cup S_u)$. Thus, the
actions by processes in $R_u \cup S_u$ do not depend on those by processes in
$\PP \setminus (R_u \cup S_u)$. Then the projection of $\alpha_2$ to actions by
processes in $R_u \cup S_u$ is an execution of $\ALG$: $\alpha_3 = \alpha_2|_{R_u \cup
  S_u}$. By Claim~\ref{claim:1gc9}, $R_u$ is unreachable from $S_u$. Thus, the
actions by processes in $R_u$ do not depend on those by processes in $S_u$. Then
$\alpha = \alpha_3|_{R_u}\alpha_3|_{S_u}$ is also an execution of $\ALG$. Notice
that $\alpha_3|_{R_u}$ does not contain any operation invocation, but it
contains steps taken by processes upon startup.

Let $\beta_1$ be a fair execution of $\ALG$ where processes and channels in $f$
fail at the beginning. The execution $\beta_1$ starts with all the actions from
$\alpha_3|_{R_u}$ followed by a ${\it write}(1)$ invocation by process $v$, and
no other operation is invoked in $\beta_1$. Because $v \in T$ and $\ALG$ is
$(f,T)$-obstruction-free, the {\it write} operation must eventually terminate.
Let $\beta_2$ be the prefix of $\beta_1$ ending with this response. By
Claim~\ref{claim:8s3e}, $R_u \cup S_u$ is unreachable from 
$\PP \setminus (R_u \cup S_u)$. By Claim~\ref{claim:1gc9}, $R_u$ is unreachable
from $S_u$. Therefore, $R_u$ is unreachable from $\PP \setminus R_u$.  By
Claim~\ref{claim:8s3e}, $R_v \cup S_v$ is unreachable from
$\PP \setminus (R_v \cup S_v)$. Therefore, $R_u \cup R_v \cup S_v$ is
unreachable from $\PP \setminus (R_u \cup R_v \cup S_v)$.  Thus, the actions by
processes in $R_u \cup R_v \cup S_v$ do not depend on those by processes in
$\PP \setminus (R_u \cup R_v \cup S_v)$.  Then
$\beta = \beta_2|_{R_u \cup R_v \cup S_v}$ is an execution of
$\ALG$. Recall that $\beta_1$ starts with $\alpha_3|_{R_u}$, and hence, so does
$\beta$. Let $\delta$ be the suffix of $\beta$ such that
$\beta = \alpha_3|_{R_u}\delta$.

Consider the execution $\sigma = \alpha_3|_{R_u}\delta\alpha_3|_{S_u}$ where
the actions occur before processes and channels in $f$ fail. 
By Claim 3, $S_u \cap (R_v \cup S_v) = \emptyset$, and by the
definition of $R_u$, we have $S_u \cap R_u = \emptyset$.  Hence,
$S_u \cap (R_u \cup R_v \cup S_v) = \emptyset$.  Then, given that $\delta$
only contains actions by processes in $R_u \cup R_v \cup S_v$, we get:
$\sigma|_{R_u \cup R_v \cup S_v} =(\alpha_3|_{R_u}\delta\alpha_3|_{S_u})|_{R_u
  \cup R_v \cup S_v} = \alpha_3|_{R_u}\delta = \beta$ and
$\sigma|_{S_u} = (\alpha_3|_{R_u}\delta\alpha_3|_{S_u})|_{S_u} = \alpha_3|_{S_u}
= (\alpha_3|_{R_u}\alpha_3|_{S_u})|_{S_u}= \alpha|_{S_u}$.  Therefore, $\sigma$
is indistinguishable from $\beta$ to the processes in $R_u \cup R_v \cup S_v$
and from $\alpha$ to the processes in $S_u$. Finally,
$\sigma|_{\PP\setminus(R_u \cup S_u\cup R_v \cup S_v)} = \varepsilon$.

Thus, for every process, $\sigma$ is indistinguishable to this process from some
execution of $\ALG$. Furthermore, each message received by a process in $\sigma$
has previously been sent by another process. Therefore, $\sigma$ is an execution
of $\ALG$. However, in this execution
${\it write}(1)$ terminates before a \RD that fetches $0$ is invoked.  This contradicts the
assumption that $\ALG$ implements a safe register.  The contradiction shows that
$T$ must be strongly connected in $\RG{\GG}{f}$.
\end{proof}

\subsection{Connectivity Requirements under Bounded Process Failures}
\label{sec:cardinality}

Consider a straightforward lifting of Theorem~\ref{thm:1} to fail-prone systems:
\begin{corollary}
  \label{thm:1-lifted}
  Let $\FS$ be a fail-prone system and $\tau: \FS \rightarrow 2^\PP$ a
  termination mapping. If some algorithm $\ALG$ implements an
  $(\FS,\tau)$-obstruction-free safe register over $\MRL$ (asynchronous /
  reliable / disconnected), then for all $f \in \FS$, $\tau(f)$ is strongly
  connected in $\RG{\GG}{f}$.
\end{corollary}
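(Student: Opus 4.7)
The plan is to obtain Corollary~\ref{thm:1-lifted} as an immediate ``per-pattern'' consequence of Theorem~\ref{thm:1}. Recall the definition of $(\FS,\tau)$-obstruction-freedom from \S\ref{sec:correctness}: it asserts precisely that for every $f \in \FS$, the algorithm $\ALG$ is $(f,\tau(f))$-obstruction-free. So the $\FS$-level hypothesis of the corollary is literally the universal closure over $\FS$ of the single-pattern hypothesis of Theorem~\ref{thm:1} with the choice $T = \tau(f)$.

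First I would fix an arbitrary $f = (P,C) \in \FS$ and set $T = \tau(f)$. The termination mapping satisfies $\tau(f) \subseteq \PP \setminus P$ by its defining property, so $T$ meets the side condition required to invoke Theorem~\ref{thm:1}. From the unfolded hypothesis, $\ALG$ implements an $(f,T)$-obstruction-free safe register over $\MRL$. Applying Theorem~\ref{thm:1} to this $f$ and $T$ then yields that $T = \tau(f)$ is strongly connected in $\RG{\GG}{f}$. Since $f$ was arbitrary in $\FS$, this gives the desired conclusion for all $f \in \FS$.

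The hard work has already been done inside Theorem~\ref{thm:1} -- in particular, the indistinguishability construction that splices $\alpha_3|_{R_u}$, $\delta$, and $\alpha_3|_{S_u}$ into a single execution $\sigma$ violating safeness. The corollary itself carries no new mathematical content beyond unfolding the definitions of $(\FS,\tau)$-obstruction-freedom and the termination mapping, so there is no real obstacle. The only minor bookkeeping I would double-check is that the model $\MRL$ is insensitive to the particular failure pattern (it specifies globally what reliable and disconnected channels do), so the instance of Theorem~\ref{thm:1} invoked for each $f$ inherits the same underlying model assumed at the $\FS$ level.
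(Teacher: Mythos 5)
Your proposal is correct and matches the paper's intent exactly: the paper presents Corollary~\ref{thm:1-lifted} as a ``straightforward lifting'' of Theorem~\ref{thm:1} with no written proof, precisely because the argument is the per-pattern instantiation you describe (unfold $(\FS,\tau)$-obstruction-freedom, apply Theorem~\ref{thm:1} with $T=\tau(f)$ for each $f\in\FS$). Your bookkeeping checks, including $\tau(f)\subseteq\PP\setminus P$, are the right ones and nothing further is needed.
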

This result does not make any assumptions about the fail-prone system, and its
CAP-like specialization given by Corollary~\ref{thm:our-cap2} only considers
fail-prone systems without process failures. However, algorithms are commonly
designed to tolerate a bounded number of process failures, and we next show that
stronger connectivity requirements are necessary in this case. To formalize this
class of algorithms, we use the following notion. A \emph{$k$-fail-prone system}
$\FS$ allows any set of $k$ processes or fewer to fail, but disallows failures
of more than $k$ processes:
\[
(\forall P \subseteq \PP.\ |P| \le k {\implies} \exists C \subseteq \CC.\ (P,C) \in \FS)
\wedge (\forall (P,C) \in \FS.\ |P| \le k).
\]
For example, the system $\KFS$ from \S\ref{sec:model} is
$\lfloor \frac{n-1}{2} \rfloor$-fail-prone. 

The following theorem establishes minimal connectivity constraints required to
implement a safe register under the model
$\MRL$ %
in the presence of $k$ process crashes. It assumes a particularly weak
termination guarantee which only requires obstruction-freedom to hold at some
\emph{non-empty} set of processes for each failure pattern.
The theorem states that, no matter how small this set
is, it must be part of a set of $>k$ correct processes
strongly connected by correct channels.

\begin{theorem}
\label{thm:2}
Let $\FS$ be a $k$-fail-prone system and $\tau: \FS \rightarrow 2^\PP$ a
termination mapping such that $\forall f=(P,C) \in \FS.\,
\tau(f)\neq\emptyset$. Assume that some algorithm $\ALG$ implements an
$(\FS,\tau)$-obstruction-free safe register over $\MRL$ (asynchronous / reliable
/ disconnected). Then for all $f\in\FS$, there exists a strongly connected
component of $\RG{\GG}{f}$ that contains $\tau(f)$ and has a cardinality greater
than $k$.
\end{theorem}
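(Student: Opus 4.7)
The plan is to prove the theorem by contradiction, adapting the indistinguishability argument from the proof of Theorem~\ref{thm:1} to a setting with two distinct failure patterns. Fix $f=(P,C)\in\FS$ and, by Corollary~\ref{thm:1-lifted}, let $S$ be the unique strongly connected component of $\RG{\GG}{f}$ containing $\tau(f)$. Assume for contradiction that $|S|\le k$. Since $\FS$ is $k$-fail-prone and $|S|\le k$, there exists some $C^*\subseteq\CC$ with $f^*=(S,C^*)\in\FS$, and by hypothesis $\tau(f^*)\ne\emptyset$, so I would fix $u\in\tau(f^*)$ (noting $u\notin S$) together with some $v\in\tau(f)\subseteq S$.

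Next, the reach regions used in Theorem~\ref{thm:1}'s construction are now set up relative to two graphs. Let $R_v$ be the set of processes outside $S$ that can reach $S$ in $\RG{\GG}{f}$. Applying Corollary~\ref{thm:1-lifted} to $f^*$ yields an SCC $S^*$ of $\RG{\GG}{f^*}$ containing $u$, and let $R^*$ be the set of processes outside $S^*$ that can reach $u$ in $\RG{\GG}{f^*}$. Exactly as in Claims~1-2 of Theorem~\ref{thm:1}'s proof, $R_v\cup S$ is unreachable from the rest in $\RG{\GG}{f}$ and $R^*\cup S^*$ is unreachable from the rest in $\RG{\GG}{f^*}$. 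Using these separations, I would construct (i) an $f^*$-compliant execution $\alpha$ in which $u$ reads -- terminating by $(f^*,\tau(f^*))$-obstruction-freedom and returning $0$ by safeness, since no write is invoked -- and (ii) an $f$-compliant execution $\beta$ whose $R^*$-processes first take the same startup steps as in $\alpha$, followed by $v$ invoking $\mathit{write}(1)$, which terminates by $(f,\tau(f))$-obstruction-freedom.

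I would then splice these into a single execution $\sigma$ following Theorem~\ref{thm:1}'s template: schedule, in order, the $R^*$-steps extracted from $\alpha$'s projection, the $(R_v\cup S)$-suffix of $\beta$ in which $v$'s write takes place, and the $S^*$-steps extracted from $\alpha$'s projection in which $u$'s read takes place. The failure pattern underlying $\sigma$ is designed to agree with $f$ on channels inside $R_v\cup S$, with $f^*$ on channels inside $R^*\cup S^*$, and to declare every channel crossing between these two regions as disconnected. Because each region thereby receives only messages consistent with its source execution, indistinguishability guarantees that $v$'s write completes as in $\beta$ and $u$'s subsequent read returns $0$ as in $\alpha$. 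Since the write precedes the read sequentially in $\sigma$, safeness is violated, which is the desired contradiction and forces $|S|>k$.

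The hardest part will be that, unlike Theorem~\ref{thm:1}'s single-pattern argument, the two sub-executions here are drawn from \emph{different} failure patterns $f$ and $f^*$, so the failure pattern of $\sigma$ need not belong to $\FS$. This is tolerable because safeness -- as opposed to obstruction-freedom -- is required of \emph{every} execution of $\ALG$ in $\MRL$; but it forces careful bookkeeping to verify that $\sigma$ is a legal $\MRL$-execution, and in particular that $R_v\cup S$ and $R^*\cup S^*$ can be handled disjointly (or that any overlap consists of processes whose steps drawn from $\alpha$ and from $\beta$ can be reconciled). A WLOG case split on whether $u$ reaches $v$ or $v$ reaches $u$ in $\RG{\GG}{f}$, mirroring the one in Theorem~\ref{thm:1}'s proof, is likely needed to close the argument.
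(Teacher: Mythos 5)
Your high-level plan---a contradiction via a spliced execution that combines an $f$-compliant and an $f^*$-compliant sub-execution, with $f^*=(S,C^*)$ supplied by $k$-fail-proneness---is the right one, but you have assigned the read and the write to the wrong processes, and this is not mere bookkeeping: it breaks the argument. In the paper's proof the \emph{read} is performed by $p\in\tau(f)\subseteq S$ inside the small component, and the \emph{write} is performed by $q\in\tau(f^*)$ in the execution where all of $S$ has crashed. That choice is what makes the splice work: the write execution involves only $\PP\setminus S$; the read execution, projected onto $R\cup S$ (with $R$ the set of processes outside $S$ that can reach $S$ in $\RG{\GG}{f}$), can be reordered so that all of $S$'s steps come last; and the two worlds share only the operation-free startup steps of $R$, which can be installed as a common prefix. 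You instead let $u\in\tau(f^*)$ (outside $S$) read and $v\in\tau(f)\subseteq S$ write. Then the read must come last in the splice and must still return $0$, so the reader's world $R^*\cup S^*$ must be insulated from the write; but the write by $v$ is only guaranteed to terminate in a \emph{fair} $f$-compliant execution, in which the processes of $R_v$ (and in general of $S^*$) actively participate and learn the written value. Concretely, take $n=3$, $k=1$, $S=\{v\}$, with channels $u\to v$, $w\to v$, $u\leftrightarrow w$ correct and $v\to u$, $v\to w$ disconnected in $f$, and $C^*=\emptyset$ in $f^*=(\{v\},C^*)$. Then $S^*=R_v=\{u,w\}$: every process the read depends on also takes part in the write (and typically must, e.g.\ to form a quorum), so after the write portion their local states no longer match $\alpha$, the read portion is not a legal continuation, and no indistinguishability argument forces the read to return $0$. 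Your closing remarks about disjointness of $R_v\cup S$ and $R^*\cup S^*$ and a WLOG case split on reachability between $u$ and $v$ identify a symptom but not a cure: the overlap is unavoidable in general and no such case split removes it.

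The fix is to swap the roles back, which also removes the need for two reach regions and for any cross-pattern channel reconciliation. Let the reader be $p\in\tau(f)\subseteq S$ in an $f$-compliant execution $\alpha_1$ (the read returns $0$ since no write is invoked); project onto $R\cup S$ and reorder as $\alpha_3|_R\,\alpha_3|_S$ using the two unreachability claims. Then let the writer be $q\in\tau(f^*)$ in an $f^*$-compliant execution $\beta_1$ that begins with the startup steps $\alpha_3|_R$, and splice as $\sigma=\alpha_3|_R\,\delta\,\alpha_3|_S$. Because $S$ is crashed in $f^*$, the write never consults $S$; because $S$ acts only at the very end of $\sigma$, its read is computed exactly from the messages available in $\alpha$ and returns $0$ after ${\it write}(1)$ has completed, violating safeness. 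Here $\sigma$ can simply be taken to have all failures occur after its last action, so no Frankenstein failure pattern is needed.
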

\begin{proof}
Let $f \in \FS$ and $S$ be the strongly connected component of $\RG{\GG}{f}$ containing $\tau(f)$.
Assume by contradiction that $|S|\leq k$.
Pick an arbitrary process $p \in \tau(f)$; then $p$ is correct according to $f$.
Let $\alpha_1$ be a fair execution of $\ALG$ where
processes and channels in $f$ fail at the beginning, the process $p$ invokes
a \RD operation, and no other operation is invoked in $\alpha_1$.
Because $p \in \tau(f)$ and $\ALG$ is $(\FS, \tau)$-obstruction-free, the \RD
operation must eventually terminate.
Since there are no {\it write} invocations, the \RD must return $0$ -- the initial value
of the register.
Let $\alpha_2$ be the prefix of $\alpha_1$ ending with this response.
Let $R$ be the set of processes outside $S$ that can reach $S$ in $\RG{\GG}{f}$.
Similarly to the proof of Theorem~\ref{thm:1}, the definitions of $R$ and $S$
imply the following claims.

\setcounter{myclaim}{0}
\begin{myclaim}
\label{claim:thm_2_1}
$R \cup S$ is unreachable from $\PP \setminus (R \cup S)$ in $\RG{\GG}{f}$.
\end{myclaim}
\begin{myclaim}
\label{claim:thm_2_2}
$R$ is unreachable from $S$ in $\RG{\GG}{f}$.
\end{myclaim}

Claim~\ref{claim:thm_2_1} implies that the actions by processes in $R \cup S$ 
do not depend on those by processes in $\PP \setminus (R \cup S)$.
Then $\alpha_3 = \alpha_2|_{R \cup S}$ is an execution of $\ALG$.
Claim~\ref{claim:thm_2_2} implies that the actions by processes in $R$ do not 
depend on those by processes in $S$.
Then $\alpha = \alpha_3|_{R}\alpha_3|_{S}$ is also an execution of $\ALG$.

Since $|S|\leq k$ and $\FS$ is a $k$-fail-prone
system, there exists $C' \subseteq \CC$ such that $f'=(S, C') \in \FS$.
Pick an arbitrary process $q \in \tau(f')$; then $q \not\in S$.
Let $\beta_1$ be a fair execution of $\ALG$
where processes and channels in $f'$ fail at the beginning.
The execution $\beta_1$ starts with all the actions from $\alpha_3|_{R}$
followed by a ${\it write}(1)$ invocation by process $q$, and no other operation
is invoked in $\beta_1$.
Because $q \in \tau(f')$ and $\ALG$ is $(\FS, \tau)$-obstruction-free, the {\it write} operation must
eventually terminate.
Let $\beta$ be the prefix of $\beta_1$ ending with this response and
let $\delta$ be the suffix of $\beta$ such that $\beta=\alpha_3|_{R}\delta$.

Consider the execution $\sigma = \alpha_3|_{R}\delta\alpha_3|_{S}$ where
the actions occur before processes and channels in $f$ fail. 
Given that $\delta$ does not contain any actions by processes in $S$, we get:
$\sigma|_{S} = \alpha_3|_{S} = (\alpha_3|_{R}\alpha_3|_{S})|_{S} = \alpha|_{S}$
and $\sigma|_{\PP \setminus S} = \alpha_3|_{R}\delta = \beta$.  Therefore,
$\sigma$ is indistinguishable from $\alpha$ to the processes in $S$ and from
$\beta$ to the processes in $\PP \setminus S$.

Thus, for every process, $\sigma$ is
indistinguishable to this process from some execution of $\ALG$. Furthermore,
each message received by a process in $\sigma$ has previously been sent by
another process. Therefore, $\sigma$ is an execution of $\ALG$. However, in this
execution ${\it write}(1)$ terminates before a \RD that fetches $0$ is invoked.  This
contradicts the assumption that $\ALG$ implements a safe register.  The
contradiction derives from assuming that $|S| \leq k$, so that $|S| > k$.
\end{proof}

Theorem~\ref{thm:2} is most interesting in the common practical case of
$n = 2k+1$, which is the minimal number of processes needed to tolerate $k$
crashes in asynchronous registers~\cite{lynch-dist-algos} and partially
synchronous consensus~\cite{dls}. In this case the theorem ensures that for each
failure pattern $f$, the graph $\RG{\GG}{f}$ has a strongly connected component
containing $\ge k+1$ processes. More generally, for arbitrary $\GG$ and $f$, we
call a strongly connected component of $\RG{\GG}{f}$ containing a majority of
processes in $\GG$ a {\em connected core} of the graph.  It is easy to see there
can exist at most one connected core for given $\GG$ and $f$. For example, in
Figure~\ref{fig:examples}a the connected core is $\{1, 2, 3\}$, whereas in
Figures~\ref{fig:examples}b-c it is $\{1, 3\}$.
As we now show, the lower bound of Theorem~\ref{thm:2} is tight for
$n = 2k+1$. In fact, assuming the existence of a connected core, we can
implement an atomic register that is wait-free at all members of the connected
core under the more adversarial model $\MRU$ (asynchronous / eventually reliable
/ flaky).

\begin{theorem}
  \label{thm:upper-reg}
  Let $\FS$ be a fail-prone system such that for all $f \in \FS$, the graph $\RG{\GG}{f}$ contains a
  connected core $\CQ_f$, and let $\tau: \FS \rightarrow 2^\PP$ be the termination
  mapping such that
  $\forall f \in \FS$.\ $\tau(f) = \CQ_f$. Then there exists an
  $(\FS,\tau)$-wait-free implementation of an atomic register over the model $\MRU$
  (asynchronous / eventually reliable / flaky).
\end{theorem}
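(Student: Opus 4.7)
The plan is to adapt the classical ABD register algorithm~\cite{} to the model $\MRU$, using the connected core to route messages reliably despite flaky channels. Concretely, each process will store a pair $(\ts, \val)$ with $\ts \in \mathbb{N} \times \PP$ ordered lexicographically, and execute the standard two-phase protocol: a $\owrite(v)$ performs a \textsf{query} phase to collect the latest tags from a majority quorum, then a \textsf{write} phase that propagates $(\ts+1, v)$ to a majority; a $\oread()$ performs a \textsf{query} phase followed by a write-back of the highest $(\ts, \val)$ pair seen, again to a majority. Safety (linearizability) then follows from the standard ABD quorum-intersection argument, which is oblivious to the underlying communication substrate. The real work is to provide a point-to-point communication primitive between members of $\CQ_f$ that eventually delivers every message.

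To build this primitive, I would first note that since $\CQ_f$ is a strongly connected component of $\RG{\GG}{f}$ of size $>n/2$, for any two processes $p,q \in \CQ_f$ there is a directed path from $p$ to $q$ consisting entirely of correct processes and eventually reliable channels. I would therefore implement an \emph{eventually reliable multi-hop send} between core members: whenever a core process $p$ wants to send $m$ to a core process $q$, it floods $m$ (tagged with a unique identifier) along all outgoing channels, and every core process re-floods any identifier it has not yet seen, retransmitting each message periodically until receipt is acknowledged. By induction on the length of a shortest path from $p$ to $q$ in $\CQ_f$, and using that each hop is eventually reliable, $q$ eventually receives $m$. Acknowledgments are routed back the same way. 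Flaky channels to processes outside $\CQ_f$ and to faulty processes may swallow retransmissions, but this does no harm, since the protocol only needs responses from a majority, and $|\CQ_f| > n/2$ guarantees such a majority exists entirely inside the core.

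With this primitive in place, liveness is straightforward: a process $p \in \CQ_f$ that invokes an operation will, after some finite time, receive responses from all other core processes, which form a majority, so both phases complete. Hence $\ALG$ is $(\FS,\tau)$-wait-free. Safety is inherited from ABD: any two majority quorums intersect in at least one process, so the tag monotonicity and write-back mechanism ensure that a read cannot return a value older than any completed prior write, and two concurrent reads with overlapping writes linearize consistently.

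The main obstacle I anticipate is arguing carefully about the eventual multi-hop delivery in the presence of crashes of non-core processes and of unboundedly many dropped retransmissions along the non-core-incident flaky channels: I need to ensure that buffered retransmissions do not prevent progress (e.g.\ by congestion) and that the routing does not deadlock when some intermediate hops are non-core processes whose channels may be flaky. The cleanest way around this is to restrict the flooding to channels whose endpoints are in $\CQ_f$, which suffices since $\CQ_f$ is itself strongly connected in $\RG{\GG}{f}$; processes need only know $\CQ_f$, which is implicitly encoded in the fail-prone system they are designed against. The remainder of the proof is a routine adaptation of the ABD correctness argument.
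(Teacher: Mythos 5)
Your overall plan---ABD's two-phase query/write protocol running on top of a gossip/flooding layer that exploits the strong connectivity of the core---is essentially the paper's approach, and your liveness argument by induction on path length within $\CQ_f$ mirrors the paper's. Your safety argument via the standard quorum-intersection and tag-monotonicity reasoning is a legitimate alternative to the paper's dependency-graph formulation. However, one step in your proposal is genuinely wrong: the suggested ``fix'' of restricting the flooding to channels whose endpoints lie in $\CQ_f$, justified by the claim that processes ``need only know $\CQ_f$, which is implicitly encoded in the fail-prone system.'' The fail-prone system $\FS$ contains many failure patterns, each with its own connected core, and which pattern actually occurs in a given execution is not known to the processes at run time (the paper relies on exactly this point elsewhere, e.g.\ when designing the synchronizer). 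So the algorithm cannot route only over core channels. Fortunately the fix is unnecessary: in the asynchronous model there is no congestion or deadlock to worry about, flooding over \emph{all} channels is harmless (messages on flaky channels are simply lost), and liveness needs only the paths inside $\CQ_f$. You should drop that refinement and keep all-channels flooding.

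The second, more substantive divergence is space. Your relay layer tags every message with a unique identifier and has every process re-flood and buffer every identifier it has ever seen, which requires unbounded storage. This still validates the theorem as literally stated, since the statement does not demand bounded space, but it misses what the paper singles out as the interesting feature of its construction. The paper's algorithm does not forward individual messages at all: each process keeps per-sender arrays ($\query$, $\queryack$, $\wwrite$, $\wwriteack$) holding only the entry with the highest sequence number for each sender, periodically gossips these whole arrays to everyone, and merges incoming arrays by taking componentwise maxima. Because each operation only ever needs its \emph{latest} query/ack/write/ack to reach a majority, this last-writer-wins summary suffices for the same induction on distance in $\CQ_f$, while using bounded space. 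If you want to match the paper's result rather than a weaker unbounded-space variant, replace the identifier-based flooding with this state-merging mechanism.
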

We defer the proof of the theorem to
\tr{\ref{sec:app-abd}}{\appabd}. The proof constructs the desired implementation as a variant of
ABD~\cite{abd} that uses gossip-style data propagation to deal with indirect
connectivity. The most interesting aspect of this implementation is that,
despite the need for gossip, it uses only bounded space. We illustrate the
technique for bounding space when presenting our consensus implementation in
\S\ref{section:consensus}.

\subsection{Connectivity Requirements for Consensus}
\label{sec:consensus-lower}

The lower bounds in the previous section also apply to consensus under partial
synchrony, with analogs of Theorems~\ref{thm:1} and~\ref{thm:2} formulated as
follows:
\begin{theorem}
Let $f$ be a failure pattern and $T \subseteq \PP$. If some algorithm $\ALG$
is an $(f,T)$-obstruction-free implementation of consensus over $\MCL$ (partially
synchronous / reliable / disconnected), then $T$ is strongly connected in
$\RG{\GG}{f}$.
\label{thm:consensus-lower-1}
\end{theorem}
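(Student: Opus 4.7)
The plan is to mirror the proof of Theorem~\ref{thm:1} almost verbatim, substituting register operations with consensus proposals and deriving the contradiction from Agreement and Validity instead of safeness. Assume for contradiction that $\ALG$ is an $(f,T)$-obstruction-free consensus implementation over $\MCL$ but $T$ is not strongly connected in $\RG{\GG}{f}$. Without loss of generality, pick $u,v\in T$ such that no path from $u$ to $v$ exists in $\RG{\GG}{f}$, and define $S_u,S_v,R_u,R_v$ as in the proof of Theorem~\ref{thm:1}. The three structural claims (Claims~\ref{claim:8s3e},~\ref{claim:1gc9}, and $S_u\cap(R_v\cup S_v)=\emptyset$) depend only on the graph-theoretic setup and carry over unchanged.

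Next, I would construct the executions in exact parallel to Theorem~\ref{thm:1}. Let $\alpha_1$ be a fair execution in which processes and channels in $f$ fail at the beginning and $u$ is the sole invoker, calling $\propose(0)$. By $(f,T)$-obstruction-freedom this call returns, and by Validity it must return $0$. Take $\alpha_2$ to be the prefix of $\alpha_1$ ending with this response, and set $\alpha_3=\alpha_2|_{R_u\cup S_u}$ and $\alpha=\alpha_3|_{R_u}\alpha_3|_{S_u}$; the two unreachability claims make these valid executions of $\ALG$. Now let $\beta_1$ be a fair execution in which processes and channels in $f$ fail at the beginning, which starts with the actions of $\alpha_3|_{R_u}$ (note these contain no $\propose$ invocations, since $u\in S_u$) and continues with $v$ invoking $\propose(1)$ as the sole operation. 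By obstruction-freedom this returns, and by Validity it must return $1$. Take $\beta_2$ to be the prefix ending with this response and $\beta=\beta_2|_{R_u\cup R_v\cup S_v}=\alpha_3|_{R_u}\delta$ for the appropriate suffix $\delta$. Then $\sigma=\alpha_3|_{R_u}\delta\alpha_3|_{S_u}$ is indistinguishable from $\beta$ to processes in $R_u\cup R_v\cup S_v$ and from $\alpha$ to processes in $S_u$, exactly as in Theorem~\ref{thm:1}. Hence $\sigma$ is an execution of $\ALG$ in which $u$'s $\propose(0)$ returns $0$ while $v$'s $\propose(1)$ returns $1$, violating Agreement.

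The one genuine issue is justifying that each fair execution used above is admissible in $\MCL$, since correct channels are reliable and partial synchrony eventually imposes a bounded message delay $\delta$ after $\GST$. The key observation is that $\GST$ is unknown to the algorithm and may be placed arbitrarily late. For each of $\alpha_1$ and $\beta_1$ I would fix $\GST$ to be later than the finite prefix up to the response of interest; before $\GST$ the model permits arbitrary delays and reorderings, so all constructions that in Theorem~\ref{thm:1} exploit asynchrony are legitimate here too. To make the execution fair I extend the finite prefix by a tail in which surviving correct channels deliver all outstanding messages within $\delta$ and no further operation is invoked; reliable channels deliver everything that was sent, so fairness and the $\MCL$ synchrony requirement are both satisfied. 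The same placement of $\GST$ after $\sigma$'s terminating events turns $\sigma$ into a prefix of a valid $\MCL$ execution, completing the contradiction.

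I expect this last step to be the only subtle one: unlike the asynchronous setting in Theorem~\ref{thm:1}, one must be explicit that the adversarial delays used to isolate $S_u$, $R_u$, $S_v$, and $R_v$ all occur before $\GST$, and that fairness and reliable delivery can be restored in a tail that does not disturb the already-observed responses. Once this observation is recorded, the remainder of the proof is a direct transcription of Theorem~\ref{thm:1} with Agreement and Validity replacing the safe-register axiom.
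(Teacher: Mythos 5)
Your proof is correct, but it takes a genuinely different route from the paper's. The paper proves Theorem~\ref{thm:consensus-lower-1} by reduction: it first observes that Theorem~\ref{thm:1} continues to hold when $\MRL$ is replaced by $\MCL$ (precisely the finiteness-of-$\sigma$ / late-$\GST$ observation you make), and then invokes the fact that an $(f,T)$-obstruction-free safe register can be built from $(f,T)$-obstruction-free consensus instances (the simulation in Appendix~\ref{sec:app-reg-from-consensus}, which needs its own nontrivial argument that repeated re-proposal preserves obstruction-freedom). You instead rerun the indistinguishability argument directly on $\propose$, replacing the safe-register contradiction with Agreement plus Validity: $\propose(0)$ returning $0$ in the $\alpha_3|_{S_u}$ fragment and $\propose(1)$ returning $1$ in $\delta$ both survive into $\sigma$, and your observation that $\alpha_3|_{R_u}$ contains no $\propose$ invocations (since $u \in S_u$) is exactly what makes Validity applicable to $\beta_1$. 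Your route is more self-contained for this one theorem and avoids the register-from-consensus lemma entirely; the paper's route pays for that lemma once but then gets both Theorem~\ref{thm:consensus-lower-1} and Theorem~\ref{thm:consensus-lower-2} for free from the register results. One small imprecision: you do not need to place $\GST$ late in $\alpha_1$ and $\beta_1$ themselves (and strictly speaking you cannot ``fix $\GST$ after the response,'' since the response time depends on the execution, $\GST$ included) --- the isolation of $R_u$, $S_u$, $R_v$, $S_v$ in those executions comes from the disconnected channels in $f$, not from adversarial delays, and obstruction-freedom yields termination for any placement of $\GST$. Asynchrony is needed only to legitimize the reordered finite executions $\alpha$ and $\sigma$, where your closing remark (put $\GST$ after all their events) is exactly the right, and the same, fix the paper uses.
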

\begin{theorem}
  \label{thm:consensus-lower-2}
  Let $\FS$ be a $k$-fail-prone system and $\tau: \FS \rightarrow 2^\PP$ be a
  termination mapping such that
  $\forall f=(P,C) \in \FS.\ \tau(f)\neq\emptyset$. Assume that some algorithm
  $\ALG$ is an $(\FS,\tau)$-obstruction-free implementation of consensus over
  $\MCL$ (partially synchronous / reliable / disconnected).  Then for all
  $f\in\FS$, there exists a strongly connected component of $\RG{\GG}{f}$ that
  contains $\tau(f)$ and has a cardinality greater than $k$.
\label{thm:k-consensus-lower}
\end{theorem}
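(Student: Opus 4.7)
The plan is to follow the structure of the proof of Theorem~\ref{thm:2} essentially verbatim, replacing the register operations ($\RD$ returning $0$, $\owrite(1)$) with consensus operations ($\propose(0)$ and $\propose(1)$), and deriving a contradiction from Agreement rather than safeness. The partial synchrony of $\MCL$ (as opposed to the asynchrony of $\MRL$) turns out to be the only nontrivial new ingredient, and as I explain below, it can be accommodated by placing $\GST$ sufficiently late.

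Fix $f \in \FS$, let $S$ be the strongly connected component of $\RG{\GG}{f}$ containing $\tau(f)$, and assume for contradiction $|S| \le k$. Pick $p \in \tau(f) \subseteq S$ and, using $k$-fail-proneness, pick $f' = (S, C') \in \FS$ for some $C'$; then $\tau(f') \ne \emptyset$, so we may pick $q \in \tau(f')$, and necessarily $q \notin S$. Let $R$ be the set of processes outside $S$ that can reach $S$ in $\RG{\GG}{f}$. As in the proof of Theorem~\ref{thm:2}, the definitions yield two claims: $R \cup S$ is unreachable from $\PP \setminus (R \cup S)$ in $\RG{\GG}{f}$, and $R$ is unreachable from $S$ in $\RG{\GG}{f}$.

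I would then construct $\alpha$ as follows. Take $\alpha_1$ to be a fair $f$-compliant execution of $\ALG$ in which $p$ invokes $\propose(0)$ and no other process invokes any operation. Since $p$ executes solo and $\ALG$ is $(\FS,\tau)$-obstruction-free, $p$'s $\propose$ returns, and by Validity it returns $0$. Let $\alpha_2$ be the prefix of $\alpha_1$ up to this response; by the two claims, $\alpha_3 = \alpha_2|_{R \cup S}$ is an execution of $\ALG$, and so is $\alpha = \alpha_3|_R\,\alpha_3|_S$. Symmetrically, I would construct $\beta$: a fair $f'$-compliant execution that begins with the actions of $\alpha_3|_R$ and then has $q$ invoke $\propose(1)$ solo (with all of $S$ crashed by $f'$); by obstruction-freedom and Validity, $q$ returns $1$, and restricting the resulting prefix appropriately gives $\beta = \alpha_3|_R\,\delta$ in which $q$ decides $1$. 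Interleave them as $\sigma = \alpha_3|_R\,\delta\,\alpha_3|_S$; because $\delta$ contains no actions by $S$, we have $\sigma|_S = \alpha|_S$ and $\sigma|_{\PP \setminus S} = \beta$, so $\sigma$ is indistinguishable from $\alpha$ to $p$ (which decides $0$) and from $\beta$ to $q$ (which decides $1$), contradicting Agreement.

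The main obstacle, and the one substantive difference from the proof of Theorem~\ref{thm:2}, is verifying that $\alpha_1$, $\beta_1$, and $\sigma$ are valid executions of $\MCL$ rather than merely of an asynchronous system. The point is that $\GST$ is not known to $\ALG$, so I may pick it to lie strictly after the finite prefixes $\alpha_2$, the analogous prefix for $\beta_1$, and $\sigma$. Before $\GST$ messages can be arbitrarily delayed and clocks arbitrarily skewed, which is exactly what lets me treat each of these prefixes as if the environment were asynchronous: messages that cross from $\PP \setminus (R \cup S)$ into $R \cup S$ (or from $S$ into $R$) can be kept in transit throughout the prefix, and the channels in $f$ deliver nothing. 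Each prefix is then extended to a fair $\MCL$ execution by letting $\GST$ occur afterwards and delivering all pending messages on correct channels, which does not affect the responses that have already been produced. With this observation in place, the rest of the argument is purely combinatorial and proceeds exactly as in the register case.
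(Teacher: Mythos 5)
Your proposal is correct, but it takes a genuinely different route from the paper. The paper does \emph{not} redo the indistinguishability argument for consensus: it observes that the executions $\sigma$ built in the proofs of Theorems~\ref{thm:1} and~\ref{thm:2} are finite and hence remain valid under $\MCL$ with all actions placed before $\GST$, and then reduces consensus to registers via the construction in Appendix~\appregfromconsensus, which turns an $(\FS,\tau)$-obstruction-free consensus into an $(\FS,\tau)$-obstruction-free safe register by agreeing on a sequence of operations. That reduction yields Theorems~\ref{thm:consensus-lower-1} and~\ref{thm:consensus-lower-2} simultaneously but requires a separate liveness argument for the register-from-consensus layer (repeated re-proposal across instances). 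You instead rerun the partitioning argument directly, replacing the $\mathit{read}/\mathit{write}$ pair with $\propose(0)/\propose(1)$ and deriving the contradiction from Validity plus Agreement; this is self-contained and avoids the auxiliary construction entirely. Your handling of partial synchrony matches the paper's in substance (place $\GST$ after the finite $\sigma$), with one small imprecision worth noting: for $\alpha_1$ and $\beta_1$ you do not actually need, and cannot straightforwardly arrange, $\GST$ to fall after prefixes whose length itself depends on the execution --- but you also do not need to, since the unreachability claims are enforced by disconnected channels and crashed processes (which suppress messages even after $\GST$), not by pre-$\GST$ asynchrony. The late placement of $\GST$ is only needed for $\sigma$, where failures are deferred past all actions and cross-partition messages must be held in transit; since $\sigma$ is a finite sequence you construct explicitly, you are free to fix its timing and extend it to a fair $\MCL$ execution afterwards, exactly as the paper remarks.
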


To see why these theorems hold
observe first that Theorems~\ref{thm:1} and~\ref{thm:2} also hold 
if the model $\MRL$ is replaced with $\MCL$: 
since the executions $\sigma$ constructed in the proofs of the theorems
are finite, they are also valid executions under $\MCL$
where all actions occur before $\GST$. Then the required follows from the fact that
registers can be implemented from consensus. We formally prove this for our setting in
\tr{\ref{sec:app-reg-from-consensus}}{\appregfromconsensus}.

Finally, similarly to Theorem~\ref{thm:upper-reg}, for
the case of $n = 2k+1$ we can prove an upper bound matching
Theorem~\ref{thm:k-consensus-lower} in the model $\MCU$ (partially synchronous /
eventually reliable / flaky). This result is much more difficult than the upper
bound for registers, and we prove it in the next section.
\begin{theorem}
  \label{thm:upper-consensus}
  Let $\FS$ be a fail-prone system such that for all $f \in \FS$, the graph $\RG{\GG}{f}$ contains a
  connected core $\CQ_f$, and let $\tau: \FS \rightarrow 2^\PP$ be the
  termination mapping such that
  $\forall f \in \FS.\ \tau(f) = \CQ_f$. Then there exists an
  $(\FS,\tau)$-wait-free implementation of consensus over the model $\MCU$
  (partially synchronous / eventually reliable / flaky).
\end{theorem}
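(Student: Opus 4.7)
The plan is to build the consensus protocol in two layers: first specify and implement a view synchronizer tailored to the crash/flaky setting, then design a Paxos-style consensus protocol on top of it that exploits gossip over the connected core $\CQ_f$ to tolerate message loss. Since leader and failure detectors are known to be useless under flaky channels, the synchronizer will not try to identify a good leader; instead, it merely promises that, after $\GST$, all processes in the connected core eventually enter the same view with the same leader and stay there together for a duration long enough for the consensus logic to commit. The top-level protocol is responsible for watching the current leader with timeouts and calling an $\mathsf{advance}$ primitive when it suspects a lack of progress; the synchronizer then coordinates a globally consistent view jump.

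For the synchronizer, I would adapt the Byzantine view synchronizer of prior work to our benign, flaky setting, with two crucial modifications. First, the communication graph restricted to $\CQ_f$ is strongly connected and eventually reliable, so synchronization messages ($\WISH$, $\ENTER$, etc.) will be relayed by gossip across $\CQ_f$ rather than assumed to flow directly between every pair of processes. Second, to prevent processes outside the connected core from destabilizing a working view by spuriously requesting advancement, view entry will require a \emph{quorum} of $\lceil (n+1)/2 \rceil$ processes to report a matching wish; since $|\CQ_f| \ge k+1$, the core alone can make progress, while no minority of outsiders can force a view change on its own. I would prove the synchronizer specification holds by showing (a) safety: the local $\currview$ values only increase; and (b) liveness: after $\GST$, since $\CQ_f$ is strongly connected by eventually reliable channels, the largest wish held by any core process is eventually adopted by all core processes, after which the view stabilizes for at least the desired duration.

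On top of the synchronizer I would run a leader-based Paxos-style protocol where, in each view $v$, the designated $\leader(v)$ collects states ($\STATE$ messages) from a quorum, proposes a safe value ($\PTA$/$\ACCEPT$), and commits it when a quorum of acknowledgments ($\PTB$/$\ACCEPTACK$) returns, using $\COMMIT$ messages to spread the decision. All point-to-point communication is replaced by gossip forwarding through the core, so that even though the leader may have no direct correct channel to some quorum member, messages still reach their destination; and the decided value is disseminated to every process in the core by the same gossip mechanism. Safety (agreement and validity) is inherited almost verbatim from standard Paxos reasoning: every decision is anchored by a quorum, any two quorums intersect in a correct process, and the leader's value-selection rule enforces the usual invariant that no two decisions disagree. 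For liveness, I would argue that after $\GST$ some view $v$ with a leader $\ell \in \CQ_f$ is eventually synchronized at every process in the core for sufficiently long; during that window $\ell$ can complete all four message exchanges with a core-sized quorum over the eventually reliable gossip paths, so consensus terminates at every process in $\CQ_f$.

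The main obstacle is ensuring liveness while keeping the protocol bounded in space despite unbounded flakiness and gossiping. The delicate point is calibrating the synchronizer so that, on the one hand, the connected core can force enough synchronized views after $\GST$ to eventually cover a good leader (despite outsiders potentially triggering spurious $\mathsf{advance}$ calls and despite pre-$\GST$ messages contaminating $\WISH$ quorums), and on the other hand, the view durations grow to exceed the unknown bound needed for a Paxos round through gossip, while garbage-collection of per-view state (analogous to the bounded-space trick used in the register upper bound of Theorem~\ref{thm:upper-reg}) prevents unbounded accumulation of pending $\ACCEPT$/$\COMMIT$ messages. Handling this interaction cleanly is where most of the technical work of \S\ref{section:consensus} will lie, and I would defer the detailed synchronizer implementation and its bounded-space analysis to the appendix referenced in the paper.
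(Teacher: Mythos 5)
Your proposal follows essentially the same route as the paper: a view synchronizer whose view changes require a majority of wishes (so that at least one comes from the connected core, neutralizing disruptive outsiders), gossip-based relaying of both synchronizer and Paxos-style messages across $\CQ_f$ with bounded space by retaining only the highest-view entry per sender, Paxos-inherited safety, and liveness via monotonically growing timeouts until a synchronized view with a well-connected leader lasts long enough to decide. The only caveat is that your opening overstates what the synchronizer alone guarantees (it cannot by itself promise views of sufficient duration, since $\delta$ is unknown; the paper obtains this from the interplay of Bounded Entry, Validity, and the protocol's timeout growth), but you correct this in your final paragraph, so the plan matches the paper's.
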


\section{Consensus in the Presence of Flaky Channels}
\label{section:consensus}

We now present a consensus protocol in the model $\MCU$ that validates
Theorem~\ref{thm:upper-consensus}. Consider a fail-prone system $\FS$ satisfying
the conditions of the theorem: for each $f \in \FS$, the graph $\RG{\GG}{f}$
contains a connected core. For the remainder of the section we fix a failure
pattern $f \in \FS$ and let $\CQ$ be the corresponding connected core.
Since, as we explained in \S\ref{sec:intro}, classical failure and leader
detectors are not useful in the presence of flaky channels, we take a different
approach. Our protocol is implemented on top of a \emph{view
  synchronizer}~\cite{multishot-disc22,bftlive-dc,cogsworth,oded-linear}, which
enables the processes to divide their execution into a series of views, each
with a designated leader.
At a high level, the synchronizer's goal is to bring sufficiently many correct
processes with enough connectivity (e.g., those from $\CQ$) into a view led by a
well-connected leader and keep them in that view for sufficiently long to reach
an agreement.
Supporting this in the presence of flaky channels is nontrivial as the processes
outside the connected core (such as process $2$ in Figures~\ref{fig:examples}b-c)
may fail to observe progress on behalf of the leader of a functional view and
request a premature view change. We first present the specification
(\S\ref{sec:spec}) and the implementation (\S\ref{sec:syncimpl}) of a
synchronizer that addresses this challenge. We then use it to construct a
consensus protocol (\S\ref{section:consensus-protocol}) satisfying the
requirements of Theorem~\ref{thm:upper-consensus}.

\subsection{Synchronizer Specification}
\label{sec:spec}

We consider a synchronizer interface defined in
\cite{cogsworth,multishot-disc22}. Let $\mathsf{View} = \{1,2,\dots\}$ be the
set of {\it views}, ranged over by $v$; we use $0$ to denote an invalid initial
view. The synchronizer produces notifications $\newview(v)$ at a process,
telling it to enter a view $v$. To trigger these, the synchronizer allows a
process to call a function $\adv()$, which signals that the process wishes to
{\em advance} to a higher view. We assume that a process does not call $\adv()$
twice without an intervening $\newview$ notification.

\begin{figure}[t]
  \small
  
    \begin{itemize}
        \setlength\itemsep{4pt}

        \item {\bf Monotonicity.} A process may only enter increasing views:\\[2pt]
        $\forall i, v, v'.\ {E_i(v)\fdef}\ \land\ {E_i(v')\fdef} \implies (v < v' \iff E_i(v) < E_i(v'))$

        \item {\bf Validity.} A process only enters $v+1$ if some process from $\CQ$ has attempted
        to advance from $v$:\\[2pt]
        $\forall i, v.\ {E_i(v + 1)\fdef} \implies {\AF{\CQ}(v)\fdef}\ \land\ \AF{\CQ}(v) < E_i(v + 1)$

        \item {\bf Bounded Entry.} For some $\V$ and $d$, if a process from $\CQ$ enters 
        $v \geq \V$ and no process from $\CQ$ attempts to advance to a higher view within $d$,
        then every process from $\CQ$ will enter $v$ within $d$:\\[2pt]
        $\exists \V, d.\ \forall v \geq \V.\ {\EF{\CQ}(v)\fdef}\ \land\
        \neg(\AF{\CQ}(v) < \EF{\CQ}(v) + d) \implies$\\[2pt]
        $(\forall p_i \in \CQ.\ {E_i(v)\fdef})\ \land\ (\EL{\CQ}(v) \leq \EF{\CQ}(v) + d)$

        \item {\bf Startup.} If $>\frac{n}{2}$ processes from $\CQ$ invoke $\adv$, then some process from $\CQ$
        will enter view $1$:\\[2pt]
        $(\exists P \subseteq \CQ.\ |P| > \frac{n}{2}\ \land\
        (\forall p_i \in P.\ {A_i(0)\fdef})) \implies {\EF{\CQ}(1)\fdef}$ 

        \item {\bf Progress.} If a process from $\CQ$ enters $v$ and, for some set $P \subseteq \CQ$
        of $>\frac{n}{2}$ processes, any process in $P$ that enters $v$ eventually invokes $\adv$, then some process
        from $\CQ$ will enter $v+1$:\\[2pt]
        $\forall v.\ {\EF{\CQ}(v)\fdef}\ \land\ (\exists P \subseteq \CQ.\
        |P| > \frac{n}{2}\ \land\ (\forall p_i \in P.\ {E_i(v)\fdef} \implies
        {A_i(v)\fdef})) \implies {\EF{\CQ}(v + 1)\fdef}$ 
      \end{itemize}

{\bf Notation:}

\begin{itemize}
\setlength\itemsep{4pt}
\item    
$E_i(v)$: the time when process $p_i$ enters a view $v$
\item
$\EF{\CQ}(v)$, $\EL{\CQ}(v)$: the earliest and the latest time when a process
from $\CQ$ enters a view $v$ 
\item
$A_i(v)$, $\AF{\CQ}(v)$, $\AL{\CQ}(v)$:
similarly for times of attempts to advance from a view $v$
\item
$g(x)\fdef$, $g(x)\fndef$: $g(x)$ is defined/undefined
\end{itemize}
      
    \caption{Synchronizer properties satisfied in executions with connected core
      $\CQ$.}
    \label{fig:synchronizer_specification}
\end{figure}

In Figure~\ref{fig:synchronizer_specification} we give a specification of a view
synchronizer for the system model $\MCU$, which is an adaptation of the one for
the Byzantine setting~\cite{multishot-disc22}.
The Monotonicity property ensures that, at any given process, its view can only
increase.  The Validity property ensures that a process may only enter a view
$v + 1$ if some process in the connected core $\CQ$ has called $\adv$ in $v$.
This prevents processes with bad connectivity from disrupting $\CQ$ by forcing
view changes (e.g., process $2$ in Figure~\ref{fig:examples}c, where
$\CQ = \{1, 3\}$). The Bounded Entry property ensures that, if some process from
$\CQ$ enters a view $v$, then all processes from $\CQ$ will do so at most $d$
units of time of each other (for some constant $d$). This only holds if within
$d$ no process from $\CQ$ attempts to advance to a higher view, as this may make
some processes from $\CQ$ skip $v$ and enter a higher view directly. Bounded
Entry only holds starting from some view $\V$, since a synchronizer may not be
able to guarantee it for views entered before $\GST$. The Startup property
ensures that if more than $\frac{n}{2}$ processes from $\CQ$ attempt to advance
from view $0$, then some process from $\CQ$ will enter view $1$. The Progress
property determines the conditions under which some process from $\CQ$ will
enter a view $v + 1$: this will happen if a process from $\CQ$ enters the view $v$,
and for some set $P$ of more than $\frac{n}{2}$ processes from $\CQ$, any
process in $P$ entering $v$ eventually invokes $\adv$ (e.g., $1$ and
$3$ in Figure~\ref{fig:examples}c).

As we show below, the above properties work in tandem to ensure the liveness of
consensus in the presence of flaky channels. Informally, Progress allows
processes to iterate over views in search for one with a well-connected leader;
Bounded Entry enables all processes in the connected core to promptly enter this
view; and Validity ensures that these processes can stay in it despite any
disruption from processes with flaky connectivity.

\subsection{Synchronizer Implementation}
\label{sec:syncimpl}

In Figure~\ref{fig:synchronizer_implementation} we present an algorithm that
implements the specification in Figure~\ref{fig:synchronizer_specification} in
the model $\MCU$. This implementation requires only bounded space, despite the
fact that correct channels in $\MCU$ are only {\em eventually} reliable, and
thus can lose messages before $\GST$. A process stores its current view in a
variable $\currview$. A process also maintains an array $\views$ tracking, for
every other process, the highest view to which it wishes to advance. When the
process invokes $\adv$ (line \ref{protocol:sync:adv}), the synchronizer does not
immediately switch to the next view. Instead, it updates its entry in the
$\views$ array and propagates the whole array in a $\WISH$ message, advertising
its wish to advance (line \ref{protocol:sync:adv:wish}).  Upon the receipt of a
$\WISH$ message (line \ref{protocol:sync:wish}), a process incorporates the
information received into its $\views$ array, keeping entries with the highest
view (line \ref{protocol:sync:wish:views}).  This mechanism ensures that
information is propagated between processes that do not have direct connectivity
via a correct channel.

\begin{figure}[t]
  \begin{algorithm}[H]
    \small
    \SetAlgoNoLine
    \DontPrintSemicolon      
    \setcounter{AlgoLine}{0}

    \Function{$\adv()$}{\label{protocol:sync:adv}
      \assign{\views[i]}{\currview+1}\;\label{protocol:sync:adv:views}
      \send $\WISH(\views)$ \ToAll\;\label{protocol:sync:adv:wish}
    }

    \smallskip
    \smallskip

    \SubAlgo{${\bf periodically}$\ \ \ \ \ \ \ $\triangleright$ every $\rho$ time units}{
      \label{protocol:sync:periodically}
      \send $\WISH(\views)$ \ToAll\;\label{protocol:sync:periodically:wish}
    }

    \smallskip
    \smallskip

    \SubAlgo{$\onreceive\ \WISH(V)$}{\label{protocol:sync:wish}
      \lFor{$p_j \in \mathcal{P}$}{\label{protocol:sync:wish:viewsfor}
        \assign{\views[j]}{\max(\views[j], V[j])}\label{protocol:sync:wish:views}
      }
      \assign{v'}{\max \{v \mid \exists p_j.\ \views[j] = v\, \land\,
        {|\{p_k \mid \views[k] \geq v\}| > \frac{n}{2}}\}}\;\label{protocol:sync:wish:v}
      \If{$v' > \currview$}{\label{protocol:sync:wish:guard}
        \assign{\currview}{v'}\;
        $\trigger\ \newview(v')$\;\label{protocol:sync:wish:newview}
        \send $\WISH(\views)$ \ToAll\;\label{protocol:sync:wish:wish}
      }
    }
  \end{algorithm}
  \caption{Synchronizer at a process $p_i$.}
  \label{fig:synchronizer_implementation}
\end{figure}
Since the membership of $\CQ$ is unknown to the processes, to satisfy Validity
the synchronizer cannot initiate a view change based on an $\adv()$ call by a
single process: the synchronizer cannot tell whether this process is from $\CQ$
or not. Instead, we require wishes to advance from a majority of processes, so
that at least one of them must be from $\CQ$. In more detail, upon receiving a
$\WISH$ message, we compute $v'$ as the $(\lfloor\frac{n}{2}\rfloor + 1)$-st
highest view in $\views$ (line \ref{protocol:sync:wish:v}). Thus, at least one
process from $\CQ$ wishes to advance to a view $\geq v'$. In this case the current
process enters $v'$ if this view is greater than its $\currview$ (line
\ref{protocol:sync:wish:newview}). Note that a process may be forced to switch
views even if it did not invoke $\adv$; this helps lagging processes to catch
up. To satisfy Bounded Entry, the process disseminates the information that made
it enter the new view (line \ref{protocol:sync:wish:wish}) to ensure that
other processes also do so promptly. Finally, to deal with message loss before
$\GST$, a process periodically resends its $\views$ array (line
\ref{protocol:sync:periodically}).

We can also prove that the synchronizer satisfies Progress: this property
requires $>\frac{n}{2}$ $\adv$ calls by processes in $\CQ$, which are
well-connected enough for the corresponding $\WISH$es to eventually propagate within
$\CQ$ and enable the guard at line~\ref{protocol:sync:wish:guard} (e.g., processes
$1$ and $3$ in Figure~\ref{fig:examples}c). Note that Progress wouldn't hold if
it required $>\frac{n}{2}$ $\adv$ calls by any processes, not necessarily in
$\CQ$ (e.g., processes $1$ and $2$): in this case we wouldn't be able to
guarantee that all the corresponding $\WISH$es eventually propagate.
We defer the proof of correctness of the synchronizer to
\tr{\ref{sec:app-sync}}{\appsync}.

\begin{theorem}\label{theorem:nxp}
  Let $\FS$ be a fail-prone system such that for each $f \in \FS$, $\RG{\GG}{f}$ contains a
  connected core. Then for any $f \in \FS$ with an associated connected
  core $\CQ$, every $f$-compliant fair execution of the algorithm in
  Figure~\ref{fig:synchronizer_implementation} over the model $\MCU$ satisfies the
  properties in Figure~\ref{fig:synchronizer_specification}.
\end{theorem}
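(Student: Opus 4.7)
The plan is to verify the five properties of Figure~\ref{fig:synchronizer_specification} in turn, exploiting two structural facts about the connected core: $|\CQ| > n/2$, which yields a quorum-intersection principle, and $\CQ$ is strongly connected in $\RG{\GG}{f}$, so every ordered pair of core processes is linked by a path of channels that are reliable from some time onward in $\MCU$. Combined with the periodic retransmission on line~\ref{protocol:sync:periodically}, this second fact ensures that after $\GST$ any entry of $\views$ held at some core process is eventually reflected in the $\views$ arrays of all other core processes.

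Monotonicity is immediate: the only write to $\currview$ sits inside the block guarded by $v' > \currview$ (lines~\ref{protocol:sync:wish:guard}--\ref{protocol:sync:wish:newview}), and $\newview$ is triggered only there. For Validity I would argue via a source-tracking invariant: the value $v+1$ can originate in a $\views[j]$ slot only through the assignment on line~\ref{protocol:sync:adv:views}, executed by $p_j$ itself during an $\adv$ call with $\currview_j = v$; the merging on line~\ref{protocol:sync:wish:views} can only copy existing values. Hence if $p_i$ enters $v+1$, the computation of $v'$ on line~\ref{protocol:sync:wish:v} witnesses more than $n/2$ indices with $\views[k] \geq v+1$, and quorum intersection with $\CQ$ produces an $\AF{\CQ}(v)$ witness strictly preceding $E_i(v+1)$. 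Startup and Progress then combine eventual propagation within $\CQ$ with the same quorum-intersection argument: once more than $n/2$ core processes have set their own $\views[i]$ to the target value, every core process eventually sees at least $\lfloor n/2 \rfloor + 1$ sufficiently large entries and enters the target view (or a higher one).

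The main obstacle is Bounded Entry, which requires explicit constants. Let $D$ be the diameter of the subgraph of $\RG{\GG}{f}$ induced by $\CQ$, and let $T^*$ be a time after which every correct inner-core channel delivers every subsequent message and per-hop delay is bounded by $\delta$; such $T^*$ exists by eventual reliability together with partial synchrony. I would take $\V$ to be the least view any core process enters at or after $T^* + \rho$, and set $d = D\cdot(\delta+\rho)$. When the first core process enters $v \geq \V$ at $\EF{\CQ}(v)$, it immediately emits a $\WISH$ on line~\ref{protocol:sync:wish:wish} whose $\views$ vector has more than $n/2$ entries at least $v$. By induction on hop distance along an inner-core path, within $D(\delta+\rho)$ every core process has merged this information into its own $\views$: each hop costs at most $\delta$ for a direct delivery after $T^*$, plus at most $\rho$ to trigger the next periodic resend at an intermediate process that has not itself just entered a new view. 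The side hypothesis that no core process attempts to advance beyond $v$ in this window, together with the Validity invariant applied in reverse, precludes any core process from computing $v' > v$, so each core process enters view $v$ exactly.

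The most delicate step is the induction underlying Bounded Entry: one must show that intermediate core processes actually re-broadcast an up-to-date $\views$ vector (firing line~\ref{protocol:sync:wish:wish} on the first merge that raises $v'$, and otherwise relying on line~\ref{protocol:sync:periodically}), and that $\WISH$es originating outside $\CQ$ cannot push any $\views[j]$ with $p_j \in \CQ$ above $v$ during the critical window --- a consequence of the original-source invariant used for Validity. Once this bookkeeping is in place, the remaining arithmetic relating $\V$, $d$, $\rho$, $D$, and $\delta$ is routine.
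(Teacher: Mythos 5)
Your overall strategy matches the paper's: Monotonicity from the guard at line~\ref{protocol:sync:wish:guard}; Validity by tracking the provenance of $\views$ entries back to the self-assignment at line~\ref{protocol:sync:adv:views}; Bounded Entry, Startup and Progress by hop-by-hop induction along reliable paths inside $\CQ$ combined with quorum intersection against the majority core. Your looser constant $d = \diameter(\CQ)(\delta+\rho)$ is harmless since $d$ is existentially quantified; the paper obtains $d = \diameter(\CQ)\delta$ by observing that the relay at each hop is the \emph{immediate} re-broadcast at line~\ref{protocol:sync:wish:wish} upon entering the view, so no periodic-timer wait is incurred along the entering chain.

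There is, however, a genuine gap in your Validity argument. From $E_i(v+1)\fdef$ and quorum intersection you obtain a core process $p_l$ with $\views[l] \geq v+1$, and the source-tracking invariant then tells you that $p_l$ executed line~\ref{protocol:sync:adv:views} while holding $\currview = \views[l]-1 \geq v$. If that value is strictly greater than $v$, you have only shown $A_l(w)\fdef$ for some $w > v$, which is not the required witness $\AF{\CQ}(v)\fdef$; recursing on how $p_l$ reached a view above $v$ threatens an infinite regress, and you cannot appeal to a ``no views are skipped'' lemma here because in the paper that lemma is itself derived from Validity. The paper closes this case with a minimality argument: take the \emph{first} process $p_k$ (in time) to enter any view $v_k > v$; at that instant quorum intersection yields a core process $p_l$ whose advertised entry is $\geq v_k$, and since no process has yet entered a view above $v$, source tracking forces $\currview_l \leq v$, hence $\currview_l = v$ exactly and $A_l(v)\fdef$. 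You need this (or an equivalent well-founded induction) both to finish Validity and to justify the ``Validity invariant applied in reverse'' step on which your Bounded Entry window argument relies.
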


\subsection{Consensus Protocol}
\label{section:consensus-protocol}

\begin{figure}[t]
	\begin{tabular}{@{\quad\ \ }l@{}|@{\quad \ \ }l@{}}
	\hspace{-21px}
	\begin{minipage}[t]{6.9cm}
		\vspace{-14px}
		\begin{algorithm}[H]
			\DontPrintSemicolon
			\SetAlgoNoLine
			\setcounter{AlgoLine}{0}
			\SetInd{0.5em}{0.5em}

			\SubAlgo{${\bf on\ startup}\xspace$}{
				$\adv$()\;\label{protocol:start:adv}
			}

			\smallskip
			\smallskip

				\Function{$\propose(x)$}{\label{protocol:consensus:propose}
					$\assign{\pval}{x}$\; \label{protocol:consensus:propose:assign}
			    \textbf{wait until}\xspace $\phase = \DECIDED$\;\label{protocol:consensus:propose:condition}
					\textbf{return}\xspace $\val$\;
			  }

	      \smallskip
	      \smallskip

			\SubAlgo{${\bf on }\xspace\ \newview(v)$}{
				\label{protocol:consensus:newview}
				$\assign{\view}{v}$\;
		    $\starttimer(\timerdecision, \durdecision)$\;\label{protocol:consensus:newview:starttimer}
		    $\assign{\MSGOB[i]}{(\view,\cview,\val)}$\;\label{protocol:consensus:newview:msg1b}
				$\assign{\phase}{\ENTERED}$\;\label{protocol:consensus:newview:phase}
			}

			\smallskip
			\smallskip

			\SubAlgo{{\bf when the timer $\timerdecision$ expires}}{\label{protocol:consensus:timerexpires}
				$\assign{\durdecision}{\durdecision + \gamma}$\;\label{protocol:consensus:timerexpires:durdecision}
				$\adv()$\;\label{protocol:consensus:timerexpires:adv}
			}

			\smallskip
			\smallskip

	    \SubAlgo{${\bf periodically}$\ \ \ \ \ \ \ $\triangleright$ every $\rho$ time units}{
	        \label{protocol:consensus:periodically}
	        \send $\STATE(\MSGOB,\MSGTA,\MSGTB)$ \ToAll\;\label{protocol:consensus:periodically:state}
	    }

			\smallskip
			\smallskip

			\SubAlgo{\onreceive $\STATE(\VOB, \VTA, \VTB)$}{
				\label{protocol:consensus:state}
				\For{$p_j \in \mathcal{P}$}{
					\If{$\VOB[j].\itview > \MSGOB[j].\itview$}{
						\assign{\MSGOB[j]}{\VOB[j]}\;\label{protocol:consensus:state:msgob}
					}
					\If{$\VTA[j].\itview > \MSGTA[j].\itview$}{
						\assign{\MSGTA[j]}{\VTA[j]}\;\label{protocol:consensus:state:msgta}
					}
					\If{$\VTB[j].\itview > \MSGTB[j].\itview$}{
						\assign{\MSGTB[j]}{\VTB[j]}\;\label{protocol:consensus:state:msgtb}
					}
        }
      }
		\end{algorithm}
				\end{minipage}
		\hspace{-5px}
&
		\hspace{-12px}
		\begin{minipage}[t]{7.5cm}
		\vspace{-14px}
		\begin{algorithm}[H]
      \DontPrintSemicolon
      \SetAlgoNoLine
      \SetInd{0.5em}{0.5em}

			\SubAlgo{\when $\phase=\ENTERED \land \leader(\view) =
                          p_i \land {}$\\
			 \nonl $|\{p_j\ |\ p_j \in \PP \land \MSGOB[j].\itview = \view\}| > \frac{n}{2}$}{
				\label{protocol:consensus:proposed}
      	\assign{Q}{\{p_j\ |\ p_j \in \PP \land \MSGOB[j].\itview = \view\}}\;
        \uIf{$\forall p_j.\ p_j \in Q {\implies} \MSGOB[j].\itval = \bot$}{\label{protocol:consensus:all_bot}
      		\lIf{$\pval = \bot$}{\textbf{return}}\label{protocol:consensus:return}
      		$\assign{\MSGTA[i]}{(\view,\pval)}$\;\label{protocol:consensus:msgta1}
				}\Else{
					{\bf let $p_j \in Q$ be such that}
                                        \linebreak $\phantom \ \ \ \MSGOB[j].\itval \neq \bot \land \linebreak
					\phantom \ \ \ \forall p_k \,{\in}\, Q.\, \MSGOB[k].\itcview \leq \MSGOB[j].\itcview$\;\label{protocol:consensus:pick}
					$\assign{\MSGTA[i]}{(\view,\MSGOB[j].\itval)}$\;\label{protocol:consensus:msgta2}
				}
				$\assign{\phase}{\PROPOSED}$\;\label{protocol:consensus:phase}
      }

	    \smallskip
			\smallskip
      \smallskip

      \SubAlgo{\when $\phase \in \{\ENTERED,\PROPOSED\} \land{}$\\
	    	\nonl $p_l=\leader(\view) \land \MSGTA[l].\itview =
            \view$}{
        \label{protocol:consensus:accepted}
				$\assign{(\cview,\val)}{\MSGTA[l]}$\;\label{protocol:consensus:accepted:cview_val}
				$\assign{\MSGTB[i]}{(\cview,\val)}$\;\label{protocol:consensus:accepted:msgtb}
				$\assign{\phase}{\ACCEPTED}$\;\label{protocol:consensus:accepted:phase}
      }

      \smallskip
      \smallskip
      \smallskip

	    \SubAlgo{\when $\exists v,x.\, v \geq \view \land{}$\\
	    	\nonl $|\{p_j\ |\ p_j \in \PP \land \MSGTB[j] = (v,x)\}| > \frac{n}{2}$}{
        \label{protocol:consensus:decided}
	    	\assign{\val}{x}\;\label{protocol:consensus:decided:val}
                $\stoptimer(\timerdecision)$\;\label{protocol:consensus:decided:stoptimer}
	    	\assign{\phase}{\DECIDED}\;\label{protocol:consensus:decided:phase}
              }
		\end{algorithm}
				\end{minipage}
	\end{tabular}
	\caption{Consensus protocol at a process $p_i$.}
	\label{fig:consensus_protocol}
\end{figure}

In Figure~\ref{fig:consensus_protocol} we present a consensus protocol in the
model $\MCU$ (partially synchronous / eventually reliable / flaky) that
validates Theorem~\ref{thm:upper-consensus}. The protocol is a variation of
single-decree Paxos~\cite{paxos} where liveness is ensured with the help of a
view synchronizer.
Thus, the protocol works in a succession of views produced by the synchronizer.
Each view $v$ has a fixed leader $\leader(v) = p_{((v-1) \mod n)+1}$
responsible for proposing a value to the other processes, which vote on the
proposal. Processes monitor the leader's behavior and ask the synchronizer to
advance to another view if they suspect that the leader is faulty or has a bad
connectivity.

\subparagraph{State and communication.}
A process stores its current view in a variable $\view$. A variable $\phase$ tracks
the progress of the process through different phases of the protocol. The
initial proposal is stored in $\pval$
(line~\ref{protocol:consensus:propose:assign}). The process also maintains the
last proposal it accepted from a leader in $\val$, and the view in which this
happened in $\cview$.

Processes exchange messages analogous to the $\mathsf{1B}$, $\mathsf{2A}$ and
$\mathsf{2B}$ messages from Paxos, each tagged with the view where the
message was issued. There is no analog of $\mathsf{1A}$ messages, because leader
election is controlled by the synchronizer. Since correct processes may not be
directly connected by correct channels, each process has to forward the
information it receives from others. Since correct channels in $\MCU$
are only {\em eventually} reliable, this furthermore has to be repeated
periodically. If implemented naively, this would require unbounded space to
store all the messages that need to be forwarded. Instead, we observe that it is
sufficient to store, for each message type and sender, only the message of this
type received from this sender with the highest view. These are stored in the
arrays $\MSGOB$, $\MSGTA$ and $\MSGTB$.

For simplicity, the pseudocode in Figure~\ref{fig:consensus_protocol} separates
computation from communication. Most of the handlers do not send messages, but
instead just modify the arrays $\MSGOB$, $\MSGTA$ and $\MSGTB$. Then instead of
sending individual $\mathsf{1B}$, $\mathsf{2A}$ or $\mathsf{2B}$ messages like
in Paxos, a process periodically sends whole arrays $\MSGOB$, $\MSGTA$ and
$\MSGTB$ in one big $\STATE$ message (line
\ref{protocol:consensus:periodically:state}). Upon receipt of a $\STATE$ message
(line \ref{protocol:consensus:state}), a process incorporates the information
received into its arrays $\MSGOB$, $\MSGTA$ and $\MSGTB$, keeping entries with
the highest view. This mechanism ensures that information is propagated between
processes that do not have direct connectivity while using only bounded space.

\subparagraph{Normal protocol operation.}  When the synchronizer tells a process to
enter a new view $v$ (line \ref{protocol:consensus:newview}), the process sets
$\view = v$ and writes the information about the last value it accepted into
its entry in the $\MSGOB$ array. This information will be propagated to the
leader of the view as described above.
A leader waits until its $\MSGOB$ array contains a majority of entries
corresponding to its view (line~\ref{protocol:consensus:proposed}). Based on
these, the leader computes its proposal and stores it into its entry of
the $\MSGTA$ array. The computation is done similarly to Paxos. If some process
has previously accepted a value, the leader picks the value accepted in the
maximal view (line~\ref{protocol:consensus:pick}). Otherwise, the leader is free
to propose its own value. If $\propose()$ has already been invoked at the
leader, it selects $\pval$ (line~\ref{protocol:consensus:msgta1}). If this has
not happened yet, the leader skips its turn
(line~\ref{protocol:consensus:return}).

Each process waits until its $\MSGTA$ array contains a proposal by the leader of
its view (line~\ref{protocol:consensus:accepted}). The
process then accepts the proposal by updating its $\val$ and $\cview$. It also
notifies all other processes about this by storing the information about the
accepted value into its entry of the $\MSGTB$ array.
Finally, once a process has a majority of matching entries in its $\MSGTB$ array
(line \ref{protocol:consensus:decided}), it knows that the decision has been
reached, and it sets $\phase =\DECIDED$. If there is an ongoing $\propose()$
invocation, the condition at line \ref{protocol:consensus:propose:condition} is
then satisfied and the process returns the decision to the client.

\subparagraph{Triggering view changes.}
We now describe when a process invokes $\adv()$, which is key to ensuring
liveness. This occurs either on startup (line~\ref{protocol:start:adv}) or when
the process suspects that the current leader is faulty or has bad
connectivity. To this end, when a process enters a view, it sets a
$\timerdecision$ for a duration $\durdecision$ (line
\ref{protocol:consensus:newview:starttimer}) and stops the timer when a decision is
reached (line \ref{protocol:consensus:decided:stoptimer}). If the timer expires
before this, the process invokes $\adv$ (line
\ref{protocol:consensus:timerexpires:adv}). A process may wrongly suspect a good
leader if the $\durdecision$ is initially set too low with respect to the
message delay $\delta$, unknown to the process. To deal with this, a process
increases $\durdecision$ whenever the timer expires (line
\ref{protocol:consensus:timerexpires:durdecision}).

\subparagraph{Correctness.}  It remains to prove that the algorithm in
Figure~\ref{fig:consensus_protocol} validates Theorem~\ref{thm:upper-consensus}.
The proof of the safety properties of consensus is virtually
identical to that of Paxos~\cite{paxos}. Hence, we focus on proving
liveness. Here we present a proof sketch that highlights the use of the
synchronizer specification and defer the proofs of auxiliary lemmas to
\tr{\ref{sec:app-consensus}}{\appconsensus}.

Fix a failure pattern $f$ and let $\CQ$ be the corresponding connected core
guaranteed to exist by the assumptions of Theorem~\ref{thm:upper-consensus}.  We
prove liveness by showing that the protocol establishes properties reminiscent
of those of failure detectors~\cite{CT96}.  First, similarly to their {\em
  completeness} property, we prove that every correct process eventually
attempts to advance from a view where no progress is possible (e.g., because the
leader is faulty or has insufficient connectivity). We say that a process $p_i$
{\em decides} in a view $v$ if it executes
line~\ref{protocol:consensus:decided:phase} while having $\view = v$.
\begin{lemma}
	\label{lemma:not_ok_adv}
	If a correct process $p_i$ enters a view $v$, never decides in $v$ and
  never enters a view higher than $v$, then $p_i$ eventually invokes
  $\adv$ in $v$.
\end{lemma}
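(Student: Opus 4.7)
The plan is to reduce the lemma to a straightforward timer argument: entering view $v$ starts a decision timer whose only stopping condition is never met under the hypotheses, so the timer must eventually expire and trigger the $\adv$ call required by the statement.

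First, I would note that because $p_i$ enters $v$, the $\newview(v)$ handler at line~\ref{protocol:consensus:newview} fires, and in particular line~\ref{protocol:consensus:newview:starttimer} executes $\starttimer(\timerdecision, \durdecision)$ for some finite duration. By the synchronizer's Monotonicity property together with the hypothesis that $p_i$ never enters a view higher than $v$, the $\newview$ handler is not invoked again at $p_i$ after this moment, so no later call to $\starttimer(\timerdecision, \cdot)$ supersedes this one and the process's $\view$ variable remains equal to $v$ thereafter.

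Next, I would rule out premature stopping of the timer by inspecting every handler that touches $\timerdecision$. The only $\stoptimer(\timerdecision)$ call in the protocol is at line~\ref{protocol:consensus:decided:stoptimer}, which sits in the decided handler (line~\ref{protocol:consensus:decided}) in the same atomic block as the assignment $\phase \leftarrow \DECIDED$ at line~\ref{protocol:consensus:decided:phase}. Since $\view = v$ from the entry to $v$ onward, any execution of this block would by definition be a decision in $v$, which the hypothesis forbids. Hence the timer is never stopped after being set.

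Finally, a timer with a finite duration armed at a correct process must eventually fire: partial synchrony permits unbounded clock drift before $\GST$ but does not prevent the timer from expiring in finite (local) time. When $\timerdecision$ expires, the handler at line~\ref{protocol:consensus:timerexpires} runs and invokes $\adv()$ at line~\ref{protocol:consensus:timerexpires:adv}, and since $\view = v$ at that moment this is the required invocation of $\adv$ in $v$. The only mild obstacle is making sure no hidden code path touches $\timerdecision$ or updates $\view$ unexpectedly; a quick audit of the three handlers that mention either, combined with Monotonicity, closes the argument.
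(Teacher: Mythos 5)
Your proof is correct and follows essentially the same route as the paper's: the timer started on entering $v$ can neither be superseded by a later $\newview$ (by Monotonicity and the hypothesis that $p_i$ never enters a higher view) nor stopped at line~\ref{protocol:consensus:decided:stoptimer} (which would constitute a decision in $v$), so it must expire and trigger $\adv$ at line~\ref{protocol:consensus:timerexpires:adv}. Your version just spells out the case analysis a bit more explicitly.
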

Informally, the lemma holds because each process monitors the progress of a view
using $\timerdecision$.

Our next lemma is similar to the {\em eventual accuracy} property of failure
detectors. It shows that if the timeout values are high enough, then eventually
any process in $\CQ$ that enters a view where progress is possible (the leader
is correct and has sufficient connectivity) will not attempt to advance from it.
Formally, let $\diameter(\CQ)$ be the longest distance in the graph
$\RG{\GG}{f}$ between two vertices in $\CQ$. Also, let $\V$ and $d$ be the view
and the time duration for which Bounded Entry holds
(Figure~\ref{fig:synchronizer_specification}), and let $\durdecision_i(v)$ be
the value of $\durdecision$ at the process $p_i$ while in view $v$.
\begin{lemma}
	\label{lemma:ok_stay_all}
	Let $v \geq \V$ be a view such that $\leader(v) \in \CQ$,
        $\EF{\CQ}(v) \geq \GST$ and $\leader(v)$ invokes $\propose$ no later
        than $\EF{\CQ}(v)$.  If at each process $p_i \in \CQ$ that enters $v$ we
        have $\durdecision_i(v) > d + 3 (\delta + \rho) \diameter(\CQ)$, then no
        process in $\CQ$ invokes $\adv$ in $v$.
\end{lemma}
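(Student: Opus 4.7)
I plan to argue by contradiction. Suppose some process in $\CQ$ invokes $\adv$ in $v$, and let $p^* \in \CQ$ be the first such process, with $t^* = A_{p^*}(v) = \AF{\CQ}(v)$. Since $\adv$ in $v$ is only invoked when $\timerdecision$ expires (line~\ref{protocol:consensus:timerexpires:adv}) and that timer is started when $p^*$ enters $v$ (line~\ref{protocol:consensus:newview:starttimer}), we have
\[
t^* \;=\; E_{p^*}(v) + \durdecision_{p^*}(v) \;\geq\; \EF{\CQ}(v) + \durdecision_{p^*}(v) \;>\; \EF{\CQ}(v) + d + 3(\delta+\rho)\diameter(\CQ).
\]
In particular $\AF{\CQ}(v) > \EF{\CQ}(v) + d$, which is exactly the hypothesis of Bounded Entry (Figure~\ref{fig:synchronizer_specification}); combined with $v \geq \V$, this gives that every $p_i \in \CQ$ has entered $v$ by $\EF{\CQ}(v)+d$, and none of them enters a view larger than $v$ before $t^*$ (by Validity applied inductively, since any entry into a view $>v$ would require an earlier $\adv$ in $v$ by some process in $\CQ$).

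The main work will be to establish a one-hop propagation bound and chain it three times. The key observation is that, because all correct processes periodically broadcast their $\STATE$ (line~\ref{protocol:consensus:periodically}) and on reception merge per-sender entries with the strictly higher view (lines~\ref{protocol:consensus:state:msgob}--\ref{protocol:consensus:state:msgtb}), any piece of information of the form $\MSGOB[p_j]$, $\MSGTA[p_j]$, or $\MSGTB[p_j]$ tagged with view $v$ that is stored at a correct process $p$ at some time $t \geq \GST$ is stored at every out-neighbor of $p$ in $\RG{\GG}{f}$ by time $t + \delta + \rho$: the next periodic broadcast from $p$ happens within $\rho$, arrives within $\delta$ by partial synchrony, and then dominates any strictly lower-view entry already stored. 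Iterating this argument along a shortest path in $\RG{\GG}{f}$ (whose intermediate vertices need not lie in $\CQ$, but still forward the latest-view entries), I will get that any such datum travels between any two processes in $\CQ$ in at most $(\delta+\rho)\diameter(\CQ)$ time, provided the starting time is at or after $\GST$ (which is guaranteed by $\EF{\CQ}(v) \geq \GST$).

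Given this, I will show that by time $\EF{\CQ}(v)+d+3(\delta+\rho)\diameter(\CQ)$ every process in $\CQ$ has decided in $v$, via a three-phase chain. Phase~1: by $\EF{\CQ}(v)+d$ every $p_i \in \CQ$ has written $\MSGOB[i]$ tagged with $v$ (line~\ref{protocol:consensus:newview:msg1b}); after one propagation interval these have reached $\leader(v) \in \CQ$, giving it $|\CQ| > n/2$ matching $\MSGOB$ entries, so the guard at line~\ref{protocol:consensus:proposed} fires; since $\leader(v)$ called $\propose$ by $\EF{\CQ}(v)$ it has $\pval \neq \bot$, so it falls through to line~\ref{protocol:consensus:msgta1} or \ref{protocol:consensus:msgta2} and writes $\MSGTA[\leader(v)]$ tagged with $v$. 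Phase~2: one further propagation interval delivers $\MSGTA[\leader(v)]$ to every $p_i \in \CQ$, triggering line~\ref{protocol:consensus:accepted} and making each such $p_i$ write $\MSGTB[i]$ for view $v$. Phase~3: one final propagation interval disseminates these $\MSGTB$ entries among $\CQ$, so every $p_i \in \CQ$ sees $|\CQ| > n/2$ matching $\MSGTB$ entries and enters $\DECIDED$ at line~\ref{protocol:consensus:decided:phase}, stopping the timer at line~\ref{protocol:consensus:decided:stoptimer}. Since this decision time is strictly smaller than $E_{p^*}(v)+\durdecision_{p^*}(v)=t^*$, no timer in $\CQ$ can fire before decision, contradicting the existence of $p^*$.

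The hard part will be the propagation lemma: one has to be careful that intermediate processes on the path may be outside $\CQ$ and may even call $\adv$ themselves, but because the $\STATE$ handler only keeps per-sender entries with the strictly higher view, the relevant entries tagged with $v$ are never overwritten by the forwarding process, and the argument goes through without needing the intermediate processes to be in a particular view. The rest is a careful accounting of the three $(\delta+\rho)\diameter(\CQ)$ intervals plus the initial $d$ from Bounded Entry, which together match the timeout bound in the hypothesis.
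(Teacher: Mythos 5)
Your proposal is correct and follows essentially the same route as the paper: the paper proves the statement via the slightly more general Lemma~\ref{lemma:ok_stay} (no process with a large enough timeout can be the \emph{first} in $\CQ$ to invoke $\adv$ in $v$), whose proof is exactly your argument --- pin all of $\CQ$ in $v$ via Bounded Entry, Validity and the no-skipping lemma, then chain a per-hop $(\delta+\rho)$ gossip bound (Proposition~\ref{proposition:fwd}) through the three phases $\MSGOB \to \MSGTA \to \MSGTB$ to force a decision by $\EF{\CQ}(v)+d+3\Delta$. Two cosmetic remarks: since $\CQ$ is a strongly connected component, intermediate vertices on paths between members of $\CQ$ necessarily lie in $\CQ$ (your caveat is moot), and your ``strictly higher view dominates'' step should also note that a received entry with view equal to $v$ must coincide with the stored one because at most one value per sender is generated per view.
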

Intuitively, the lemma holds because Bounded Entry ensures that all processes
from $\CQ$ will enter $v$ promptly; then since the timeouts are high enough,
processes will have sufficient time to exchange the messages needed to reach a
decision and stop the timers.

\begin{proof}[Proof sketch for Theorem~\ref{thm:upper-consensus}]
  By contradiction, assume there exists $f \in \FS$, $p_j \in \tau(f) = \CQ$ and
  an $f$-compliant fair execution of the algorithm in
  Figure~\ref{fig:consensus_protocol} such that $p_j$ invokes $\propose$ at a
  time $t$, but the operation never returns. Using Progress and
  Lemma~\ref{lemma:not_ok_adv}, we first prove that in this case the protocol
  keeps moving through views forever:
  \setcounter{myclaim}{0}
\begin{myclaim}
\label{myclaim:all_views_entered}
Every view is entered by some process in $\CQ$.
\end{myclaim}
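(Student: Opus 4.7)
The plan is to prove the claim by showing that the set $V_\CQ := \{v \geq 1 : \text{some process in } \CQ \text{ enters } v\}$ equals all of $\{1, 2, \ldots\}$. I would establish this through three observations: $V_\CQ$ is nonempty, downward closed, and unbounded above.

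For \emph{downward closure}: if $v \in V_\CQ$ with $v \geq 2$, so that some $p_i \in \CQ$ enters $v$, Validity yields that some $p_k \in \CQ$ attempted to advance from $v-1$, which requires $p_k$ to have been in view $v-1$; hence $v-1 \in V_\CQ$. For \emph{nonemptiness}: every correct process invokes $\adv$ on startup (line~\ref{protocol:start:adv}), so every member of $\CQ$ does; since $|\CQ| > n/2$, Startup gives that some process in $\CQ$ enters view $1$, so $1 \in V_\CQ$.

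For \emph{unboundedness}, suppose toward contradiction $v^\ast := \max V_\CQ$ is finite. I would first argue, using the standing assumption that $p_j$'s $\propose$ never returns, that no $p_k \in \CQ$ ever decides: otherwise, since $\CQ$ is strongly connected by eventually reliable channels in $\RG{\GG}{f}$, the periodic gossip of $\STATE$ messages (line~\ref{protocol:consensus:periodically:state}) would propagate a deciding majority of matching $\MSGTB$ entries through $\CQ$ to $p_j$, making $p_j$ satisfy the decision guard and return. Then for each $p_i \in \CQ$ that enters $v^\ast$: by maximality $p_i$ never enters a view higher than $v^\ast$; by the subsidiary argument $p_i$ does not decide in $v^\ast$; so Lemma~\ref{lemma:not_ok_adv} forces $p_i$ to eventually invoke $\adv$ in $v^\ast$. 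Taking $P = \CQ$ (with $|P| > n/2$), Progress then yields that some process in $\CQ$ enters $v^\ast + 1$, contradicting the maximality of $v^\ast$.

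The main obstacle is the subsidiary gossip-propagation argument in the deciding case: because $\MSGTB[k]$ is only ever overwritten by entries with strictly higher view, one cannot naively replay $p_k$'s deciding snapshot at $p_j$, since some entries may have drifted to higher views by the time $p_j$ sees them. The careful treatment invokes Paxos-style safety to ensure that any accept in a higher view carries the same decided value, so a matching majority eventually accumulates in $p_j$'s $\MSGTB$ array in spite of the view shifts. This step is naturally packaged as an auxiliary lemma in the full proof.
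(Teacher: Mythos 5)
Your architecture --- Startup for nonemptiness, Validity for downward closure (a one-step rephrasing of the paper's Lemma~\ref{lemma:yqh}, that $\CQ$ does not skip views), and the maximal-view contradiction combining a no-decision subargument with Lemma~\ref{lemma:not_ok_adv} and Progress --- is exactly the paper's proof. The only divergence is in how you would justify the no-decision subargument, and there you have misdiagnosed the situation: the ``drift'' you worry about cannot occur, while the Paxos-safety repair you propose would not work if it did. Under the assumption that $v^\ast$ is the maximal view entered by any member of $\CQ$, \emph{no} process, inside or outside $\CQ$, can ever produce a $\MSGTB$ entry with view $w > v^\ast$: such an entry originates at line~\ref{protocol:consensus:accepted:msgtb} from a process holding an entry $\MSGTA[l]$ with view $w$, which in turn originates from $\leader(w)$ passing the guard at line~\ref{protocol:consensus:proposed}, which requires more than $\frac{n}{2}$ entries of $\MSGOB$ with view $w$, each created by a process entering $w$ at line~\ref{protocol:consensus:newview:msg1b}; any such majority intersects $\CQ$, contradicting maximality. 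Hence the deciding majority of identical $(v^\ast,x)$ entries is frozen, and Proposition~\ref{proposition:fwd} transports it verbatim to $p_j$ (after waiting until $\GST$ if necessary), which is exactly how the paper argues it. Conversely, the auxiliary lemma you sketch would be too weak even where drift possible: the guard at line~\ref{protocol:consensus:decided} demands more than $\frac{n}{2}$ entries equal to the \emph{same pair} $(v,x)$, so knowing that accepts in higher views carry the same value $x$ does not produce a matching majority --- you would additionally need a majority of entries to converge on a single view, which nothing guarantees. Keep your skeleton, but replace the flagged auxiliary lemma with the maximality argument above.
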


We next prove the following:
\begin{myclaim}
\label{myclaim:exp_inf_times}
Every process in $\CQ$ executes the timer expiration handler at
line \ref{protocol:consensus:timerexpires} infinitely often.
\end{myclaim}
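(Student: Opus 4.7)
My plan is to argue by contradiction. Suppose some $p_i \in \CQ$ executes the handler at line~\ref{protocol:consensus:timerexpires} only finitely often. Since $\adv$ is invoked only on startup (line~\ref{protocol:start:adv}) and within this handler (line~\ref{protocol:consensus:timerexpires:adv}), there is a finite time $t^*_i$ after which $p_i$ never calls $\adv$, and its $\durdecision_i$---incremented only at line~\ref{protocol:consensus:timerexpires:durdecision}---stabilizes at a bounded value. I let $S \subseteq \CQ$ denote the set of all such $\CQ$ processes; by assumption $p_i \in S$, so $S \neq \emptyset$. For every $p_k \in \CQ \setminus S$ the handler fires infinitely often, so $\durdecision_k$ grows without bound.

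The plan is to engineer a ``good'' view $v^*$ in which no $\CQ$ process invokes $\adv$, then leverage $v^*$ to force a decision at $p_j$ and contradict the standing assumption that its $\propose$ never returns. Since $p_j \in \CQ$ is the round-robin leader $\leader(v) = p_{((v-1)\bmod n)+1}$ of infinitely many views, and Claim~\ref{myclaim:all_views_entered} guarantees $\EF{\CQ}(v)$ is defined for every $v$ and diverges, I would pick $v^* \geq \V$ so that (i) $\leader(v^*) = p_j$, (ii) $\EF{\CQ}(v^*) \geq \GST$ and $\EF{\CQ}(v^*) \geq t$, where $t$ is the time at which $p_j$ invokes $\propose$, (iii) $\EF{\CQ}(v^*) \geq t^*_k$ for every $p_k \in S$, and (iv) $\durdecision_k(v^*) > d + 3(\delta + \rho)\diameter(\CQ)$ for every $p_k \in \CQ \setminus S$ that enters $v^*$. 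Such a $v^*$ exists because $\EF{\CQ}(v)$ grows unboundedly with $v$ and the timeouts in $\CQ \setminus S$ grow with the number of handler firings, which is unbounded.

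The main obstacle is to adapt Lemma~\ref{lemma:ok_stay_all} to this mixed regime in which $\CQ \setminus S$ has large timeouts while $S$ has bounded ones but is known not to $\adv$ past $\EF{\CQ}(v^*)$. In the proof of Lemma~\ref{lemma:ok_stay_all}, the large-timeout hypothesis at each $\CQ$ process is used only to rule out premature $\timerdecision$ expirations that would trigger an $\adv$ before $v^*$ is driven to decision; for processes in $S$ this role is played directly by condition (iii), which ensures they invoke no $\adv$ at all while in $v^*$. I expect the same reasoning to yield the conclusion that no $p \in \CQ$ invokes $\adv$ in $v^*$, and hence (by Validity of the synchronizer) no process enters any view $> v^*$ either.

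Once that is in hand, the standard consensus flow takes over within $v^*$: $p_j$ accumulates a majority of matching $\MSGOB$ entries and triggers line~\ref{protocol:consensus:proposed}, populating $\MSGTA[j]$ at line~\ref{protocol:consensus:msgta1} or~\ref{protocol:consensus:msgta2}; other $\CQ$ processes accept and set $\MSGTB$ at line~\ref{protocol:consensus:accepted:msgtb}; the periodic $\STATE$ broadcasts (line~\ref{protocol:consensus:periodically:state}) and the merge-by-max-view rule (line~\ref{protocol:consensus:state:msgtb}) propagate a majority of matching $\MSGTB$ entries back to $p_j$ through the strongly connected $\CQ$ under eventually reliable channels; and $p_j$'s decided guard at line~\ref{protocol:consensus:decided} fires. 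This releases the wait at line~\ref{protocol:consensus:propose:condition}, so $p_j$'s $\propose$ returns---contradicting the assumption. Hence Claim~\ref{myclaim:exp_inf_times} must hold.
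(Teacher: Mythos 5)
Your proof is correct, and its core is the same as the paper's: split $\CQ$ into the processes that fire the handler at line~\ref{protocol:consensus:timerexpires} finitely often (the paper's $\CF$, your $S$) and infinitely often ($\CI$), note that $\durdecision$ stabilizes on the former and grows unboundedly on the latter, and pick a sufficiently late view $v^*\geq\V$ led by $p_j$ in which the former are permanently silent and the latter have $\durdecision > d + 3(\delta+\rho)\diameter(\CQ)$. Where you diverge is the endgame. The paper observes that by Claim~\ref{clm:all_views_entered} and Validity \emph{some} process in $\CQ$ must call $\adv$ in that view, and kills the first such caller: it cannot be in $\CF$ (those processes no longer fire the timer) and it cannot be in $\CI$ by Lemma~\ref{lemma:ok_stay}, a per-process strengthening of Lemma~\ref{lemma:ok_stay_all} proved in the appendix, stating that a $\CQ$-process with a large enough timeout is never the \emph{first} in $\CQ$ to invoke $\adv$ in such a view. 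You instead establish that \emph{no} $\CQ$-process calls $\adv$ in $v^*$ and then drive the view to a decision at $p_j$, contradicting the outer assumption that its $\propose$ never returns. Your route works: your condition (iii) correctly replaces the large-timeout hypothesis for $S$, and ruling out a first $\adv$-caller in $\CQ\setminus S$ is exactly the content of Lemma~\ref{lemma:ok_stay} --- so your instinct that Lemma~\ref{lemma:ok_stay_all} needs adapting to a mixed regime is right, and the paper performs precisely this adaptation by factoring out the per-process lemma. Two remarks. First, once ``nobody in $\CQ$ calls $\adv$ in $v^*$'' is in hand, Claim~\ref{clm:all_views_entered} plus Validity already give a contradiction (view $v^*+1$ is entered, yet $\AF{\CQ}(v^*)$ is undefined), so your subsequent decision-propagation argument, while sound, duplicates reasoning that the stability step has already carried out inside its own contradiction; the paper's factoring avoids proving ``the view reaches a decision'' twice. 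Second, your choice of $v^*$ tacitly uses that $\EF{\CQ}(v)\to\infty$ and that $\durdecision$ is non-decreasing so that the threshold, once crossed, stays crossed --- both true, but worth a sentence.
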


Since a process increases $\durdecision$ every time $\timerdecision$
expires, by Claim~\ref{myclaim:exp_inf_times} all processes will eventually 
have $\durdecision > d + 3 (\delta + \rho) \diameter(\CQ)$. Since leaders rotate
round-robin, by Claim 1 there will be infinitely many views led by $p_j$.
Hence, there exists a view $v_1 \geq \V$ led by $p_j$ such that
$\EF{\CQ}(v_1) \geq \max\{\GST,t\}$ and for any process $p_i \in \CQ$ that
enters $v_1$ we have
$\durdecision_i(v_1) > d + 3 (\delta + \rho) \diameter(\CQ)$.  Then by
Lemma~\ref{lemma:ok_stay_all}, no process in $\CQ$ calls $\adv$ in $v_1$. On the
other hand, by Claim 1 some process in $\CQ$ enters $v_1 + 1$.  Then by
Validity, some process in $\CQ$ calls $\adv$ in $v_1$, which is a contradiction.
\end{proof}

\section{Possibility and Impossibility of Classical Consensus via $\mathbf{\Omega}$}
\label{sec:cons-via-omega}

We now present some interesting consequences of our results.
It is well-known that the failure detector $\Omega$~\cite{CT96,CT96-weakest} 
is sufficient for implementing wait-free consensus resilient to $\lfloor \frac{n-1}{2} \rfloor$ failures 
in non-partitionable systems under crash~\cite{CT96,paxos},
crash/recovery~\cite{crash-rec-consensus-AguileraCT00},  
or general omission~\cite{paxos-with-omissions,paxos} failures.
We now investigate if this is still the case
under flaky channels and limited connectivity. Below we demonstrate that the answer
depends on the amount of connectivity 
in the underlying network, as established by our lower and upper bounds.

\subparagraph{Lower bound.}  We first show that, on the negative side, there are
some weakly synchronous environments where $\Omega$ is implementable, but
obstruction-free consensus is impossible even if at most one process can
crash. One such environment is the system $S$ of Aguilera et
al.~\cite{aguilera-podc03-journal}, where in every execution, all channels can
be flaky except those emanating from an a priori unknown correct process
(\emph{timely source}); the latter channels are required to be eventually
reliable and timely. In our framework, $S$ is represented by the set of
executions in $\MCU$ compliant with the following fail-prone system:
\[
  \F_S = \{(P, C) \, \mid\, |P| < n \wedge
  \exists p \in \PP \setminus P.\ \forall q \in \PP \setminus (P \cup \{p\}).\ (p, q) \not\in C\}.
\]

We now use Theorem~\ref{thm:k-consensus-lower} to prove that consensus is
impossible even in a stronger variant of $S$ where only up to a threshold $k$ of
processes are allowed to fail. Formally, for $n$ and $k$ such that $0 < k < n$,
let $\F_{S,k} = \F_S \cap \{(P, C)\, \mid\, |P|\le k\}$. Then $\F_{S,k}$ is a
$k$-fail-prone system.
\begin{theorem}
  Let $n$ and $k$ be such that $0 < k < n$, and 
  $\tau: \F_{S,k} \rightarrow 2^\PP$ be a termination mapping such that
  $\forall f \in \F_{S,k}.\ \tau(f)\neq\emptyset$. Then no algorithm 
  can implement an $(\F_{S,k}, \tau)$-obstruction-free consensus
  in $\MCU$.
\end{theorem}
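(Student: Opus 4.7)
The plan is to derive a contradiction by first observing that an $(\F_{S,k},\tau)$-obstruction-free consensus algorithm over $\MCU$ is, \emph{a fortiori}, also $(\F_{S,k},\tau)$-obstruction-free over $\MCL$, then invoking Theorem~\ref{thm:k-consensus-lower}, and finally exhibiting a single $f \in \F_{S,k}$ for which $\RG{\GG}{f}$ has no strongly connected component of cardinality greater than $k$.

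The first step is a model-subsumption observation: a reliable channel is eventually reliable with stabilization time $0$, and a disconnected channel is a flaky channel that has chosen to drop every message. Hence every $\MCL$-compliant fair execution under a failure pattern $f$ is also an $\MCU$-compliant fair execution under the same $f$, and any guarantee that an algorithm provides over all $\MCU$-compliant fair executions automatically holds over all $\MCL$-compliant ones. In particular, the hypothetical algorithm is $(\F_{S,k},\tau)$-obstruction-free over $\MCL$ and still meets Agreement and Validity there. Since $\F_{S,k}$ is $k$-fail-prone by construction and $\tau(f) \neq \emptyset$ on all of $\F_{S,k}$, Theorem~\ref{thm:k-consensus-lower} will then force, for every $f \in \F_{S,k}$, the existence of a strongly connected component of $\RG{\GG}{f}$ that contains $\tau(f)$ and has cardinality strictly greater than $k$.

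The second step is to construct a single $f \in \F_{S,k}$ violating that conclusion. I would pick any $P \subseteq \PP$ with $|P| = k$ (possible since $0 < k < n$), any $p \in \PP \setminus P$, and take
\[
C = \{(q, r) \mid q, r \in \PP \setminus P \,\land\, q \neq p\},
\]
the set of all channels between correct processes except those emanating from $p$. The failure pattern $f = (P, C)$ lies in $\F_S$ because $p$ serves as a timely source---every outgoing channel from $p$ to another correct process is absent from $C$---and $|P| = k$ places $f$ in $\F_{S,k}$. In $\RG{\GG}{f}$, the only surviving edges among the correct processes are those emanating from $p$; no correct process other than $p$ has any outgoing edge, and $p$ has no incoming edge from a correct process. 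Hence every strongly connected component of $\RG{\GG}{f}$ is a singleton, so any SCC containing an element of the nonempty set $\tau(f)$ has cardinality $1 \leq k$, contradicting the conclusion of Theorem~\ref{thm:k-consensus-lower}.

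The main obstacle I anticipate is making the $\MCU$-to-$\MCL$ subsumption airtight, i.e., verifying that the $\MCL$ fair executions under a given failure pattern embed into the $\MCU$ fair executions under the same pattern while preserving both safety and liveness obligations. Once that is in hand, the combinatorial construction of the ``star-shaped'' failure pattern inside $\F_{S,k}$ and the singleton-SCC check are essentially immediate.
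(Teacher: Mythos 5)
Your proposal is correct and follows essentially the same route as the paper: reduce from $\MCU$ to the stronger model $\MCL$, invoke Theorem~\ref{thm:k-consensus-lower}, and exhibit a ``star-shaped'' failure pattern in $\F_{S,k}$ whose surviving graph has only singleton strongly connected components. The only differences are cosmetic: the paper takes $P=\emptyset$ where you take $|P|=k$ (both work), and the paper additionally notes via a partitioning argument that $k\le\lfloor\frac{n-1}{2}\rfloor$, which your direct comparison of the singleton component against the bound $>k\ge 1$ renders unnecessary.
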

\begin{proof}
  By contradiction, assume such an algorithm exists.  First,
  a usual partitioning argument similar to that of Dwork et al.~\cite{dls} can
  be used to show that we must have $n>2$ and
  $k \le \lfloor \frac{n-1}{2} \rfloor$.  Then, since $\MCL$ is stronger than
  $\MCU$, Theorem~\ref{thm:k-consensus-lower} requires that for all
  $f \in \F_{S,k}$, $\RG{\GG}{f}$ must contain a strongly connected component
  of size $> \lfloor \frac{n-1}{2} \rfloor \ge 1$. However, for a failure
  pattern $f=(P,C)\in \F_{S,k}$ such that
\[
|P|=\emptyset \wedge
\exists p \in \PP.\ \forall q\in \PP\setminus (P \cup \{p\}).\
\forall r\in\PP\setminus P.\ (q, r) \in C,
\]
all strongly connected components of $\RG{\GG}{f}$ are of size $1$: a
contradiction.
\end{proof}

\subparagraph{Upper bounds.} 
On the positive side, we now show that strengthening connectivity assumptions
does enable consensus to be solved in many previously proposed 
models of weak synchrony, 
such as systems $S^+$ and $S^{++}$ of Aguilera et al.~\cite{aguilera-podc03-journal} and
those defined in~\cite{aguilera-podc04,dahlia-t-accessible,dahila-t-moving-source,antonio-intermittent-star}.
These papers show that $\Omega$ can be implemented in their respective models,
but (with an exception of~\cite{aguilera-podc04}) do not give a consensus algorithm.

To show that consensus is indeed possible in these models, 
we introduce an intermediate model $\MRUH$ in which all channels can be asynchronous
and flaky apart from those connecting an a priori unknown correct process 
(a \emph{hub}~\cite{aguilera-podc03-journal})
to all the other processes in \emph{both} directions; the latter channels are
required to be eventually reliable (but can still be asynchronous). 
In our framework, $\MRUH$ is represented by the set of executions of $\MRU$
compliant with the following fail-prone system:
\[
\F_{\MRUH} = \{(P, C)\, \mid\, |P| < n \wedge
\exists p \in \PP \setminus P.\ \forall q \in \PP \setminus (P \cup \{p\}).\ (p, q) \not\in C \wedge
(q, p) \not\in C\}.
\]
Thus, for all $f\in \F_{\MRUH}$, $\RG{\GG}{f}$ is strongly connected, 
and in the case when at most $\lfloor \frac{n-1}{2} \rfloor$ processes fail in $f$, 
$\RG{\GG}{f}$ is a connected core.
Then from Theorem~\ref{thm:upper-reg}, we get
\begin{corollary}
It is possible to implement a wait-free atomic register in $\MRUH$ if at most
$\lfloor \frac{n-1}{2} \rfloor$ processes can crash.
\label{thm:wait-free-atomic}
\end{corollary}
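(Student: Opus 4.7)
The plan is to derive the corollary directly from Theorem~\ref{thm:upper-reg} by exhibiting, for the relevant fail-prone system, a connected core in every failure pattern. Concretely, I would consider the fail-prone system
\[
\F' = \F_{\MRUH} \cap \{(P,C) \mid |P| \le \lfloor \tfrac{n-1}{2} \rfloor\},
\]
which captures exactly the hub model with at most $\lfloor \frac{n-1}{2} \rfloor$ crashes. My goal is to show that for every $f \in \F'$, $\RG{\GG}{f}$ contains a connected core, after which the corollary follows by instantiating Theorem~\ref{thm:upper-reg} on $\F'$ with the termination mapping $\tau(f) = \CQ_f$.

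The key step is the following observation. Fix $f=(P,C)\in\F'$. By the definition of $\F_{\MRUH}$ there exists a correct hub $p \in \PP \setminus P$ such that for every other correct process $q \in \PP \setminus (P \cup \{p\})$ the channels $(p,q)$ and $(q,p)$ both lie outside $C$, i.e., they are correct in $\RG{\GG}{f}$. Consequently, every correct $q$ can reach $p$ via $(q,p)$, and $p$ can reach every correct $q'$ via $(p,q')$; composing these two edges shows that any pair of correct processes is mutually reachable through $p$. Hence the whole set $\PP \setminus P$ is contained in a single strongly connected component of $\RG{\GG}{f}$. Since $|P|\le \lfloor \frac{n-1}{2} \rfloor$ yields $|\PP \setminus P|\ge \lceil \frac{n+1}{2}\rceil > \frac{n}{2}$, this component holds a majority of processes in $\GG$ and thus qualifies as a connected core $\CQ_f$.

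With this in hand, Theorem~\ref{thm:upper-reg} applied to $\F'$ and $\tau$ yields an $(\F',\tau)$-wait-free implementation of an atomic register over $\MRU$. Because $\MRUH$ is by definition the restriction of $\MRU$ to $\F_{\MRUH}$-compliant executions, and $\F' \subseteq \F_{\MRUH}$, the same algorithm is wait-free at the connected core in $\MRUH$ whenever at most $\lfloor \frac{n-1}{2} \rfloor$ processes crash, as claimed.

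I do not foresee a real obstacle here: the content is entirely a bookkeeping reduction to Theorem~\ref{thm:upper-reg}, with the only substantive work being the identification of $\PP \setminus P$ as a strongly connected component via the hub. The one place to be mildly careful is ensuring the majority inequality $|\PP\setminus P| > n/2$ under the floor bound on $|P|$, which holds because $\lfloor \frac{n-1}{2}\rfloor < n/2$ for every $n \ge 1$.
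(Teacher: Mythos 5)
Your proposal is correct and follows essentially the same route as the paper: the paper likewise observes that the hub makes $\RG{\GG}{f}$ strongly connected, that with at most $\lfloor \frac{n-1}{2} \rfloor$ crashes the correct processes form a majority and hence a connected core, and then invokes Theorem~\ref{thm:upper-reg}. Your write-up merely makes explicit the restriction to the sub-fail-prone system and the majority arithmetic, which the paper leaves implicit.
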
 
Since wait-free consensus 
tolerating any number of $<n$ process crashes can be implemented using atomic
registers and $\Omega$~\cite{LoH94}, Corollary~\ref{thm:wait-free-atomic} implies
\begin{corollary}
It is possible to implement a wait-free consensus in $\MRUH$ augmented
with $\Omega$ if at most
$\lfloor \frac{n-1}{2} \rfloor$ processes can crash.
\label{corol:wait-free-consensus}
\end{corollary}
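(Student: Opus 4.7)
The plan is to obtain the wait-free consensus by directly composing two ingredients: the wait-free atomic register construction already guaranteed by Corollary~\ref{thm:wait-free-atomic}, and the classical Lo--Hadzilacos reduction~\cite{LoH94}, which implements wait-free consensus out of atomic registers and the leader oracle $\Omega$ while tolerating any number $<n$ of process crashes. Since the statement explicitly assumes that $\MRUH$ is augmented with $\Omega$, the leader oracle is available for free; the only nontrivial component I need to supply is the underlying atomic register layer over $\MRUH$, and Corollary~\ref{thm:wait-free-atomic} gives precisely this under the assumed failure bound $\lfloor \frac{n-1}{2}\rfloor$.

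Concretely, I would instantiate for each process $p_i$ a single-writer/multi-reader atomic register object (or as many such objects as the Lo--Hadzilacos protocol requires) using the implementation underlying Corollary~\ref{thm:wait-free-atomic}, and then run the Lo--Hadzilacos algorithm on top of this register layer, using the $\Omega$ oracle where the algorithm queries a leader. Safety of consensus follows immediately from the safety (agreement and validity) established by \cite{LoH94} once the underlying objects are linearizable atomic registers, which Corollary~\ref{thm:wait-free-atomic} guarantees. For liveness, I need to verify that every operation on a register invoked by a correct process terminates, so that the Lo--Hadzilacos wait-free progress argument goes through without modification.

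The main obstacle is reconciling the notion of wait-freedom used in this paper (parameterized by a termination mapping $\tau$) with the unconditional wait-freedom at every correct process assumed by \cite{LoH94}. For this I would explicitly argue that in any $f$-compliant execution with $f\in\F_{\MRUH}$ and at most $\lfloor \frac{n-1}{2}\rfloor$ crashes, the remark already recorded just before Corollary~\ref{thm:wait-free-atomic} applies: $\RG{\GG}{f}$ is strongly connected and, under the failure bound, it is in fact a connected core. Hence the termination mapping $\tau(f)=\mathcal{S}_f$ used in Theorem~\ref{thm:upper-reg} coincides with the set of all correct processes, so every correct process is entitled to wait-free register operations. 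This is exactly the precondition under which the Lo--Hadzilacos transformation delivers wait-free consensus at every correct process.

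Finally, I would note that no additional assumption beyond those in $\MRUH$ and the $\Omega$ oracle is introduced by the composition: the register layer only needs channels to be eventually reliable (which holds in $\MRUH$), and $\Omega$ is provided by hypothesis. Thus the composed algorithm is a wait-free, crash-tolerant consensus in $\MRUH$ augmented with $\Omega$, proving the corollary.
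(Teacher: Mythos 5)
Your proposal is correct and follows essentially the same route as the paper, which likewise derives the corollary by composing the wait-free atomic register of Corollary~\ref{thm:wait-free-atomic} with the Lo--Hadzilacos transformation~\cite{LoH94} of atomic registers plus $\Omega$ into wait-free consensus. Your explicit check that, for $f\in\F_{\MRUH}$ with at most $\lfloor\frac{n-1}{2}\rfloor$ crashes, the connected core is the set of all correct processes (so the paper's $\tau$-parameterized wait-freedom matches the unconditional wait-freedom required by~\cite{LoH94}) is a useful detail the paper leaves implicit.
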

We now use this result to prove
\begin{corollary}
    \label{cor:upper-reg-hub}
    It is possible to implement wait-free consensus over the systems $S^+$ and $S^{++}$ 
    of~\cite{aguilera-podc03-journal} and the models 
    of~\cite{aguilera-podc04,dahlia-t-accessible,dahila-t-moving-source,antonio-intermittent-star},
    provided at most $\lfloor \frac{n-1}{2} \rfloor$ processes can crash.
\end{corollary}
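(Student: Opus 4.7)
The plan is to reduce each of the listed systems to $\MRUH$ augmented with $\Omega$ and then invoke Corollary~\ref{corol:wait-free-consensus}. This requires two ingredients: an implementation of $\Omega$ in each target system, and a demonstration that each target system is at least as strong, in terms of connectivity, as $\MRUH$.

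For the first ingredient, I would appeal to the referenced papers themselves. Each of~\cite{aguilera-podc03-journal,aguilera-podc04,dahlia-t-accessible,dahila-t-moving-source,antonio-intermittent-star} already establishes that $\Omega$ is implementable in its respective model, so this ingredient comes for free and requires no additional work on our part.

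Next, I would argue that every one of these models admits only failure patterns that lie in $\F_{\MRUH}$, restricted to at most $\lfloor\frac{n-1}{2}\rfloor$ process failures. The key observation is that each of the target systems guarantees the existence of an a priori unknown correct process — variously called a hub or a source — whose channels to and from all other correct processes are eventually reliable. This is precisely the structural condition defining $\F_{\MRUH}$. I would walk through the five models in turn and exhibit, for each, the process playing the role of the $\MRUH$ hub. Having pinned down both ingredients, the algorithm of Corollary~\ref{corol:wait-free-consensus} transfers directly to each target system: the atomic register it calls on is supplied by Corollary~\ref{thm:wait-free-atomic} over the induced $\MRUH$ connectivity, $\Omega$ is supplied by the model-specific implementation from the corresponding reference, and the construction of~\cite{LoH94} combines them into wait-free consensus tolerating $\lfloor\frac{n-1}{2}\rfloor$ crashes.

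The main obstacle will be the case analysis in the connectivity step: verifying bidirectional hub connectivity rather than merely outgoing. This is important because the bare system $S$ of~\cite{aguilera-podc03-journal}, which guarantees only outgoing timeliness from the source, admits no consensus by the preceding theorem and therefore cannot embed into $\MRUH$. For $S^+$ and $S^{++}$ the strengthening over $S$ is precisely the addition of eventually reliable channels toward the source, which aligns exactly with $\F_{\MRUH}$; for the remaining models one has to read the analogous bidirectional guarantee off the definitions, and this bookkeeping is the only genuinely nontrivial part of the argument.
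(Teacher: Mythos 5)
Your proposal is correct and follows essentially the same route as the paper: exhibit a bidirectional hub in each model so that its failure patterns fall within $\F_{\MRUH}$, take $\Omega$ from the cited papers, and conclude via Corollary~\ref{corol:wait-free-consensus}. The only detail worth tightening is that $S^+$ and $S^{++}$ provide \emph{fair-lossy} (not eventually reliable) channels to and from the hub, so the embedding into $\MRUH$ needs the standard observation that eventually reliable channels can be simulated on top of fair-lossy ones by retransmission and acknowledgement, which the paper makes explicit.
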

\begin{proof}
We first show that the systems enumerated in the theorem's statement
are as strong as $\MRUH$.
System $S^+$~\cite{aguilera-podc03-journal} stipulates that every execution has a
\emph{fair hub}, i.e., a correct process which is connected to all other processes
via fair-lossy channels~\cite{lossy-channels-AfekAFFLMWZ94} in both directions. 
Since eventually reliable channels can be built on top of fair-lossy ones in an asynchronous system,
$S^+$ is as strong as $\MRUH$. In $S^{++}$~\cite{aguilera-podc03-journal}
and~\cite{aguilera-podc04}, every pair of processes
is assumed to be connected by fair-lossy channels. Thus, 
similarly to the above, $S^{++}$ is as strong as $\MRUH$. Finally, the models
of~\cite{dahlia-t-accessible,dahila-t-moving-source,antonio-intermittent-star}
assume that every pair of processes is connected via reliable channels, which
again means that these models are strictly stronger than $\MRUH$.  Since $\Omega$ can be
implemented in all models mentioned
above~\cite{aguilera-podc03-journal,aguilera-podc04,dahlia-t-accessible,dahila-t-moving-source,antonio-intermittent-star},
Corollary~\ref{corol:wait-free-consensus} implies the required.
\end{proof}

\section{Related Work}
\label{sec:related}

Fault-tolerant distributed computing in the presence of message loss has been
extensively studied in the past. Early work focused on the models that, in
addition to crashes, allow processes to also fail to either send messages
(\emph{send omission})~\cite{hadzilacos-send-omit} or both send and receive
messages (\emph{generalized omission})~\cite{perry-toueg-omit}.  Both these
models have been shown to be equivalent to the crash failures in their
computational power in both synchronous~\cite{neiger-toueg} and
asynchronous~\cite{coan} systems. They are however too strong to capture
partitionable systems, as they would automatically classify any non-crashed
process with unreliable connectivity as faulty.

The classical paper by Dolev~\cite{dolev-strikes-again} as well as more recent work 
(see~\cite{topology-consensus} for survey) 
study consensus solvability in general networks as a function of the network topology,
process failure models, and synchrony constraints. These papers however, only
consider the classical (i.e., non-partitionable) version of consensus, which
requires agreement to be reached by all correct processes. They also do
not consider failure models with flaky channels.

In a seminal paper~\cite{time-not-healer}, Santoro and Widmayer proposed
a \emph{mobile omission} failure model, which treats message loss
independently of process failures. They showed that in order to solve
consensus in a synchronous round-based system in the absence of process failures, 
it is necessary and sufficient to ensure
that the communication graph contains a strongly
connected component in every round~\cite{time-not-healer,santoro-widmayer2}.
In contrast, our lower bounds demonstrate that 
in order to implement consensus (or even a register)
in a partially synchronous system, some processes must remain
strongly connected throughout the \emph{entire} execution.

Subsequent work explored fault-tolerant computation in the presence of both
faulty links and processes under weakened synchrony assumptions. 
Basu et al.~\cite{basu96} established a separation between 
reliable channels and either eventually reliable or fair-lossy channels 
in terms of their power to solve certain problems 
(such as uniform reliable broadcast and $k$-set agreement) under asynchrony.
However, their notion of a reliable link~\cite{ht-broadcast-tutorial} 
is too strong to be implementable in practice as it 
requires every message transmitted via a complete send invocation to be eventually delivered 
to a correct destination even if the sender later crashes.

Several generalizations of the classical failure detector abstractions of~\cite{CT96}
for partitionable systems
were proposed in~\cite{friedman1,friedman2,friemdan-podc-ba,aguilera-heartbeat}
and shown sufficient for consensus in~\cite{friedman1,friedman2,aguilera-heartbeat}. 
However, as we explain in \S\ref{sec:intro}, the failure detectors introduced
in these papers cannot be implemented in the presence of flaky channels.

Aguilera et al.~\cite{aguilera-podc03-journal} and follow-up 
work~\cite{aguilera-podc04,dahlia-t-accessible,dahila-t-moving-source,antonio-intermittent-star}
studied the problem of implementing a failure detector
$\Omega$ (which is the weakest for consensus~\cite{CT96-weakest}),
under various weak models of synchrony, link reliability, and connectivity.
However, these results are inapplicable in our setting as
the models considered in these papers disallow full partitions.

As we show in~\S\ref{sec:cons-via-omega}, our connectivity bounds for
consensus are also applicable in non-partitionable systems with
unreliable channels. In particular, they imply that $\Omega$ is not the only
factor determining consensus solvability, and that the degree of
connectivity being assumed also plays an important role. 
This suggests an interesting tradeoff across all three modeling dimensions 
of synchrony, channel reliability, and connectivity considered in this paper. 
In particular, to solve consensus we assume that every non-faulty channel 
is eventually timely~\cite{dls}, whereas the algorithm of~\cite{aguilera-podc04}
can cope with weaker synchrony constraints.
However, unlike~\cite{aguilera-podc04}, 
which assumes complete connectivity via fair-lossy channels, 
our implementation is able to tolerate an adversarial message loss (flakiness).
Whether this tradeoff is fundamental, or our synchrony assumptions
can be further relaxed 
remains the subject of future work.

Alquraan et al.~\cite{osdi-partitions} present a study of system failures due to
network partitions, which we already mentioned in \S\ref{sec:intro}. This work
highlights the practical importance of coming up with provably correct designs
that explicitly consider channel failures. A follow-up
paper~\cite{osdi-partitions2} introduces a communication layer that can mask
channel failures, but only when access to low-level networking infrastructure is
available. OmniPaxos~\cite{omnipaxos} ensures liveness of consensus under
channel failures, but only handles disconnected channels, not flaky ones.

Raft~\cite{raft,raft-thesis} elects leaders in a randomized way: a prospective
leader picks a view number and requests {\em Votes} from a majority of processes
(analogous to $\mathsf{1A}$ messages of Paxos). This may lead to split votes when multiple
processes are competing to get elected, in which case each process waits for a
randomized timeout and retries. While this mechanism works reasonably well
in practice, it does not guarantee termination under partial synchrony even
with perfect connectivity. Raft also includes a {\em Pre-Vote} optimization,
which, as Jensen et al. point out~\cite{heidi-raft}, can help maintain
liveness under intermittent connectivity. The optimization
requires that before sending {\em Vote} requests, a prospective leader gathers a
majority of {\em Pre-Votes} from other processes, certifying that they are ready
to vote for it. This is similar to gathering $\WISH$es from a majority of
processes in our synchronizer before entering a view
(Figure~\ref{fig:synchronizer_implementation}). But unlike our synchronizer {\em
  Pre-Vote} is a best-effort technique, since in between a process granting a
{\em Pre-Vote} and receiving the corresponding {\em Vote} request it may vote
for someone else, thereby creating a split vote. Nevertheless, the presence of
techniques similar to ours in existing practical algorithms makes us hopeful
that our work could be used to make Paxos-based systems provably live even under
adversarial network conditions.

\bibliographystyle{plainurl}
\bibliography{references.bib}

\begin{thebibliography}{10}

\bibitem{adya99-weak-consis}
Atul Adya.
\newblock {\em Weak Consistency: A Generalized Theory and Optimistic
  Implementations for Distributed Transactions}.
\newblock {Ph.D.}, MIT, Cambridge, MA, USA, March 1999.
\newblock Also as Technical Report MIT/LCS/TR-786.

\bibitem{lossy-channels-AfekAFFLMWZ94}
Yehuda Afek, Hagit Attiya, Alan~D. Fekete, Michael~J. Fischer, Nancy~A. Lynch,
  Yishay Mansour, Da{-}Wei Wang, and Lenore~D. Zuck.
\newblock Reliable communication over unreliable channels.
\newblock {\em J. {ACM}}, 41(6):1267--1297, 1994.

\bibitem{aguilera-podc04}
Marcos~K. Aguilera, Carole Delporte-Gallet, Hugues Fauconnier, and Sam Toueg.
\newblock Communication-efficient leader election and consensus with limited
  link synchrony.
\newblock In {\em Symposium on Principles of Distributed Computing (PODC)},
  2004.

\bibitem{aguilera-podc03-journal}
Marcos~K. Aguilera, Carole Delporte-Gallet, Hugues Fauconnier, and Sam Toueg.
\newblock On implementing {O}mega in systems with weak reliability and
  synchrony assumptions.
\newblock {\em Distributed Comput.}, 21(4):285--314, 2008.

\bibitem{aguilera-heartbeat}
Marcos~Kawazoe Aguilera, Wei Chen, and Sam Toueg.
\newblock Using the heartbeat failure detector for quiescent reliable
  communication and consensus in partitionable networks.
\newblock {\em Theor. Comput. Sci.}, 220(1):3--30, 1999.

\bibitem{crash-rec-consensus-AguileraCT00}
Marcos~Kawazoe Aguilera, Wei Chen, and Sam Toueg.
\newblock Failure detection and consensus in the crash-recovery model.
\newblock {\em Distributed Comput.}, 13(2):99--125, 2000.

\bibitem{osdi-partitions2}
Mohammed Alfatafta, Basil Alkhatib, Ahmed Alquraan, and Samer Al-Kiswany.
\newblock Toward a generic fault tolerance technique for partial network
  partitioning.
\newblock In {\em Symposium on Operating Systems Design and Implementation
  (OSDI)}, 2020.

\bibitem{osdi-partitions}
Ahmed Alquraan, Hatem Takruri, Mohammed Alfatafta, and Samer Al-Kiswany.
\newblock An analysis of network-partitioning failures in cloud systems.
\newblock In {\em Symposium on Operating Systems Design and Implementation
  (OSDI)}, 2018.

\bibitem{abd}
Hagit Attiya, Amotz Bar{-}Noy, and Danny Dolev.
\newblock Sharing memory robustly in message-passing systems.
\newblock {\em J. {ACM}}, 42(1):124--142, 1995.

\bibitem{obf-formal}
Hagit Attiya, Ohad Ben-Baruch, and Danny Hendler.
\newblock Lower bound on the step complexity of anonymous binary consensus.
\newblock In {\em Symposium on Distributed Computing (DISC)}, 2016.

\bibitem{bailis-kingsbury}
Peter Bailis and Kyle Kingsbury.
\newblock The network is reliable.
\newblock {\em Commun. {ACM}}, 57(9):48--55, 2014.

\bibitem{basu96}
Anindya Basu, Bernadette Charron-Bost, and Sam Toueg.
\newblock Simulating reliable links with unreliable links in the presence of
  process crashes.
\newblock In {\em Workshop on Distributed Algorithms (WDAG)}, 1996.

\bibitem{multishot-disc22}
Manuel Bravo, Gregory Chockler, and Alexey Gotsman.
\newblock Liveness and latency of {B}yzantine state-machine replication.
\newblock In {\em Symposium on Distributed Computing (DISC)}, 2022.

\bibitem{bftlive-dc}
Manuel Bravo, Gregory Chockler, and Alexey Gotsman.
\newblock Making {B}yzantine consensus live.
\newblock {\em Distributed Comput.}, 35(6):503--532, 2022.

\bibitem{brewer}
Eric~A. Brewer.
\newblock Towards robust distributed systems (abstract).
\newblock In {\em Symposium on Principles of Distributed Computing (PODC)},
  2000.

\bibitem{physalia}
Marc Brooker, Tao Chen, and Fan Ping.
\newblock Millions of tiny databases.
\newblock In {\em Symposium on Networked Systems Design and Implementation
  (NSDI)}, 2020.

\bibitem{CT96-weakest}
Tushar~Deepak Chandra, Vassos Hadzilacos, and Sam Toueg.
\newblock The weakest failure detector for solving consensus.
\newblock {\em J. ACM}, 43(4):685--722, 1996.

\bibitem{CT96}
Tushar~Deepak Chandra and Sam Toueg.
\newblock Unreliable failure detectors for reliable distributed systems.
\newblock {\em J. ACM}, 43(2):225–267, 1996.

\bibitem{gcs-survey}
Gregory Chockler, Idit Keidar, and Roman Vitenberg.
\newblock Group communication specifications: A comprehensive study.
\newblock {\em {ACM} Comput. Surv.}, 33(4):427--469, 2001.

\bibitem{coan}
Brian Coan.
\newblock A compiler that increases the fault tolerance of asynchronous
  protocols.
\newblock {\em IEEE Trans. Comput.}, 37(12):1541–1553, 1988.

\bibitem{dolev-strikes-again}
Danny Dolev.
\newblock The {B}yzantine generals strike again.
\newblock {\em J. Algorithms}, 3(1):14--30, 1982.

\bibitem{friedman2}
Danny Dolev, Roy Friedman, Idit Keidar, and Dahlia Malkhi.
\newblock {Failure detectors in omission failure environments}.
\newblock Technical Report TR96-1608, Department of Computer Science, Cornell
  University, 1996.

\bibitem{friemdan-podc-ba}
Danny Dolev, Roy Friedman, Idit Keidar, and Dahlia Malkhi.
\newblock Failure detectors in omission failure environments (brief
  announcement).
\newblock In {\em Symposium on Principles of Distributed Computing (PODC)},
  1997.

\bibitem{dls}
Cynthia Dwork, Nancy~A. Lynch, and Larry~J. Stockmeyer.
\newblock Consensus in the presence of partial synchrony.
\newblock {\em J. {ACM}}, 35(2):288--323, 1988.

\bibitem{antonio-intermittent-star}
Antonio Fern{\'a}ndez~Anta and Michel Raynal.
\newblock From an intermittent rotating star to a leader.
\newblock In {\em Conference on Principles of Distributed Systems (OPODIS)},
  2007.

\bibitem{solo-orig}
Faith Fich, Maurice Herlihy, and Nir Shavit.
\newblock On the space complexity of randomized synchronization.
\newblock {\em J. ACM}, 45(5):843–862, 1998.

\bibitem{friedman1}
Roy Friedman, Idit Keidar, Dahlia Malkhi, Ken Birman, and Danny Dolev.
\newblock {Deciding in partitionable networks}.
\newblock Technical Report TR95-1554, Department of Computer Science, Cornell
  University, 1995.

\bibitem{cap}
Seth Gilbert and Nancy Lynch.
\newblock Brewer's conjecture and the feasibility of consistent, available,
  partition-tolerant web services.
\newblock {\em SIGACT News}, 33(2):51–59, 2002.

\bibitem{hadzilacos-send-omit}
Vassos Hadzilacos.
\newblock {Byzantine agreement under restricted type of failures (not telling
  the truth is different from telling lies)}.
\newblock Technical Report TR-18-63, Department of Computer Science, Harvard
  University, 1983.

\bibitem{ht-broadcast-tutorial}
Vassos Hadzilacos and Sam Toueg.
\newblock {A modular approach to fault-tolerant broadcast and related
  problems}.
\newblock Technical Report TR94-1425, Department of Computer Science, Cornell
  University, 1994.

\bibitem{linear}
Maurice Herlihy.
\newblock Wait-free synchronization.
\newblock {\em ACM Transactions on Programming Languages and Systems},
  13(1):124--149, 1991.

\bibitem{obf-orig}
Maurice Herlihy, Victor Luchangco, and Mark Moir.
\newblock Obstruction-free synchronization: Double-ended queues as an example.
\newblock In {\em International Conference on Distributed Computing Systems
  (ICDCS)}, 2003.

\bibitem{dahila-t-moving-source}
Martin Hutle, Dahlia Malkhi, Ulrich Schmid, and Lidong Zhou.
\newblock Chasing the weakest system model for implementing $\omega$ and
  consensus.
\newblock {\em IEEE Trans. Dependable Secur. Comput.}, 6(4):269--281, 2009.

\bibitem{heidi-raft}
Chris Jensen, Heidi Howard, and Richard Mortier.
\newblock Examining {R}aft's behaviour during partial network failures.
\newblock In {\em Workshop on High Availability and Observability of Cloud
  Systems (HAOC)}, 2021.

\bibitem{lamport1985on}
Leslie Lamport.
\newblock On interprocess communication - part {I}: Basic formalism, part {II}:
  Algorithms.
\newblock {\em Distributed Comput.}, 1(2):77--101, 1986.

\bibitem{paxos}
Leslie Lamport.
\newblock The part-time parliament.
\newblock {\em ACM Trans. Comput. Syst.}, 16(2):133–169, 1998.

\bibitem{LoH94}
Wai{-}Kau Lo and Vassos Hadzilacos.
\newblock Using failure detectors to solve consensus in asynchronous
  shared-memory systems (extended abstract).
\newblock In {\em Workshop on Distributed Algorithms (WDAG)}, 1994.

\bibitem{lynch-dist-algos}
Nancy Lynch.
\newblock {\em Distributed Algorithms}, chapter~17.
\newblock Morgan Kaufmann, 1996.

\bibitem{dahlia-t-accessible}
Dahlia Malkhi, Florin Oprea, and Lidong Zhou.
\newblock $\omega$ meets paxos: Leader election and stability without eventual
  timely links.
\newblock In {\em Symposium on Distributed Computing (DISC)}, 2005.

\bibitem{bqs}
Dahlia Malkhi and Michael~K. Reiter.
\newblock Byzantine quorum systems.
\newblock {\em Distributed Comput.}, 11(4):203--213, 1998.

\bibitem{cogsworth}
Oded Naor, Mathieu Baudet, Dahlia Malkhi, and Alexander Spiegelman.
\newblock Cogsworth: {B}yzantine view synchronization.
\newblock In {\em Cryptoeconomics Systems Conference (CES)}, 2020.

\bibitem{oded-linear}
Oded Naor and Idit Keidar.
\newblock Expected linear round synchronization: The missing link for linear
  {B}yzantine {SMR}.
\newblock In {\em Symposium on Distributed Computing (DISC)}, 2020.

\bibitem{neiger-toueg}
Gil Neiger and Sam Toueg.
\newblock Automatically increasing the fault-tolerance of distributed
  algorithms.
\newblock {\em J. Algorithms}, 11(3):374–419, 1990.

\bibitem{omnipaxos}
Harald Ng, Seif Haridi, and Paris Carbone.
\newblock {Omni-Paxos}: Breaking the barriers of partial connectivity.
\newblock In {\em European Conference on Computer Systems (EuroSys)}, 2023.

\bibitem{raft-thesis}
Diego Ongaro.
\newblock {\em Consensus: bridging theory and practice}.
\newblock PhD thesis, Stanford University, {USA}, 2014.

\bibitem{raft}
Diego Ongaro and John~K. Ousterhout.
\newblock In search of an understandable consensus algorithm.
\newblock In {\em USENIX Annual Technical Conference}, 2014.

\bibitem{perry-toueg-omit}
Kenneth~J. Perry and Sam Toueg.
\newblock Distributed agreement in the presence of processor and communication
  faults.
\newblock {\em IEEE Trans. Software Eng.}, 12(3):477--482, 1986.

\bibitem{paxos-with-omissions}
Roberto~De Prisco, Butler~W. Lampson, and Nancy~A. Lynch.
\newblock Revisiting the {PAXOS} algorithm.
\newblock {\em Theor. Comput. Sci.}, 243(1-2):35--91, 2000.

\bibitem{topology-consensus}
Dimitris Sakavalas and Lewis Tseng.
\newblock {\em Network Topology and Fault-Tolerant Consensus}.
\newblock Synthesis Lectures on Distributed Computing Theory. Morgan {\&}
  Claypool Publishers, 2019.

\bibitem{time-not-healer}
Nicola Santoro and Peter Widmayer.
\newblock Time is not a healer.
\newblock In {\em Symposium on Theoretical Aspects of Computer Science
  (STACS)}, 1989.

\bibitem{santoro-widmayer2}
Nicola Santoro and Peter Widmayer.
\newblock Distributed function evaluation in the presence of transmission
  faults.
\newblock In {\em Symposium on Algorithms (SIGAL)}, 1990.

\end{thebibliography}

\iflong
    \appendix
    \newpage
    \section{Proof of Theorem~\ref{thm:upper-reg}}
\label{sec:app-abd}

\begin{figure}[h]
\scalebox{0.97}{
    \begin{tabular}{@{\qquad}l@{}|@{\qquad}l@{}}
    \hspace{-28px}
    \begin{minipage}[t]{7.8cm}
    \vspace{-13px}
    \begin{algorithm}[H]
        \SetAlgoNoLine
        \DontPrintSemicolon
        \setcounter{AlgoLine}{0}
        \SetInd{0.5em}{0.5em}

        \Function{$\oread()$}{\label{abd:read_req}
            \precond $\status = \idle$\;
            $\assign{\seq}{\seq + 1}$\;
            $\assign{\query[i]}{\seq}$\;
            $\assign{\status}{\rdquery}$\;
            \wait $\status = \rddone$\;
            $\assign{\status}{\idle}$\;
            \return $\rdval$\;
        }

        \smallskip
        \smallskip

        \Function{$\owrite(v)$}{\label{abd:write_req}
            \precond $\status = \idle$\;
            $\assign{\wrval}{v}$\;
            $\assign{\seq}{\seq + 1}$\;
            $\assign{\query[i]}{\seq}$\;
            $\assign{\status}{\wrquery}$\;
            \wait $\status = \wrdone$\;
            $\assign{\status}{\idle}$\;
            \return $\ACK$\;
        }

        \smallskip
        \smallskip

        \SubAlgo{${\bf periodically}$}{\label{abd:flood}
            \send $\STATE(\query,\queryack,\linebreak
            \wwrite,\wwriteack)$ \ToAll\;
        }

        \smallskip
        \smallskip

        \SubAlgo{\onreceive $\STATE(Q,R,W,X)$}{\label{abd:update_state}
            \For{$p_j \in \mathcal{P}$}{
                \If{$Q[j].\sseq > \query[j].\sseq$}{\label{abd:update_state:query}
                    $\assign{\query[j]}{Q[j]}$\;
                }
                \If{$W[j].\sseq > \wwrite[j].\sseq$}{\label{abd:update_state:write}
                    $\assign{\wwrite[j]}{W[j]}$\;
                }
                \For{$p_k \in \mathcal{P}$}{
                    \If{$R[j][k].\sseq > \queryack[j][k].\sseq$}{\label{abd:update_state:queryack}
                        $\assign{\queryack[j][k]}{R[j][k]}$\;
                    }
                    \If{$X[j][k].\sseq > \wwriteack[j][k].\sseq$}{\label{abd:update_state:writeack}
                        $\assign{\wwriteack[j][k]}{X[j][k]}$\;
                    }
                }
            }
        }

        \smallskip
        \smallskip
    \end{algorithm}
    \end{minipage}
&
    \hspace{-15px}
    \begin{minipage}[t]{7.1cm}
    \vspace{-13px}
    \begin{algorithm}[H]
        \SetAlgoNoLine
        \DontPrintSemicolon
        \SetInd{0.5em}{0.5em}

        \SubAlgo{\when $\exists j.\ \query[j].\sseq > \queryack[i][j].\sseq$}{\label{abd:query}
            $\assign{\queryack[i][j]}{(\val,\ts,\query[j].\sseq)}$\;\label{abd:query:ack}
        }

        \smallskip
        \smallskip

        \SubAlgo{\when $\exists j.\ \wwrite[j].\sseq > \wwriteack[i][j].\sseq$}{\label{abd:write}
            \If{$\wwrite[j].\sts > \ts$}{\label{abd:write:guard}
                $\assign{\val}{\wwrite[j].\sval}$\;
                $\assign{\ts}{\wwrite[j].\sts}$\;\label{abd:write:ts}
            }
            $\assign{\wwriteack[i][j]}{\wwrite[j].\sseq}$\;
        }

        \smallskip
        \smallskip

        \SubAlgo{\when $\status = \rdquery \land \linebreak
            \phantom\ \ \ |\{j\ |\ \queryack[j][i].\sseq = \seq\}| > \frac{n}{2}$}{\label{abd:read_quorum}
            $\assign{Q}{\{j\ |\ \queryack[j][i].\sseq = \seq\}}$\;\label{abd:read_quorum:Q}
            $\assign{j}{\mathsf{argmax}\{\queryack[j][i].\sts\ |\ j \in Q\}}$\;\label{abd:rdquery:max}
            $\assign{\rdval}{\queryack[j][i].\sval}$\;\label{abd:rdquery:rdval}
            $\assign{\val}{\queryack[j][i].\sval}$\;
            $\assign{\ts}{\queryack[j][i].\sts}$\;\label{abd:rdquery:ts}
            $\assign{\status}{\rdpropagate}$\;
            $\assign{\wwrite[i]}{(\val,\ts,\seq)}$\;
        }

        \smallskip
        \smallskip

        \SubAlgo{\when $\status =\wrquery \land \linebreak
            \phantom\ \ \ |\{j\ |\ \queryack[j][i].\sseq = \seq\}| > \frac{n}{2}$}{\label{abd:write_quorum}
            $\assign{Q}{\{j\ |\ \queryack[j][i].\sseq = \seq\}}$\;\label{abd:write_quorum:Q}
            $\assign{(t,\_)}{\max\{\queryack[j][i].\sts\ |\ j \in Q\}}$\;\label{abd:wrquery:max}
            $\assign{\val}{\wrval}$\;
            $\assign{\ts}{(t+1,i)}$\;\label{abd:wrquery:ts}
            $\assign{\status}{\wrpropagate}$\;
            $\assign{\wwrite[i]}{(\val,\ts,\seq)}$\;
        }

        \smallskip
        \smallskip

        \SubAlgo{\when $\status = \rdpropagate \land \linebreak
            \phantom\ \ \ |\{j\ |\ \wwriteack[j][i].\sseq = \seq\}| > \frac{n}{2}$}{\label{abd:read_complete}
            $\assign{\status}{\rddone}$\;
        }

        \smallskip
        \smallskip

        \SubAlgo{\when $\status = \wrpropagate \land \linebreak
            \phantom\ \ \ |\{j\ |\ \wwriteack[j][i].\sseq = \seq\}| > \frac{n}{2}$}{\label{abd:write_complete}
            $\assign{\status}{\wrdone}$\;
        }
    \end{algorithm}
    \end{minipage}
    \end{tabular}
}
    \caption{Register protocol at a process $p_i$.}
    \label{fig:abd_protocol}
\end{figure}

In Figure~\ref{fig:abd_protocol} we present an implementation of an
atomic register protocol in the model $\MRU$ that validates
Theorem~\ref{thm:upper-reg}. The protocol is a variant of ABD~\cite{abd}.
We assume that for every process the initial values of $\val$, $\ts$, $\seq$ and $\status$ are
$0$, $(0,0)$, $0$ and $\idle$, respectively, and that $\ts$s are compared lexicographically.

\newcommand{\WR}{{\sf wr}}
\newcommand{\WW}{{\sf ww}}
\newcommand{\RW}{{\sf rw}}
\newcommand{\RT}{{\sf rt}}

We first show that the algorithm is linearizable.
For simplicity, we only consider executions of the algorithm where all
operations complete.
To each execution
$\sigma$ of the algorithm in Figure~\ref{fig:abd_protocol}, we associate:
\begin{itemize}
    \item a set $V(\sigma)$ consisting of the operations in
      $\sigma$, i.e., $\oread$s and $\owrite$s; and
    \item a relation $\RT(\sigma)$, defined as follows:
      for all $o_1,o_2\in\sigma$, $(o_1,o_2)\in\RT$
    if and only if $o_1$ completes before $o_2$ is invoked.
\end{itemize}

We denote the read operations in $\sigma$ by $R(\sigma)$ and the write operations in $\sigma$ by $W(\sigma)$. A \emph{dependency graph} of $\sigma$ is a tuple
$G=(V(\sigma), \RT(\sigma), \WR, \WW, \RW)$, where the relations
$\WR, \WW, \RW \subseteq V(\sigma) \times V(\sigma)$ are such that:
\begin{enumerate}
\item $(i)$ if $(o_1, o_2) \in \WR$, then $o_1 \in W(\sigma)$ and $o_2 \in R(\sigma)$;\\
        $(ii)$ for all $w_1,w_2,r\in\sigma$ such that $(w_1,r)\in\WR$ and $(w_2,r)\in\WR$, we have $w_1=w_2$;\\
  $(iii)$ for all $(w,r)\in\WR$ we have $\val(w)=\val(r)$;\\
        $(iv)$ if there is no $w \in W(\sigma)$ such that $(w,r)\in\WR$, then $r$
        returns $0$;
    \item $\WW$ is a total order over $W(\sigma)$; and
    \item $\RW=\{(r,w)\ |\ \exists w'.\ (w',r)\in\WR \land (w',w)\in\WW\} \cup\\
        \phantom\ \ \ \ \ \ \ \{(r,w)\ |\ r\in R(\sigma)\land w\in W(\sigma)\land \neg\exists w'.\ (w',r)\in\WR\}$.
\end{enumerate}

To prove linearizability we rely on the following theorem~\cite{adya99-weak-consis}:
\begin{theorem}
    \label{thm:linearizability}
    An execution $\sigma$ is linearizable if and only if there exist $\WR$, $\WW$ and $\RW$
    such that $G=(V(\sigma), \RT(\sigma), \WR, \WW, \RW)$ is an acyclic dependency graph.
\end{theorem}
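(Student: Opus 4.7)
The plan is to prove the two directions of the equivalence separately, using the natural correspondence between linearizations of $\sigma$ and topological sorts of suitably defined dependency graphs.

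For the forward direction, I would start from a linearization $\pi$ of $\sigma$ (a total order on $V(\sigma)$ consistent with $\RT(\sigma)$ and the register's sequential specification) and read off the three relations directly from $\pi$: take $\WW$ to be the restriction of $\pi$ to $W(\sigma)$; declare $(w, r) \in \WR$ iff $w$ is the last write preceding $r$ in $\pi$ (and leave $r$ unmatched in $\WR$ if it precedes every write in $\pi$); and define $\RW$ by the formula given in the theorem. Properties (1)--(3) of a dependency graph then amount to reading off facts from the sequential specification of a register---clauses (1-iii) and (1-iv) literally restate that $r$ returns the value written by its latest $\pi$-predecessor (or $0$ otherwise). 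Acyclicity of $G$ follows because every edge in $\RT(\sigma) \cup \WR \cup \WW \cup \RW$ is oriented consistently with the total order $\pi$, so the union is contained in a strict total order and cannot contain a cycle.

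For the backward direction, given an acyclic dependency graph $G$, I would pick any topological sort $\pi$ of $V(\sigma)$ extending $\RT(\sigma) \cup \WR \cup \WW \cup \RW$, which exists precisely because $G$ is acyclic. Since $\pi$ extends $\RT(\sigma)$, the real-time order is preserved, so it remains to verify that $\pi$ satisfies the sequential specification of a register. Fix a read $r$ and consider the (unique, by (1-ii)) write $w^{\ast}$ with $(w^{\ast}, r) \in \WR$ when such a write exists. I would show that $w^{\ast}$ is the immediate $\pi$-predecessor of $r$ among writes: any other write $w'$ is related to $w^{\ast}$ by $\WW$, and by the definition of $\RW$ this implies either $(w', w^{\ast}) \in \WW \subseteq \pi$ or $(r, w') \in \RW \subseteq \pi$, placing $w'$ outside the $\pi$-interval $(w^{\ast}, r)$. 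Then (1-iii) gives that $r$ returns $\val(w^{\ast})$, as required. The case where no $\WR$-edge enters $r$ is handled symmetrically via the second disjunct in the definition of $\RW$, which forces every write to follow $r$ in $\pi$, combined with (1-iv).

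The main obstacle will be the backward direction, specifically ruling out any write $w'$ lying strictly between $w^{\ast}$ and $r$ in the chosen topological order $\pi$: one has to trace through both disjuncts of the $\RW$ definition and appeal to the totality of $\WW$ on writes. Once this is in place, the rest of the verification is routine bookkeeping, and the construction collapses to the classical Adya-style argument specialized to a register's particularly simple sequential specification.
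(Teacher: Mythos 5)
Your proof is correct. Note, however, that the paper does not actually prove this statement: it is imported wholesale from the literature (the citation to Adya's work on weak consistency) and used as a black box, so there is no in-paper argument to compare against. Your two-directional argument is the standard one and is sound: in the forward direction every edge of $\RT\cup\WR\cup\WW\cup\RW$ read off from a linearization $\pi$ is indeed oriented consistently with $\pi$ (the only non-obvious case, a first-disjunct $\RW$-edge $(r,w)$, follows because a write $\WW$-after the last write preceding $r$ cannot itself precede $r$); and in the backward direction your case split on the totality of $\WW$ correctly excludes any write from the $\pi$-interval between $w^{\ast}$ and $r$, after which clauses (1-iii) and (1-iv) yield compliance with the register's sequential specification. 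One small point worth making explicit if you write this up fully: the paper restricts attention to executions in which all operations complete, which is what licenses treating $\pi$ as a total order on all of $V(\sigma)$ and sidesteps the incomplete-operation clause in the paper's definition of linearizability.
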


We now prove that every execution of the algorithm in Figure~\ref{fig:abd_protocol}
is linearizable. Fix one such execution $\sigma$. Our strategy is to find witnesses for
$\WR$, $\WW$ and $\RW$ that validate the conditions of Theorem~\ref{thm:linearizability}.
For this, consider the function $\tau : \sigma \rightarrow \mathbb{N}\times\mathbb{N}$
that maps each operation in $\sigma$ to a tag as follows:
\begin{itemize}
    \item for a read $r$, $\tau(r)$ is the tag set at line~\ref{abd:rdquery:ts}; and
    \item for a write $w$, $\tau(w)$ is the tag set at line~\ref{abd:wrquery:ts}.
\end{itemize}

We then define the required witnesses as follows:
\begin{itemize}
    \item $(w,r)\in\WR$ if and only if $w \in W(\sigma)$, $r \in R(\sigma)$
        and $\tau(w)=\tau(r)$;
    \item $(w,w')\in\WW$ if and only if $w,w' \in W(\sigma)$
      and $\tau(w)<\tau(w')$; and
    \item $\RW$ is derived from $\WR$ and $\WW$ as per the dependency graph definition.
\end{itemize}

Our next results rely on the following proposition, whose easy proof we omit.
\begin{proposition}
  \label{abd:tags_prop}
  \
    \begin{enumerate}
        \item For every $w_1,w_2\in W(\sigma)$, $\tau(w_1)=\tau(w_2)$ implies $w_1=w_2$.
        \item For every $w\in W(\sigma)$, $\tau(w)>(0,0)$.\label{abd:tags_prop:3}
        \item For every $r\in R(\sigma)$, either $\tau(r)=(0,0)$ or there
        exists $w\in W(\sigma)$ such that $\tau(r)=\tau(w)$.\label{abd:tags_prop:1}
        \item For every $r\in R(\sigma)$ and $w\in W(\sigma)$,
            $\tau(r)=\tau(w)$ implies $\val(r)=\val(w)$.
    \end{enumerate}
\end{proposition}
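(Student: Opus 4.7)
The plan is to prove each item by tracking how tags are created and propagated by the protocol. Tags are \emph{created} only at line~\ref{abd:wrquery:ts}, where the writer $p_i$ forms the pair $(t+1, i)$ with $t$ equal to the maximum $\sts$ first component collected in its query quorum. Every other update to a local $\ts$ field --- at lines \ref{abd:query:ack}, \ref{abd:write:ts}, \ref{abd:update_state:write}, \ref{abd:update_state:queryack} and \ref{abd:rdquery:ts} --- merely \emph{copies} an already existing tag, carried together with the value originally paired with it. Item~2 is immediate from this observation: the tag of any write has the form $(t+1, i)$ with $t \ge 0$ and process id $i \ge 1$, hence $> (0,0)$.

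For item~1 I would split on the identity of the writer. If $w_1$ and $w_2$ are issued by different processes, their second components differ, so tag equality is impossible. If they are both issued by the same process $p_i$ and, without loss of generality, $w_1$ precedes $w_2$, the argument relies on quorum intersection: when $w_1$ completes, the write-complete guard at line~\ref{abd:write_complete} required a majority $Q_1$ of $\wwriteack$ responses matching $w_1$'s sequence number, and each $p_j \in Q_1$ applied line~\ref{abd:write:ts} before acknowledging, so its $\ts$ became $\ge \tau(w_1)$. Since every update to $\ts$ is monotone (it either copies a larger tag or keeps the current one), each $p_j \in Q_1$ still has $\ts \ge \tau(w_1)$ whenever it later responds to $w_2$'s query. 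The query quorum $Q_2$ of $w_2$ has $|Q_2| > n/2$, so $Q_1 \cap Q_2 \ne \emptyset$, and the $t$ computed by $w_2$ satisfies $t \ge t_1 + 1$, where $\tau(w_1) = (t_1+1, i)$. Hence $\tau(w_2) > \tau(w_1)$, contradicting the assumed equality.

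For items~3 and~4 I would prove a single global invariant by induction on protocol steps: at every point of the execution and at every process, each $(\sval, \sts)$ pair stored in the local $(\val, \ts)$ fields, or in any entry of $\wwrite$ or $\queryack$, either equals $(0, (0,0))$ or equals $(\val(w), \tau(w))$ for some write $w$ that has already executed line~\ref{abd:wrquery:ts}. The invariant holds initially, and every step preserves it: the $\STATE$-driven updates at lines \ref{abd:update_state:write} and \ref{abd:update_state:queryack}, as well as the local copies at lines~\ref{abd:query:ack}, \ref{abd:write:ts} and \ref{abd:rdquery:ts}, faithfully propagate existing pairs; and line~\ref{abd:wrquery:ts} adds the pair $(\wrval, \tau(w))$ for the newly created write $w$, thereby widening the witness set. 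Applying the invariant to the $\queryack$ entry that the read $r$ picks at line~\ref{abd:rdquery:max} gives both items~3 and~4 at once, since $\rdval$ is read from the same entry whose tag becomes $\tau(r)$.

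The main obstacle is the same-process sub-case of item~1, as it is the only step that really uses algorithmic content: the quorum-intersection argument, combined with the linkage between the write-complete guard and the prior monotone updates of $\ts$ at acknowledging processes. Everything else is essentially bookkeeping --- a routine induction on how $(\sval, \sts)$ pairs migrate through the protocol.
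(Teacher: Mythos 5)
The paper states this proposition without proof (``whose easy proof we omit''), so there is no official argument to compare against; judged on its own, your reconstruction is sound and is consistent with how the paper argues the neighbouring Lemma~\ref{abd:tau-non-inc}. Item~2 and the different-writer half of item~1 are indeed immediate from the shape $(t+1,i)$ of the tags created at line~\ref{abd:wrquery:ts}; the same-writer half of item~1 is exactly the quorum-intersection argument the paper deploys for part~2 of Lemma~\ref{abd:tau-non-inc} (your case being the instance where the two operations are consecutive writes of one process); and the value--tag pairing invariant cleanly yields items~3 and~4 (for item~4 you implicitly also need item~1 to identify the invariant's witness write with the given one, which is fine). The one step you should not wave through is the blanket claim that ``every update to $\ts$ is monotone'': this is syntactically true of line~\ref{abd:write:ts}, which is guarded by a strict comparison, but lines~\ref{abd:rdquery:ts} and~\ref{abd:wrquery:ts} overwrite $\ts$ with a maximum taken over the query quorum $Q$, and nothing in the code syntactically forces that maximum to dominate the $\ts$ the process holds at that instant (for example, the process may have meanwhile adopted a larger tag from a still-propagating write via line~\ref{abd:write:ts}, while the responses collected in $Q$ all predate that adoption). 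The paper makes the identical assertion of $\ts$-monotonicity in its proof of Lemma~\ref{abd:tau-non-inc}, so this is a gloss you share with the authors rather than an error you introduced; still, a fully rigorous write-up would first establish $\ts$-monotonicity, or the weaker stable-quorum invariant that once a write's propagate phase has been acknowledged by a majority, every subsequent majority query returns at least one tag no smaller than that write's tag, before using it in the same-writer case of item~1.
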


\begin{lemma}
  \label{abd:tau-non-inc}
  \
    \begin{enumerate}
        \item For all $r,w\in V(\sigma)$, if $(r,w)\in\RW$ then $\tau(r)<\tau(w)$.
        \item For all $o_1,o_2\in V(\sigma)$, if $(o_1,o_2)\in\RT$, then $\tau(o_1)\leq\tau(o_2)$.
            Moreover, if $o_2$ is a $\owrite$, then $\tau(o_1)<\tau(o_2)$.
    \end{enumerate}
\end{lemma}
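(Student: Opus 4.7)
The two parts have very different character. Part~1 is a straightforward case analysis on the definition of $\RW$, using Proposition~\ref{abd:tags_prop}; Part~2 is the real work and requires the standard ABD quorum-intersection argument, carefully adapted to the gossip-based propagation needed to cope with flaky channels.

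For Part~1, if $(r,w)\in\RW$ arises from the first disjunct, then there is $w'\in W(\sigma)$ with $(w',r)\in\WR$ and $(w',w)\in\WW$, so by the witness definitions $\tau(r)=\tau(w')<\tau(w)$. If it arises from the second disjunct, then no $w'$ satisfies $(w',r)\in\WR$, and Proposition~\ref{abd:tags_prop} immediately gives $\tau(r)=(0,0)<\tau(w)$.

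For Part~2, I would suppose $o_1$ completes at time $t$ and $o_2$ is invoked at some $t'\ge t$. Whether $o_1$ is a read or a write, its completion requires its invoker $p_i$ to see more than $n/2$ entries with $\wwriteack[p][i].\sseq=\seq_{o_1}$. Tracing each such entry back through the $\STATE$ handler at line~\ref{abd:update_state}, the corresponding process $p$ must have executed the write handler at line~\ref{abd:write} for $o_1$'s propagation message at some time $t_p\le t$; at that moment the update at line~\ref{abd:write:ts} raises $p.\ts$ to at least $\tau(o_1)$, and this bound is preserved thereafter by monotonicity of $\ts$. Symmetrically, the completion of $o_2$'s query phase needs more than $n/2$ $\queryack$ entries with matching seq, each produced by some process executing line~\ref{abd:query:ack} in response to a $\query$-carrying message that could not have existed before $o_2$'s invocation, i.e., before time $t'\ge t\ge t_p$. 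Quorum intersection yields a common $p$; its response to $o_2$ therefore carries a tag $\ge\tau(o_1)$, so the maximum $t^\star$ that $o_2$ computes at line~\ref{abd:rdquery:max} or~\ref{abd:wrquery:max} satisfies $t^\star\ge\tau(o_1)$. For a read $o_2$, $\tau(o_2)=t^\star$, giving the non-strict inequality; for a write $o_2$, $\tau(o_2)$ strictly dominates $t^\star$ in lexicographic order by the $+1$ at line~\ref{abd:wrquery:ts}, giving strict inequality.

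The main obstacle will be the temporal bookkeeping across gossip hops: unlike textbook ABD, the $\wwriteack$ and $\queryack$ entries observed by the invokers of $o_1$ and $o_2$ need not arrive directly from $p$ but may pass through arbitrarily many relays. The care required is to confirm that $p$'s \emph{original} ack-handler execution, which brought $p.\ts$ up to $\tau(o_1)$, indeed precedes $o_1$'s completion, so that monotonicity of $p.\ts$ can then be invoked over the interval from $t_p$ to the later time at which $p$ handles $o_2$'s query. Once this chain of causality is pinned down, the remainder is a routine majority-quorum computation analogous to the original ABD analysis.
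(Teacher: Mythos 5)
Your proof is correct and follows essentially the same route as the paper's: Part~1 by case analysis on the definition of $\RW$ via Proposition~\ref{abd:tags_prop}, and Part~2 by the quorum-intersection argument showing that a majority holds $\ts \geq \tau(o_1)$ after $o_1$'s propagation phase, which then forces the maximum computed in $o_2$'s query phase to be at least $\tau(o_1)$ (strictly exceeded for a write by the increment at line~\ref{abd:wrquery:ts}). Your explicit tracking of causality through the gossip relays is a sound elaboration of what the paper compresses into "the handlers ensure a majority acknowledged $o_1$'s propagation phase" plus monotonicity of $\ts$.
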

\begin{proof}
  \
    \begin{enumerate}
    \item Let $r,w\in V(\sigma)$ be such that $(r,w)\in\RW$. We consider two cases.
        \begin{itemize}
            \item Suppose that for some
                $w'$ we have $(w',r)\in\WR$ and $(w',w)\in\WW$. The
                definition of $\WR$ implies that $\tau(r)=\tau(w')$, and the definition of
                $\WW$ implies that $\tau(w')<\tau(w)$. Then $\tau(r)<\tau(w)$.
            \item Suppose now that $\neg \exists w'.\ (w',r)\in\WR$.
                We show that $\tau(r)=(0,0)$.
                Indeed, if $\tau(r)\neq(0,0)$, then by Proposition~\ref{abd:tags_prop}.\ref{abd:tags_prop:1}, there exists $w\in W(\sigma)$
                such that $\tau(r)=\tau(w)$. But then $(w,r)\in\WR$, contradicting the
                assumption that there is no such write.
                At the same time, Proposition~\ref{abd:tags_prop}.\ref{abd:tags_prop:3} implies that
                $\tau(w)>(0,0)$. Then $\tau(r)<\tau(w)$.
        \end{itemize}
        \item Let $o_1,o_2\in V(\sigma)$ be such that $(o_1,o_2)\in\RT$. The
          handlers at lines~\ref{abd:read_complete}
          and~\ref{abd:write_complete} ensure that $o_1$ completes only
          if a majority of processes acknowledged $o_1$'s propagation phase, and
          by the guard at line~\ref{abd:write:guard}, have $\ts \geq \tau(o_1)$.
          This, together with the fact that $\ts$ is non-decreasing at all processes,
          ensures that, after $o_1$
          completes, a majority of processes have $\ts \geq \tau(o_1)$.
            Suppose that $o_2$ is invoked by a process $p_i$.
            Because $o_2$ is invoked after $o_1$ completes, the guards at
            lines~\ref{abd:read_quorum} and~\ref{abd:write_quorum} ensure that for at least one
            $j \in Q$ (lines~\ref{abd:read_quorum:Q}
            and~\ref{abd:write_quorum:Q}) we have
            $\queryack[j][i].\sts \geq \tau(o_1)$. There are two cases:
            \begin{itemize}
            \item Suppose $o_2$ is a read.  Because the $\ts$ associated to
              $o_2$ is the maximum among those corresponding to processes in $Q$
              (lines~\ref{abd:rdquery:max} and~\ref{abd:rdquery:ts}), we have
              $\tau(o_1)\leq\tau(o_2)$.
            \item Suppose $o_2$ is a write.  Because the $\ts$ associated to
              $o_2$ is greater than the maximum among those corresponding to
              processes in $Q$ (lines~\ref{abd:wrquery:max}
              and~\ref{abd:wrquery:ts}), we have $\tau(o_1)<\tau(o_2)$.
        \end{itemize}
    \end{enumerate}
\end{proof}

\begin{lemma}
    \label{thm:dep-graph-acyc}
    $G=(V(\sigma), \RT(\sigma), \WR, \WW, \RW)$ is an acyclic dependency graph.
\end{lemma}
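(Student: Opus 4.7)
\emph{Plan.} I would first verify that $G$ satisfies the three clauses in the definition of a dependency graph, and then rule out cycles by using the tag function $\tau$ as a potential.

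For the dependency-graph conditions, clause (i) is immediate from the definition of $\WR$. Clause (ii) follows from Proposition~\ref{abd:tags_prop}(1): if $(w_1, r), (w_2, r) \in \WR$ then $\tau(w_1) = \tau(r) = \tau(w_2)$, which forces $w_1 = w_2$. Clause (iii) is exactly Proposition~\ref{abd:tags_prop}(4). For clause (iv), if no write is $\WR$-related to $r$, then $\tau(r)$ matches no write's tag, so Proposition~\ref{abd:tags_prop}(3) gives $\tau(r) = (0,0)$; inspecting the handler at line~\ref{abd:read_quorum} together with the initial values of $\val$ and $\ts$ then shows that $r$ returns $0$. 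That $\WW$ is a total order on $W(\sigma)$ follows from injectivity of $\tau$ on writes (Proposition~\ref{abd:tags_prop}(1)) together with the lexicographic total order on tags, and $\RW$ satisfies the required form by construction.

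For acyclicity, the key observation is that $\tau$ is non-decreasing along every edge of $G$: $\WR$ preserves $\tau$ by construction, $\WW$ strictly increases it by construction, and $\RW$ and $\RT$ are non-decreasing by Lemma~\ref{abd:tau-non-inc}. Moreover, $\tau$ strictly increases along $\WW$, $\RW$, and any $\RT$-edge whose target is a write. If $G$ contained a cycle, summing $\tau$-differences around it would force $\tau$ to be constant on the cycle, ruling out all $\WW$, $\RW$, and $\RT$-into-write edges.

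The final step is a short case analysis on what can remain. Since every edge terminating at a write is $\WW$, $\RW$, or $\RT$ (because $\WR$ points only to reads), a constant-$\tau$ cycle can contain no write; but then every edge in the cycle would connect two reads, and none of $\WR$, $\WW$, $\RW$ does so, leaving only $\RT$. This would give an $\RT$-cycle, contradicting that $\RT$ is a strict real-time precedence order. The main obstacle, and the reason Lemma~\ref{abd:tau-non-inc} carefully distinguishes write-targeted $\RT$-edges, is exactly to close off the alternative where constant-$\tau$ cycles could revisit a write: without the strict increase on $\RT$-into-write we would be unable to keep writes out of such cycles, and $\WR$ edges alone would not raise the potential.
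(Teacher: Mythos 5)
Your proposal is correct and follows essentially the same route as the paper: verify the dependency-graph clauses via Proposition~\ref{abd:tags_prop}, then use $\tau$ as a potential that is non-decreasing along every edge and strictly increasing into writes (Lemma~\ref{abd:tau-non-inc}), so a cycle forces constant $\tau$, excludes writes, leaves only $\RT$-edges between reads, and contradicts real-time precedence. The only difference is that you spell out the dependency-graph verification that the paper dismisses as immediate.
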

\begin{proof}
    From Proposition~\ref{abd:tags_prop} and the
    definitions of $\WR$, $\WW$ and $\RW$ it easily follows
    that $G$ is a dependency graph.
    We now show that $G$ is acyclic.
    By contradiction, assume that the graph contains a cycle $o_1, \dots, o_n = o_1$.
    Then $n > 1$. By Lemma~\ref{abd:tau-non-inc} and the definitions of $\tau$
    and $\WW$, we must have $\tau(o_1) \leq \dots \leq \tau(o_n) = \tau(o_1)$,
    so that $\tau(o_1) = \dots = \tau(o_n)$. Furthermore,
    if $(o,o')$ is an edge of $G$ and $o'$ is a write, then $\tau(o)<\tau(o')$.
    Hence, all the operations in the cycle must be reads, and thus, all the edges
    in the cycle come from $\RT$. Then there exist reads $r_1,r_2$
    in the cycle such that $r_1$ completes before $r_2$ is invoked and $r_2$
    completes before $r_1$ is invoked, which is a contradiction.
\end{proof}

Theorem~\ref{thm:upper-reg} follows from Theorem~\ref{thm:linearizability},
Lemma~\ref{thm:dep-graph-acyc} and the following
lemma showing that the algorithm is live.

\begin{lemma}
    Let $\FS$ be a fail-prone system such that for all
    $f \in \FS$, the graph $\RG{\GG}{f}$ contains a
    connected core $\CQ_f$, and let $\tau: \FS \rightarrow 2^\PP$ be the
    termination mapping such that $\forall f \in \FS.\ \tau(f) = \CQ_f$.
    Then the algorithm in Figure~\ref{fig:abd_protocol} is
    $(\FS,\tau)$-wait-free in the model $\MRU$.
\end{lemma}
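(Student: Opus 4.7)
My plan is to show, for a fixed $f \in \FS$ with connected core $\CQ = \CQ_f$ and any $f$-compliant fair execution, that every operation invoked by a process $p_i \in \CQ$ eventually returns. Each operation (read or write) goes through a query round followed by a propagate round, each of which requires the invoking process to collect acknowledgments carrying the current $\seq$ from a strict majority of processes. Since $|\CQ| > \frac{n}{2}$, it will suffice to show that $p_i$ eventually collects the relevant acknowledgments from every process in $\CQ$.

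The central ingredient will be a gossip propagation lemma: if at some time a process $p_j \in \CQ$ has an entry $\query[k].\sseq$, $\queryack[k][l].\sseq$, $\wwrite[k].\sseq$, or $\wwriteack[k][l].\sseq$ equal to some value $s$, then every $p_{j'} \in \CQ$ eventually has the corresponding entry with sequence $\geq s$. I would prove this by induction on the shortest path distance from $p_j$ to $p_{j'}$ inside the subgraph of $\RG{\GG}{f}$ induced by $\CQ$, which is strongly connected by definition of the connected core. The inductive step exploits that each hop is a correct, hence eventually reliable, channel and that every process broadcasts $\STATE$ periodically forever (line~\ref{abd:flood}); some $\STATE$ message sent after the channel's stabilization time must therefore be delivered, and the monotonic update rules at lines~\ref{abd:update_state:query}--\ref{abd:update_state:writeack} then install the information at the receiver.

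Given the gossip lemma, termination is a four-step chain. When $p_i$ starts an operation with sequence $\seq$ it writes the new entry into $\query[i]$; by the lemma every $p_j \in \CQ$ eventually observes $\query[j].\sseq \geq \seq$, and in fact equals $\seq$ because $p_i$ does not advance $\seq$ until the operation completes. The handler at line~\ref{abd:query} then stamps $\queryack[j][i].\sseq = \seq$ at each such $p_j$. A second application of the gossip lemma brings these acknowledgments back to $p_i$, so $p_i$ eventually has $\queryack[j][i].\sseq = \seq$ for every $p_j \in \CQ$, i.e., for a strict majority; the guard at line~\ref{abd:read_quorum} (resp.\ line~\ref{abd:write_quorum}) then fires and $p_i$ advances to the propagate phase, installing $\wwrite[i].\sseq = \seq$. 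A symmetric second pair of applications of the gossip lemma yields $\wwriteack[j][i].\sseq = \seq$ for all $p_j \in \CQ$ at $p_i$, firing the handler at line~\ref{abd:read_complete} (resp.\ line~\ref{abd:write_complete}) and returning the operation.

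The main obstacle is the gossip lemma itself. Processes outside $\CQ$ may have arbitrarily flaky incident channels, so no step of the propagation argument may rely on a link incident to $\PP \setminus \CQ$; all routing must stay inside the connected core. Moreover, correct channels are only eventually reliable, so messages sent early may be lost and the inductive step has to be carried out ``far enough in the future''. The monotonic keep-the-highest rule in the $\STATE$ handler is what lets a bounded-space implementation simulate unbounded retransmission, and it is precisely this rule that lets the induction conclude without having to pin down when an update was first generated.
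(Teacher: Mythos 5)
Your proposal is correct and follows essentially the same route as the paper's own liveness proof: both arguments propagate each operation's query, query-acknowledgment, write, and write-acknowledgment entries through the gossip mechanism along correct channels inside the strongly connected core, and then use $|\CQ_f| > \frac{n}{2}$ to fire the majority guards for the query and propagate phases. Your explicit gossip lemma with induction on path distance within $\CQ_f$ is in fact slightly more rigorous than the paper's informal appeal to the gossiping mechanism (the paper carries out that induction only in its synchronizer and consensus proofs), and your observation that the received sequence numbers equal rather than merely dominate $\seq$ mirrors the paper's remark that $p_i$ does not increment $\seq$ until the operation completes.
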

\begin{proof}
Suppose a process $p_i \in \CQ_f$ invokes a $\oread$
operation (line~\ref{abd:read_req}). Then $p_i$ increases its $\seq$
number, updates its $\query$ array and sets its $\status$ to $\rdquery$.
Let $p_j\in\CQ_f$. Then $p_i$ and $p_j$ are strongly connected by
correct channels and the gossiping mechanism at lines~\ref{abd:flood} and~\ref{abd:update_state}
ensure that if $p_i$ or $p_j$ propagate its information then it is eventually received by
the other.

Because $p_i$ periodically
propagates its $\query$ array (line~\ref{abd:flood}), then $p_j$ eventually
receives it (line~\ref{abd:update_state}).
Therefore, the guard at line~\ref{abd:update_state:query} is validated
and $p_j$ incorporates $p_i$'s $\query$ into its local copy.
Since the $\seq$ numbers received by $p_j$ from $p_i$ are increasing,
the guard at line~\ref{abd:query} is validated. Then
$p_j$ updates $\queryack[j][i]$ with its $\val$, $\ts$ and the corresponding
$\seq$ number.

Because $p_j$ periodically
propagates its $\queryack$ array (line~\ref{abd:flood}), then $p_i$ eventually
receives it (line~\ref{abd:update_state}).
Therefore, the guard at line~\ref{abd:update_state:queryack} is validated
and $p_i$ incorporates $p_j$'s $\queryack$ into its local copy.
Since $p_j$ was picked arbitrarily, $p_i$ does not increment $\seq$ until
the current $\oread$ completes and $|\CQ_f|>\frac{n}{2}$, the guard at
line~\ref{abd:read_quorum} is validated.
Then $p_i$ updates $\status$ to $\rdpropagate$ and $\wwrite[i]$
with its $\val$, $\ts$ and the corresponding
$\seq$ number.

Because $p_i$ periodically
propagates its $\wwrite$ array (line~\ref{abd:flood}), then $p_j$ eventually
receives it (line~\ref{abd:update_state}).
Therefore, the guard at line~\ref{abd:update_state:write} is validated
and $p_j$ incorporates $p_i$'s $\wwrite$ into its local copy.
Since the $\seq$ numbers received by $p_j$ from $p_i$ are increasing,
then the guard at line~\ref{abd:write} is validated.
Then $p_j$ updates $\wwriteack[j][i]$ with the corresponding
$\seq$ number.

Because $p_j$ periodically
propagates its $\wwriteack$ array (line~\ref{abd:flood}), then $p_i$ eventually
receives it (line~\ref{abd:update_state}).
Therefore, the guard at line~\ref{abd:update_state:writeack} is validated
and $p_i$ incorporates $p_j$'s $\wwriteack$ into its local copy.
Since $p_j$ was picked arbitrarily, $p_i$ does not increment $\seq$ until
the current $\oread$ completes and $|\CQ_f|>\frac{n}{2}$, the guard at
line~\ref{abd:read_complete} is validated. Thus the $\oread$
completes.

The case for a $\owrite$ is similar and we omit it from the proof.
\end{proof}

    \section{Details of the Proofs of Theorems~\ref{thm:consensus-lower-1}-\ref{thm:k-consensus-lower}}
\label{sec:app-reg-from-consensus}

\newcommand{\wwait}{{\sf wait}}
\newcommand{\idx}{{\sf idx}}
\newcommand{\res}{{\sf res}}
\newcommand{\op}{{\sf op}}

\newcommand{\execute}{\mathsf{execute}}

\newcommand{\false}{{\tt FALSE}}
\newcommand{\true}{{\tt TRUE}}
\newcommand{\opp}{\mathit{op}}

As we explained in Section~\ref{sec:consensus-lower},
Theorems~\ref{thm:consensus-lower-1} and~\ref{thm:consensus-lower-2}
follow from the fact that registers can be implemented
from consensus. This is well-known for wait-freedom in the classical crash-stop
model with reliable channels. In this section we
show that this result also holds for obstruction-freedom in our setting.
We do this via the algorithm in Figure~\ref{fig:reg-from-consensus}, where
processes use an array of obstruction-free consensus instances to agree on a
sequence of operations executed on the register.

\begin{figure}[h]
    \begin{tabular}{@{\quad\ \ }l@{}|@{\quad \ \ }l@{}}
    \hspace*{-23pt}
    \scalebox{0.96}{%
        \begin{minipage}[t]{7cm}
        \vspace*{-12pt}
        \begin{algorithm}[H]
            \SetAlgoNoLine
            \DontPrintSemicolon
            \setcounter{AlgoLine}{0}
            \SetInd{0.4em}{0.4em}

            $\assign{\idx}{0}$\;
            $\assign{\val}{\bot}$\;
            \text{let $C$ be an infinite array of}\linebreak
                \text{$(\FS,\tau)$-obstruction-free consensus instances}\;

            \SubAlgo{{\bf upon invocation of} $\mathit{\opp}$}{
                \precond $\op=\bot$\;
                $\assign{\op}{\opp}$\;
                $C[\idx].\propose(\op)$\;
            }
        \end{algorithm}
        \vspace*{-5pt}
        \end{minipage}
    }
&
    \scalebox{0.96}{%
        \begin{minipage}[t]{7.5cm}
        \vspace*{-12pt}
        \begin{algorithm}[H]
            \SetAlgoNoLine
            \DontPrintSemicolon
            \SetInd{0.4em}{0.4em}

            \SubAlgo{{\bf upon }$C[\idx].\decide(\opp)$}{\label{reg-from-con:decide}
                $\assign{\idx}{\idx+1}$\;
                \uIf{$op=\owrite(v)$}{
                    \assign{\val}{v}\;
                }
                \uIf{$\op\neq\bot$}{
                    \uIf{$\opp=\op$}{
                        \lIf{$\op=\oread$}{$\assign{\retval}{\val}$}
                        \lElse{$\assign{\retval}{\ACK}$}
                        {\bf return} $\retval$ {\bf in response to} $\op$\;
                        $\assign{\op}{\bot}$\;
                    }\Else{
                        $C[\idx].\propose(\op)$\;
                    }
                }
            }
        \end{algorithm}
        \vspace*{-5pt}
        \end{minipage}
    }
    \end{tabular}
    \caption{Protocol at a process $p_i$. Let $\mathit{Val}$ be an arbitrary domain of values.
    $\opp$ can either be $\mathit{read}()$ or $\mathit{write}(v)$ for $v\in \mathit{Val}$.}
    \label{fig:reg-from-consensus}
\end{figure}

\begin{lemma}
Let $\FS$ be a fail-prone system and
$\tau: \FS \rightarrow 2^\PP$ be such that
$\forall f=(P,C) \in \FS.\ \tau(f)\neq\emptyset$.
Then the algorithm in Figure~\ref{fig:reg-from-consensus} is an
$(\FS,\tau)$-obstruction-free implementation of a safe register.
\end{lemma}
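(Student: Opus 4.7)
The plan is to prove safeness and obstruction-freedom by treating the sequence of values decided by the array $C[0], C[1], \ldots$ as the linearization of executed register operations: each correct process updates its local $\val$ in accordance with this sequence, and each register operation will be placed at its own consensus index. The safety argument will hinge on a real-time-to-index monotonicity lemma, while the liveness argument will invoke obstruction-freedom of the consensus instances one at a time.

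For safety, my main subclaim will be: if register operations $op$ and $op'$ satisfy $op \to op'$ in the real-time order, then their consensus indices satisfy $k_{op} < k_{op'}$. I would prove this by noting that, once $op$ returns at its invoking process, the decide events for $C[0], \ldots, C[k_{op}]$ have been observed in order, so the agreement property of consensus fixes those decisions globally; since $op'$ is a distinguished value that only its invoker proposes, and that invoker does not propose it until $op'$ is invoked (which is strictly after $op$ returns), none of $C[0], \ldots, C[k_{op}]$ can decide $op'$, forcing $k_{op'} > k_{op}$. With monotonicity in hand, I will linearize the writes and the non-concurrent reads in order of consensus index and verify that each such read $r$ returns the value of the latest preceding write in $o_1, o_2, \ldots$: any write with index strictly between the last write $w^{\star}$ completed before $r$ is invoked and $r$ itself would have to be concurrent with $r$ (it cannot precede $r$ by the choice of $w^{\star}$, and it cannot follow $r$ by monotonicity), contradicting the restriction to non-concurrent reads.

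For liveness, I would suppose $p \in \tau(f)$ invokes $op$ and $op$ eventually executes solo, so there is a suffix $\sigma'$ in which every operation concurrent with $op$ has a crashed invoker. I will first observe that in $\sigma'$ no live process other than $p$ has a pending register operation: any operation invoked after $p$'s invocation is concurrent with $op$ (and its invoker is therefore crashed), and any operation still pending from before $\sigma'$ would likewise be concurrent with $op$. Hence for each consensus index $k \geq \idx_p$ that $p$ traverses, the only non-crashed process with an active $\propose$ call at $C[k]$ is $p$, so $p$'s call executes solo in the consensus sense and, by $(\FS,\tau)$-obstruction-freedom of $C[k]$ at $p$, returns a decision. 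Since each already-crashed process contributed at most one residual proposal, after finitely many iterations $p$ will be the unique proposer at some $C[k^{\star}]$; by validity of consensus the only possible decision there is $op$, so $op$ is returned to the client.

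The hard part will be the monotonicity lemma in the safety argument: ruling out the scenario in which a later-invoked $op'$ gets slotted into an earlier consensus index than an already-completed $op$ requires carefully distinguishing between the moment a consensus decision is observed at a given process and the moment it is fixed globally, and relying on the fact that $op'$ is proposed only by its invoker and only after its own invocation.
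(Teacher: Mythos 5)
Your proposal is correct and, for the substantive part of the lemma (obstruction-freedom), follows essentially the same route as the paper: iterate over consensus instances, note that each of $p$'s $\propose$ calls executes solo because all competing proposers are eventually-crashed faulty processes, conclude that $op$ keeps being re-proposed at successive indices until the finitely many residual proposals of crashed processes are exhausted, and then use validity to force $C[k^{\star}]$ to decide $op$. The paper dismisses safety with a single sentence (``it is easy to see''), so your index-monotonicity argument for linearizing writes and non-concurrent reads is extra detail rather than a divergence, and it is sound.
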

\begin{proof}
It is easy to see that the algorithm implements a safe register. Thus, we focus on proving
obstruction-freedom.
Let $f \in \FS$, $\sigma$ be an $f$-compliant execution of the algorithm
in Figure~\ref{fig:reg-from-consensus}, $p \in \tau(f)$ and $\opp$ be an operation
invoked by $p$ that executes solo in $\sigma$.
We show that $\opp$ must eventually return.

Let $i$ be such that $\opp$ is first proposed in the $i$-th instance of consensus at a time $t$
and assume by contradiction that $\opp$ never returns.
Since $\opp$ executes solo in $\sigma$, it is not concurrent with any operation invoked by any
other correct process. And because $\opp$ never returns, this means that no operation invocation
by a correct process occurs at any time $\geq t$. Then from time $t$ onwards, the only
outstanding operations are either $\opp$ or operations invoked by faulty processes.
Thus, every proposal by $p$ at a time $\geq t$ executes solo in $\sigma$.

We now show that for each $j \geq i$, $\opp$ is proposed by $p$ in the $j$-th instance of consensus $p.C[j]$.
Because $i$ is defined to be the first instance of consensus where $\opp$ is proposed by $p$, the result
trivially holds for $j=i$. Suppose now that $\opp$ is proposed in the $k$-th instance of consensus
for some $k>i$.
Because the $k$-th instance of consensus is $(\FS,\tau)$-obstruction-free and $\propose(\opp)$ executes
solo in $\sigma$, then $\propose(\opp)$ eventually returns with some value $\opp' \neq \opp$.
Therefore, the handler at line~\ref{reg-from-con:decide} guarantees that $\opp$ is proposed by $p$
in the $(k+1)$-th instance of consensus, as required.

Therefore, there exists $r \geq i$ such that $p$ invokes $\propose(\opp)$ in $C[r]$ after every faulty process has crashed
and thus $p$ is the only process to propose in $C[r]$. Because the $r$-th instance of consensus is
$(\FS,\tau)$-obstruction-free and $\propose(\opp)$ executes solo in $\sigma$,
then $C[r]$ decides some value $\opp' \neq \opp$. Because $p$ is the only
process to propose in $C[r]$ and since the decided value must have been proposed, this contradicts the safety
properties of consensus. The contradiction shows that $\opp$ must eventually return in $\sigma$.
\end{proof}

    \section{Proof of Theorem~\ref{theorem:nxp}}
\label{sec:app-sync}

Consider a fail-prone system $\FS$ satisfying the conditions of
Theorem~\ref{theorem:nxp}: for each $f \in \FS$, the graph $\RG{\GG}{f}$
contains a connected core. For the remainder of the section we fix a failure
pattern $f \in \FS$ and let $\CQ$ be the corresponding connected core.
We now show that every $f$-compliant execution of the algorithm in
Figure~\ref{fig:synchronizer_implementation} over the model $\MCU$ satisfies the
properties in Figure~\ref{fig:synchronizer_specification}.
The proof relies on the following propositions, whose easy proof we omit.

\begin{proposition}
    \label{proposition:tjl}
    If a message $\WISH(V)$ is sent at a time $t$, then $V[i] \leq p_i.\currview(t) + 1$
    for every process $p_i$.
\end{proposition}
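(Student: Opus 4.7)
The plan is to prove Proposition~\ref{proposition:tjl} by establishing the following stronger invariant on local state: at every time $t$ and for every pair of processes $p_i, p_j$, we have $p_j.\views[i](t) \leq p_i.\currview(t) + 1$. The proposition then follows immediately, because a $\WISH(V)$ message sent by some $p_j$ at time $t$ (at one of lines~\ref{protocol:sync:adv:wish}, \ref{protocol:sync:periodically:wish}, or~\ref{protocol:sync:wish:wish}) carries $V = p_j.\views(t)$.

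I would argue by induction on the sequence of events in an $f$-compliant execution of the algorithm in Figure~\ref{fig:synchronizer_implementation}. At initialization every entry of $\views$ and every $\currview$ is $0$, so the bound holds trivially. Before the case analysis, I would record the auxiliary monotonicity property that $\currview$ is non-decreasing at every process, since the only assignment to it (line~\ref{protocol:sync:wish:newview}) is guarded by $v' > \currview$ at line~\ref{protocol:sync:wish:guard}. This means any change to the right-hand side of the inequality can only preserve the invariant, so I only need to analyze updates to $\views$.

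There are exactly two such updates. First, when $p_i$ invokes $\adv()$, line~\ref{protocol:sync:adv:views} sets $p_i.\views[i] \leftarrow p_i.\currview + 1$, so the bound holds with equality for the pair $(i, i)$, and all other entries are untouched. Second, when $p_j$ receives a $\WISH(V')$ that some process $p_m$ sent at an earlier time $t' \leq t$, line~\ref{protocol:sync:wish:views} performs $p_j.\views[k] \leftarrow \max(p_j.\views[k], V'[k])$ for each $k$. Applying the inductive hypothesis to $p_m$'s sending event gives $V'[k] \leq p_k.\currview(t') + 1$, and then monotonicity of $p_k.\currview$ between $t'$ and $t$ yields $V'[k] \leq p_k.\currview(t) + 1$. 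Combining this with the inductive hypothesis for the pre-update value of $p_j.\views[k]$ preserves the invariant after the max.

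The only subtle point, which is why the invariant must be phrased in terms of local state rather than directly about messages, is the bridge across the channel: the value $V'[k]$ is frozen at sending time $t'$ while $p_k.\currview$ may continue to grow before $t$; monotonicity supplies exactly the slack needed. I do not anticipate any real obstacle beyond this bookkeeping — the argument is a routine invariance proof, with no interaction with the partial-synchrony assumption or with $\GST$, since the bound depends only on the structure of the code and not on timing or channel reliability.
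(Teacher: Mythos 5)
Your proof is correct: the strengthened local-state invariant $p_j.\views[i](t) \leq p_i.\currview(t) + 1$, together with monotonicity of $\currview$ and the observation that every $\WISH$ carries the sender's current $\views$ array, is exactly the routine invariance argument this statement calls for. The paper itself omits the proof of this proposition (it is listed among the propositions ``whose easy proof we omit''), so there is nothing to contrast with; your write-up supplies the expected argument and handles the one genuine subtlety --- bridging the gap between the send time of a $\WISH$ and the later receive time via monotonicity of the recipient-side $\currview$ --- correctly.
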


\begin{proposition}
    \label{proposition:sjf}
    If a process $p_i$ has $p_i.\views(t)[j] > 0$, then there exists an array $V$
    and a time $t' \leq t$ such that $V[j] = p_i.\views(t)[j]$ and $p_j$ sends
    $\WISH(V)$ at $t'$.
\end{proposition}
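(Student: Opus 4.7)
The plan is to prove Proposition~\ref{proposition:sjf} by induction on the length of the execution prefix up to time $t$. The base case is immediate: before any step is executed, every entry of $\views$ is $0$, so the hypothesis $p_i.\views(t)[j] > 0$ fails and the claim holds vacuously. For the inductive step, it suffices to consider a step that strictly increases $p_i.\views[j]$, since a step that leaves the value unchanged lets us re-use the witness given by the inductive hypothesis. By inspection of Figure~\ref{fig:synchronizer_implementation}, only the assignments at lines~\ref{protocol:sync:adv:views} and~\ref{protocol:sync:wish:views} write into $\views$; the periodic handler at line~\ref{protocol:sync:periodically} merely broadcasts the array and does not modify it, and the remaining lines manipulate $\currview$ or emit $\newview$ notifications.

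I would then handle the two writing sites in turn. First, if the step is an $\adv()$ call at $p_i$ (line~\ref{protocol:sync:adv:views}), then necessarily $j = i$ and the new value $v = \currview + 1$ is written to $\views[i]$. Line~\ref{protocol:sync:adv:wish} then sends $\WISH(\views)$, so taking $V$ to be this array and $t'$ to be the time of the send yields $V[j] = V[i] = v = p_i.\views(t)[j]$ with $p_j = p_i$ as the sender, as required. Second, if the step is the receipt of some $\WISH(V')$ from a sender $p_k$ (line~\ref{protocol:sync:wish:views}) that raises $\views[j]$ to $V'[j] > 0$, then $p_k$ must have had $p_k.\views(t_0)[j] = V'[j]$ at the time $t_0 \le t$ when it assembled and sent $V'$. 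Applying the inductive hypothesis to $p_k$ at $t_0$ produces an array $V''$ and time $t'' \le t_0 \le t$ with $V''[j] = V'[j] = p_i.\views(t)[j]$ and $p_j$ sending $\WISH(V'')$ at $t''$; setting $V = V''$ and $t' = t''$ discharges the step.

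The main obstacle is purely clerical: one must carefully enumerate every site that can change a $\views$ entry (including, crucially, that $p_i$'s own entry $\views[i]$ can be raised by an incoming $\WISH$ just like any other entry, so the ``self'' case is not confined to $\adv()$) and verify that nothing else in the pseudocode introduces a positive value. Once this bookkeeping is done, the receive case reduces traceability at $p_i$ to traceability at the sender $p_k$, and the $\adv()$ case terminates the chain at $p_j$ itself, giving the desired witness via the chain $p_j \to p_{k_m} \to \cdots \to p_{k_1} \to p_i$.
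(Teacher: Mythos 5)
Your proof is correct, and since the paper explicitly omits the proofs of Propositions~\ref{proposition:tjl}--\ref{proposition:ypq} as easy, your induction on the execution prefix --- identifying the two write sites (lines~\ref{protocol:sync:adv:views} and~\ref{protocol:sync:wish:views}) and tracing a positive entry $\views[j]$ back through the chain of $\WISH$ senders to an $\adv()$ call at $p_j$ itself --- is exactly the routine argument being elided. The only nitpick is your reduction to steps that \emph{strictly increase} $p_i.\views[j]$, which tacitly assumes the entry never decreases; this is harmless because the $\max$ at line~\ref{protocol:sync:wish:views} cannot decrease it, and the assignment at line~\ref{protocol:sync:adv:views} is immediately followed by a send of the freshly written array, so a fresh witness exists there whether or not that step is an increase.
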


\begin{proposition}
    \label{proposition:ivd}
    If a process $p_i$ enters a view $v$, then there exists a process $p_j \in \CQ$,
    an array $V$ and a time $t < E_i(v)$ such that $V[j] \geq v$ and $p_j$ sends
    $\WISH(V)$ at $t$.
\end{proposition}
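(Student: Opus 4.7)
My plan is to read off the quantitative guard that causes $p_i$ to enter view $v$, intersect the resulting witness set with the connected core $\CQ$ to find a member whose wish value witnesses $v$ at $p_i$, and then backtrack this value via Proposition~\ref{proposition:sjf} to an originating \WISH send. First, I would observe that the only path to a $\newview(v)$ trigger is through line~\ref{protocol:sync:wish:newview} inside the \WISH-receipt handler, where the triggering view $v$ equals the $v'$ computed at line~\ref{protocol:sync:wish:v}. By definition of $v'$, the set $\{p_k \mid p_i.\views[k] \geq v\}$ has strict majority size at time $E_i(v)$.

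Second, since $\CQ$ is a connected core and thus contains a strict majority of $\PP$, the standard quorum-intersection argument yields some $p_j \in \CQ$ with $p_i.\views(E_i(v))[j] \geq v$. Feeding this into Proposition~\ref{proposition:sjf}, I obtain an array $V$ and a time $t \leq E_i(v)$ with $V[j] = p_i.\views(E_i(v))[j] \geq v$ such that $p_j$ sends $\WISH(V)$ at $t$. This already supplies every ingredient of the conclusion except the strict inequality $t < E_i(v)$.

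The final and only mildly delicate step is to rule out $t = E_i(v)$, which I would handle by splitting on whether $j \neq i$ or $j = i$. In the former case, the value $p_i.\views[j] \geq v$ can only have been installed via the update at line~\ref{protocol:sync:wish:views} upon receipt of some \WISH message carrying it, whose chain of origin leads back to $p_j$'s send at time $t$; since every channel delay is strictly positive, $t$ lies strictly before the current handler execution, hence strictly before $E_i(v)$. In the latter case, $p_i.\views[i] \geq v$ could only have been set by $p_i$'s own execution of line~\ref{protocol:sync:adv:views}, which cannot occur in the same handler as the $\newview$ trigger because $\adv$ never itself invokes $\newview$; consequently the corresponding send at line~\ref{protocol:sync:adv:wish} also precedes $E_i(v)$. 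I expect this temporal bookkeeping to be the only real obstacle, and even it reduces to the observation that $\adv$ and $\newview$ are invoked from disjoint points in the code.
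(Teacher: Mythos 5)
Your proof is correct and follows exactly the route the paper intends: the paper omits the proof of this proposition as ``easy,'' but the majority guard at line~\ref{protocol:sync:wish:v}, the intersection of that majority with the connected core $\CQ$ (which by definition contains more than $\frac{n}{2}$ processes), and the backtracking of the entry $\views[j]$ to an originating $\WISH$ send via Proposition~\ref{proposition:sjf} are precisely the ingredients the paper sets up for this purpose. Your extra bookkeeping to upgrade $t \leq E_i(v)$ to the strict inequality $t < E_i(v)$ (positive channel delay for the last hop when $j \neq i$, and the fact that $\adv$ and the $\newview$ trigger are distinct local events when $j = i$) is sound and, if anything, more explicit than the paper's own treatment.
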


\begin{proposition}
    \label{proposition:ypq}
    If a process $p_i$ sends $\WISH(V)$ with $V[i] = v+1$ at a time $t$, then
    $A_i(v)\fdef\ \land\ A_i(v) \leq t$.
\end{proposition}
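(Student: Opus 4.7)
The plan is to trace the value $v+1$ sitting in $p_i.\views[i]$ back to its first appearance and rule out every explanation except a call to $\adv$ at view $v$. All three $\WISH$ send sites in Figure~\ref{fig:synchronizer_implementation} (lines~\ref{protocol:sync:adv:wish}, \ref{protocol:sync:periodically:wish}, and \ref{protocol:sync:wish:wish}) transmit $V = p_i.\views$ verbatim, so the hypothesis $V[i] = v+1$ at send time $t$ translates directly into $p_i.\views[i](t) = v+1$. Let $t^* \leq t$ be the earliest time at which $p_i.\views[i] = v+1$ holds; this is well defined because $t$ itself is a witness. Inspecting the algorithm, $\views[i]$ is modified at only two sites: the assignment $\views[i] \leftarrow \currview + 1$ inside $\adv$ (line~\ref{protocol:sync:adv:views}), and the max-update $\views[i] \leftarrow \max(\views[i], V'[i])$ in the receipt handler (line~\ref{protocol:sync:wish:views}). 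I would then case-split on which of these two events fires at $t^*$.

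In the first case, the fresh assignment $\views[i] \leftarrow \currview + 1$ produces the value $v+1$, so $p_i.\currview(t^*) = v$, which is exactly the statement that $p_i$ invokes $\adv$ while in view $v$. By the definition of $A_i$, this yields $A_i(v)\fdef$ with $A_i(v) = t^* \leq t$, as desired.

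To rule out the second case, suppose $p_i$ executes the receipt handler for some $\WISH(V')$ sent by a process $p_j$ and the max-update produces $v+1$ at $t^*$. Minimality of $t^*$ forces $p_i.\views[i] < v+1$ immediately before $t^*$, hence $V'[i] = v+1$. The message $V'$ was sent at some $t'' < t^*$, and since a sent $\WISH$ equals the sender's $\views$ snapshot at the send time, $p_j.\views[i](t'') = v+1 > 0$. Applying Proposition~\ref{proposition:sjf} at $p_j$ yields an array $V''$ and time $t''' \leq t''$ such that $p_i$ itself sends $\WISH(V'')$ at $t'''$ with $V''[i] = v+1$; unpacking the send once more gives $p_i.\views[i](t''') = v+1$ at $t''' < t^*$, contradicting the minimality of $t^*$.

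The main subtlety will be the minimality bookkeeping in the impossibility argument for the second case: one must be careful to track the \emph{first} time that $p_i.\views[i]$ equals $v+1$ exactly (not merely exceeds $v$), because the assignment in $\adv$ is not a max and could in principle overwrite a larger value. With that choice, the rest is a clean causal trace using Proposition~\ref{proposition:sjf}, which encodes the fact that the only origin of a positive value in the $i$-th slot of any process's $\views$ array is $p_i$ itself.
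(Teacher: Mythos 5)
Your proof is correct. The paper itself omits the proof of this proposition (it is listed among the auxiliary propositions ``whose easy proof we omit''), so there is no official argument to compare against; your backward trace of the first time $p_i.\views[i]$ equals $v+1$, with the case split between the assignment in $\adv$ (line~\ref{protocol:sync:adv:views}) and the max-update in the $\WISH$ handler (line~\ref{protocol:sync:wish:views}), and the use of Proposition~\ref{proposition:sjf} to rule out the second case by minimality, is exactly the intended kind of argument and fills the gap completely.
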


\begin{lemma}[Validity]
    \label{lemma:rcq}
    A process only enters $v + 1$ if some process from $\CQ$ has attempted
        to advance from $v$:
\[
        \forall i, v.\ E_i(v + 1)\fdef\ \implies \AF{\CQ}(v)\fdef\ \land\
        \AF{\CQ}(v) < E_i(v + 1).
\]
\end{lemma}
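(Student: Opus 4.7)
The plan is to fix $i$ and $v$ with $E_i(v+1)\fdef$ and exhibit a $p_k \in \CQ$ with $A_k(v)\fdef$ and $A_k(v) < E_i(v+1)$. A first attempt would unfold the entry condition at line~\ref{protocol:sync:wish:v}: at time $E_i(v+1)$, more than $n/2$ processes must have $p_i.\views[k] \geq v+1$, so quorum intersection with $\CQ$ (which also has more than $n/2$ members) produces some $p_k \in \CQ$ with this property. Propositions~\ref{proposition:sjf} and~\ref{proposition:ypq} then say that $p_k$ invoked $\adv$ from some view $w - 1 \geq v$. However, $w - 1$ may strictly exceed $v$, so this does not yet establish Validity.

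To repair this, I will apply the argument at an extremal moment. Let $t^\ast$ be the earliest time at which any process enters any view $\geq v+1$; this is well-defined (the set is nonempty since $E_i(v+1)\fdef$) and satisfies $t^\ast \leq E_i(v+1)$. Let $p_j$ be the process entering view $v_j \geq v+1$ at $t^\ast$. Proposition~\ref{proposition:ivd} applied to this entry event then yields some $p_k \in \CQ$, an array $V$ and a time $t < t^\ast$ such that $V[k] \geq v_j \geq v+1$ and $p_k$ sent $\WISH(V)$ at $t$. Writing $w = V[k]$, Proposition~\ref{proposition:ypq} gives $A_k(w-1)\fdef$ and $A_k(w-1) \leq t < t^\ast$.

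The proof concludes by a case split on $w$. If $w = v+1$, we are done, since then $A_k(v) \leq t < t^\ast \leq E_i(v+1)$ and $p_k \in \CQ$, so $\AF{\CQ}(v)\fdef$ and $\AF{\CQ}(v) < E_i(v+1)$. If instead $w > v+1$, then $p_k$ invoked $\adv$ while in view $w - 1 \geq v+1$, which forces $p_k$ to have previously entered $w-1$, i.e., $E_k(w-1) \leq A_k(w-1) < t^\ast$; as $w-1 \geq v+1$, this contradicts the choice of $t^\ast$ as the earliest entry into any view $\geq v+1$. The main obstacle is exactly this last step: quorum intersection alone cannot pin $w$ to $v+1$, and an induction on the view number is circular because intermediate views may be skipped entirely. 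Selecting the earliest entry into a view $\geq v+1$ is the device that cuts the chain at the correct place. A minor check I would verify is the strictness $A_k(w-1) < t^\ast$, which follows from the positive communication delay between $p_k$ sending $\WISH(V)$ at $t$ and $p_j$ processing it to enter $v_j$ at $t^\ast$.
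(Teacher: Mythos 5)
Your proof is correct, and it shares the paper's skeleton---Proposition~\ref{proposition:ivd} to extract a $\WISH$ from a process in $\CQ$, Proposition~\ref{proposition:ypq} to convert that $\WISH$ into an $\adv$ call, and a case split on whether the wished-for view is exactly $v+1$---but it resolves the overshoot case by a genuinely different device. The paper applies Proposition~\ref{proposition:ivd} directly to the entry $E_i(v+1)$ and, when the wished-for view exceeds $v+1$, constructively descends: it takes the \emph{first} process to enter any view $>v$, intersects the $>\frac{n}{2}$ quorum behind that entry (lines~\ref{protocol:sync:wish:v} and~\ref{protocol:sync:wish:guard}) with $\CQ$, and uses Propositions~\ref{proposition:sjf} and~\ref{proposition:tjl} together with minimality to pin down a process $p_l \in \CQ$ whose $\currview$ at the relevant moment was exactly $v$, hence $A_l(v)\fdef$. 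You instead make the extremal choice up front---the earliest entry $t^\ast$ into any view $\geq v+1$---so that in the overshoot case $w > v+1$ the fact $A_k(w-1)\fdef$ forces $E_k(w-1)\fdef$ with $E_k(w-1) \leq A_k(w-1) < t^\ast$ and $w-1 \geq v+1$, an immediate contradiction with the minimality of $t^\ast$. This buys a shorter argument that dispenses with Propositions~\ref{proposition:tjl} and~\ref{proposition:sjf} entirely, at the cost of being non-constructive in that branch; both routes rest on the same implicit fact (used by the paper in Lemma~\ref{lemma:yqh}) that $\currview = w-1 > 0$ at an $\adv$ call entails a prior entry into $w-1$. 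One small correction: the strictness $A_k(w-1) < t^\ast$ needs no appeal to positive message delay---it is already supplied by the strict inequality $t < E_i(v)$ in the statement of Proposition~\ref{proposition:ivd}.
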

\begin{proof}
    Let $i$ and $v$ be such that $E_i(v + 1)\fdef$.
    By Proposition \ref{proposition:ivd}, there exists a process $p_j \in \CQ$, an array $V$
    and a time $t < E_i(v + 1)$ such that $V[j] \geq v + 1$ and $p_j$ sends $\WISH(V)$
    at $t$.
    Suppose $V[j]=v+1$.
    Then, by Proposition \ref{proposition:ypq}, $A_j(v)\fdef\ \land\ A_j(v) \leq t$.
    Therefore, $\AF{\CQ}(v)\fdef$ and $\AF{\CQ}(v) \leq A_j(v) \leq t < E_i(v + 1)$.

    Suppose now $V[j] > v + 1$.
    By Proposition \ref{proposition:tjl}, $p_j.\currview(t) \geq V[j] - 1 > v$.
    Let $p_k$ be the first process to enter a view $v_k > v$ at a time $t_k \leq t$.
    The process $p_k$ enters $v_k$ upon executing line \ref{protocol:sync:wish:newview},
    and by lines \ref{protocol:sync:wish:v} and \ref{protocol:sync:wish:guard},
    $p_k.\views(t_k)$ includes more than $\frac{n}{2}$ entries $\geq v_k$.
    Because there are at most $\frac{n}{2}$ processes not in $\CQ$, there exists a process $p_l \in \CQ$
    and a view $v_l \geq v_k$ such that $p_k.\views(t_k)[l] = v_l$.
    By Proposition \ref{proposition:sjf}, there exists an array $V'$ and a time $t_l \leq t_k$ such
    that $V'[l] = v_l$ and $p_l$ sends $\WISH(V')$ at $t_l$.
    By Proposition \ref{proposition:tjl}, $p_l.\currview(t_l) \geq v_l - 1 \geq v_k - 1 \geq v$.
    Because $p_k$ is the first process to enter a view $>v$ at $t_k$ and $t_l \leq t_k$
    then $p_l.\currview(t_l) \leq v$.
    Therefore, $p_l.\currview(t_l) = v$ and $v_l = v + 1$.
    Then, by Proposition \ref{proposition:ypq}, $A_l(v)\fdef\ \land\ A_l(v) \leq t_l$.
    Therefore, $\AF{\CQ}(v)\fdef$ and $\AF{\CQ}(v) \leq A_l(v) \leq t_l \leq t_k \leq t
    < E_i(v + 1)$.
\end{proof}

\begin{lemma}
    \label{lemma:yqh}
    $\CQ$ does not skip views:
    \[
      \forall v, v'.\ 0 < v < v' \land \AEF(v')\fdef\ \implies \EF{\CQ}(v)\fdef\
      \land\ \EF{\CQ}(v) < \AEF(v').
      \]
\end{lemma}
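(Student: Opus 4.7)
The plan is to prove the lemma by strong induction on $v'$, using Validity (Lemma \ref{lemma:rcq}) as the workhorse and chaining it backwards one view at a time. The key auxiliary observation I would establish up front is: for every $v > 0$ and every process $p_j$, if $A_j(v)\fdef$ then $E_j(v)\fdef$ and $E_j(v) \leq A_j(v)$. This follows from inspecting the algorithm in Figure \ref{fig:synchronizer_implementation} together with the specification convention that a process does not call $\adv()$ twice without an intervening $\newview$. Concretely, $\adv$ sets $\views[i] := \currview + 1$, and $\currview$ is initialized to $0$ and only ever updated at line \ref{protocol:sync:wish:newview}, which is the same line that triggers $\newview$. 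Hence a process cannot attempt to advance from a positive view without having first entered it.

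For the base case $v' = v+1$, I would apply Validity to a process $p_i$ achieving $E_i(v+1) = \AEF(v+1)$, obtaining some $p_j \in \CQ$ with $A_j(v) = \AF{\CQ}(v) < \AEF(v+1)$. The auxiliary fact (applicable since $v > 0$) gives $E_j(v) \leq A_j(v)$, hence $\EF{\CQ}(v) \leq \AF{\CQ}(v) < \AEF(v+1)$, which is what is required. For the inductive step $v' > v+1$, exactly the same reasoning applied at $v'$ produces some $p_k \in \CQ$ with $E_k(v'-1) \leq A_k(v'-1) < \AEF(v')$, where $v'-1 > 0$. This simultaneously establishes that $\AEF(v'-1)\fdef$ and that $\AEF(v'-1) \leq E_k(v'-1) < \AEF(v')$; since $v < v'-1$, the induction hypothesis at $v'-1$ yields $\EF{\CQ}(v)\fdef$ with $\EF{\CQ}(v) < \AEF(v'-1) < \AEF(v')$, closing the argument.

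The only real obstacle is the auxiliary fact about $A_j(v)$ entailing $E_j(v)$: it is not isolated as a separate proposition in the preceding list, yet it is essential, because Validity by itself only exposes $\AF{\CQ}(v)$ while the lemma demands $\EF{\CQ}(v)$. I expect a one-line appeal to the non-repetition assumption on $\adv$ plus inspection of how $\currview$ evolves to suffice. Everything else is a mechanical Validity chain from $v'$ back down to $v$, and the write-up should be short.
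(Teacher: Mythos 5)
Your proposal is correct and follows essentially the same route as the paper's proof: an induction that chains Validity backwards from $v'$ to $v$, combined with the observation (stated inline in the paper as $p_i.\currview(A_i(v'-1)) = v'-1$, hence $E_i(v'-1)\fdef$ and $E_i(v'-1) \le A_i(v'-1)$) that a process attempting to advance from a positive view must have entered it first. The only difference is cosmetic: you induct on $v'$ and recurse through $\AEF(v'-1)$, whereas the paper inducts on the offset $k$ and chains through $\EF{\CQ}(v'-k)$.
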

\begin{proof}
    Let $v' \geq 2$ and assume that a process enters $v'$, so that $\AEF(v')\fdef$.
    We prove by induction that for each $k$ satisfying $1 \leq k \leq v' - 1$,
    some process in $\CQ$ enters $v' - k$ earlier than $\AEF(v')$ and thus
    $\EF{\CQ}(v' - k)\fdef\ \land\ \EF{\CQ}(v' - k) < \AEF(v')$.

    For the base case assume that a process enters $v'$.
    Then by Lemma \ref{lemma:rcq}, there exists a process $p_i \in \CQ$ such that
    $A_i(v' - 1)\fdef\ \land\ A_i(v' - 1) < \AEF(v')$.
    Because $p_i.\currview(A_i(v' - 1)) = v' - 1$ then $E_i(v' - 1)\fdef\ \land\
    E_i(v' - 1) \leq A_i(v' - 1) < \AEF(v')$.

    For the inductive step assume that the required holds for some $k$
    so that $\EF{\CQ}(v' - k)\fdef\ \land\ \EF{\CQ}(v' - k) < \AEF(v')$.
    Then by Lemma \ref{lemma:rcq}, there exists a process $p_i \in \CQ$ such that
    $A_i(v' - k - 1)\fdef\ \land\ A_i(v' - k - 1) < \EF{\CQ}(v' - k)$.
    Because $p_i.\currview(A_i(v' - k - 1)) = v' - k - 1$ then
    $E_i(v' - k - 1)\fdef\ \land\ E_i(v' - k - 1) \leq A_i(v' - k - 1) <
    \EF{\CQ}(v' - k) < \AEF(v')$.
\end{proof}

\begin{lemma}
    \label{lemma:lgk}
    Let $\Delta={\tt diameter}(\CQ)\delta$.
    Consider a view $v > 0$ and assume that $v$ is entered by some process in $\CQ$.
    If $\EF{\CQ}(v) \geq \GST$ and no process in $\CQ$ attempts to
    advance from $v$ before $\EF{\CQ}(v) + \Delta$, then all processes in $\CQ$
    enter $v$ and $\EL{\CQ}(v) \leq \EF{\CQ}(v) + \Delta$.
\end{lemma}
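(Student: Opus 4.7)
Set $t_0 = \EF{\CQ}(v)$ and let $p_i \in \CQ$ be the $\CQ$-member that first enters $v$, so $E_i(v) = t_0$. My plan splits into two steps. First, I will show that no process in $\PP$ can enter a view strictly greater than $v$ before $t_0 + \Delta$. Second, I will argue that the $\WISH$ emitted by $p_i$ on entering $v$ percolates along a shortest path in $\RG{\GG}{f}$ from $p_i$ to any target $p_k \in \CQ$, bringing $p_k$ into $v$ within $\diameter(\CQ)\,\delta = \Delta$.

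For the first step, suppose for contradiction that some process enters a view $v'' > v$ at a time $t \leq t_0 + \Delta$, and pick such an event with $v''$ minimal. Lemma~\ref{lemma:rcq} (Validity) yields some $p_j \in \CQ$ that invoked $\adv$ from $v''-1$ strictly before $t$. If $v''-1 = v$, this directly contradicts the lemma's hypothesis that no $\CQ$-member attempts to advance from $v$ before $t_0 + \Delta$. If $v''-1 > v$, then $p_j$ must first have entered $v''-1 > v$ at some time $\leq t_0 + \Delta$, contradicting the minimality of $v''$. Hence throughout $[t_0, t_0 + \Delta]$ every view entered by any process is $\leq v$.

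For the second step, fix an arbitrary $p_k \in \CQ$ and take a shortest path $p_i = q_0, q_1, \ldots, q_l = p_k$ in $\RG{\GG}{f}$ with $l \leq \diameter(\CQ)$. I will show by induction on $j$ that $q_j$ enters $v$ by time $t_0 + j\delta$. The base $j = 0$ is the definition of $t_0$. For the step, the IH provides a time $t_{j-1} \leq t_0 + (j-1)\delta$ at which $q_{j-1}$ enters $v$. Since $q_{j-1}$ never entered a view above $v$ beforehand (the first step rules this out inside the window; before $t_0$ it is ruled out because no $\CQ$ process could have called $\adv$ from $v$ yet, so by Validity no one could have entered $v+1$ either), the value $v'$ computed by $q_{j-1}$ at line~\ref{protocol:sync:wish:v} at that moment equals $v$, which by the formula at that line means $q_{j-1}$'s $\views$ array then contains $>\frac{n}{2}$ entries $\geq v$; monotonicity of $\views$ keeps them forever. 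The $\WISH$ broadcast at line~\ref{protocol:sync:wish:wish} (supplemented, if necessary, by the periodic resend at line~\ref{protocol:sync:periodically}) traverses the correct channel $(q_{j-1},q_j) \in \RG{\GG}{f}$ in at most $\delta$ once we are past $\GST$; after $q_j$ merges via line~\ref{protocol:sync:wish:views}, its own $v'$ at line~\ref{protocol:sync:wish:v} is $\geq v$, and by the first step it equals exactly $v$, so $q_j$ enters $v$ by $t_{j-1}+\delta \leq t_0 + j\delta$ (if not earlier). Instantiating at $j = l$ yields $E_{p_k}(v) \leq t_0 + \diameter(\CQ)\,\delta = \EF{\CQ}(v) + \Delta$, which delivers both halves of the conclusion.

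\textbf{The main obstacle} I anticipate is the timing analysis when an intermediate $q_{j-1}$ lies outside $\CQ$ and happens to have entered $v$ at some $t_{j-1} < \GST$: the $\WISH$ fired at $t_{j-1}$ may be dropped by the flaky pre-$\GST$ behaviour, so the argument must fall back on the first post-$\GST$ periodic retransmission from line~\ref{protocol:sync:periodically}. Showing that this later $\WISH$ still carries the critical $>\frac{n}{2}$ entries $\geq v$ is immediate from monotonicity of $\views$; the delicate part is verifying that the extra wait still fits inside the $j\delta$ budget, using $t_0 \geq \GST$ to absorb the overhead. This book-keeping is, in my view, the only non-mechanical ingredient, as the propagation argument itself is a clean induction on path length in $\RG{\GG}{f}$ fuelled by the $\WISH$ that each newcomer to $v$ is obliged to send.
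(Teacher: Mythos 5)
Your proof is correct and follows essentially the same route as the paper's: first rule out any process holding a view above $v$ during the window $[\EF{\CQ}(v), \EF{\CQ}(v)+\Delta]$ (the paper does this via its Validity and no-skipping lemmas, you via a minimal-counterexample argument -- both fine), then propagate membership in $v$ by induction on path length at cost $\delta$ per hop. The one thing worth settling is that your anticipated ``main obstacle'' is vacuous: since $\CQ$ is a strongly connected component of $\RG{\GG}{f}$, any vertex $w$ on a path from $p_i\in\CQ$ to $p_k\in\CQ$ satisfies $p_i \rightsquigarrow w \rightsquigarrow p_k \rightsquigarrow p_i$, so $w\in\CQ$; hence every intermediate $q_{j-1}$ enters $v$ at a time $\geq \EF{\CQ}(v) \geq \GST$, the $\WISH$ sent at line~\ref{protocol:sync:wish:wish} is never lost, the periodic resend of line~\ref{protocol:sync:periodically} is never needed, and no $\rho$ overhead threatens the $j\delta$ budget.
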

\begin{proof}
    Suppose there exists a process $p_i$ and a time $t \leq \EF{\CQ}(v) + \Delta$
    such that $p_i.\currview(t) = v' > v$.
    By Lemma \ref{lemma:yqh}, $\AEF(v + 1)\fdef\ \land\ \AEF(v + 1) \leq \AEF(v')
    \leq t$.
    Thus, by Lemma \ref{lemma:rcq}, $\AF{\CQ}(v)\fdef\ \land\
    \AF{\CQ}(v) < \AEF(v + 1) \leq t \leq \EF{\CQ}(v) + \Delta$, contradicting
    the assumption that no process in $\CQ$ attempts to advance from
    $v$ before $\EF{\CQ}(v) + \Delta$.
    Therefore, for all times $t \leq \EF{\CQ}(v) + \Delta$ and processes $p_i$,
    $p_i.\currview(t) \leq v$.

    Let $p_i$ be the process in $\CQ$ to enter $v$ at the time $\EF{\CQ}(v)$
    and $p_j \in \CQ$.
    We prove by induction on the distance $d$ from $p_i$ to $p_j$ (considering the
    reliable links that make up $\CQ$) that $E_j(v)\fdef \land E_j(v) \leq E_i(v) + d\delta$.
    Since $p_i$ is the only process at distance $0$ from $p_i$, the result trivially
    holds for $d = 0$.
    Assume now the result holds for all processes in $\CQ$ at a distance $d$ from $p_i$
    and suppose the distance from $p_i$ to $p_j$ is $d + 1$.
    Therefore, there exists a process $p_k \in \CQ$ such that the distance from $p_i$ to $p_k$
    is $d$ and the distance from $p_k$ to $p_j$ is $1$.
    By induction hypothesis $p_k$ enters $v$ no later than $E_i(v) + d\delta$.
    The process $p_k$ enters view $v$ upon executing line \ref{protocol:sync:wish:newview},
    and by lines \ref{protocol:sync:wish:v} and \ref{protocol:sync:wish:guard}, $p_k.\views(E_k(v))$ includes
    more than $\frac{n}{2}$ entries $\geq v$.

    Line \ref{protocol:sync:wish:wish} guarantees that upon entering $v$, $p_k$ sends a
    $\WISH(p_k.\views(E_k(v)))$  to every process.
    Since the link between $p_k$ and $p_j$ is timely after $\GST$, the
    $\WISH$ message is received by $p_j$ at a time $t_j \leq E_i(v) + (d + 1)\delta$.
    Upon receipt of the $\WISH$ message, $p_j$ executes lines
    \ref{protocol:sync:wish:viewsfor} and \ref{protocol:sync:wish:views},
    ensuring that $p_j.\views(t_j) \geq p_k.\views(E_k(v))$.
    Therefore, $p_j.\views(t_j)$ includes more than $\frac{n}{2}$ entries $\geq v$, and thus $p_j$
    is guaranteed to enter a view $v' \geq v$ no later than $t_j$.
    Because $t_j \leq \EF{\CQ}(v) + \Delta$ and no process can have
    a view $> v$ before $\EF{\CQ}(v) + \Delta$, it follows that $v' = v$.
    Therefore, $p_j$ enters $v$ no later than $t_j$ and thus
    $E_j(v) \leq t_j \leq E_i(v) + (d + 1)\delta$.
    Thus, all processes in $\CQ$ enter $v$ and $\EL{\CQ}(v) \leq \EF{\CQ}(v) + \Delta$.
\end{proof}

\begin{lemma}[Bounded Entry]
    \label{lemma:awl}
    For some $\V$ and $d$, if a process from $\CQ$ enters a view
    $v \geq \V$ and no process from $\CQ$ attempts to advance to a higher view within time $d$,
    then every process from $\CQ$ will enter $v$ within $d$:
    \begin{align*}
      &\exists \V,d.\ \forall v \geq \V.\ \EF{\CQ}(v)\fdef\ \land\
      \neg(\AF{\CQ}(v) < \EF{\CQ}(v) + d) \implies \\
      & (\forall p_i \in \CQ.\ E_i(v)\fdef) \land (\EL{\CQ}(v) \leq \EF{\CQ}(v) + d).
      \end{align*}
\end{lemma}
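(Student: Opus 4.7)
The plan is to set $d = \Delta = \mathsf{diameter}(\CQ)\delta$ from Lemma~\ref{lemma:lgk} and choose $\V$ large enough so that the additional hypothesis $\EF{\CQ}(v) \geq \GST$ needed by Lemma~\ref{lemma:lgk} is automatically satisfied for every $v \geq \V$ with $\EF{\CQ}(v)\fdef$. Once those hypotheses are met, Lemma~\ref{lemma:lgk} gives the conclusion directly, so the proof reduces entirely to picking $\V$.

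First I would define $\V$ as follows. Under the standard well-formedness assumption of the model that each process takes only finitely many steps in any bounded real-time interval, only finitely many view-entry events can occur across $\CQ$ in the interval $[0, \GST]$. Let $V_{\max}$ denote the largest view entered by any process of $\CQ$ before $\GST$ (set $V_{\max} = 0$ if no such entry occurs), and let $\V = V_{\max} + 1$. Then, by construction, for every view $v \geq \V$ with $\EF{\CQ}(v)\fdef$ we have $\EF{\CQ}(v) \geq \GST$.

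Next I would instantiate Lemma~\ref{lemma:lgk}. Fix any $v \geq \V$ such that $\EF{\CQ}(v)\fdef$ and $\neg(\AF{\CQ}(v) < \EF{\CQ}(v) + d)$, which is equivalent to saying that either no process of $\CQ$ attempts to advance from $v$ or the first such attempt occurs no earlier than $\EF{\CQ}(v) + d$. Combined with the choice of $d = \Delta$ and $\EF{\CQ}(v) \geq \GST$ established above, all three hypotheses of Lemma~\ref{lemma:lgk} are satisfied. That lemma then delivers exactly the two conjuncts required: every process in $\CQ$ enters $v$, and $\EL{\CQ}(v) \leq \EF{\CQ}(v) + d$.

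The only real obstacle is justifying that $V_{\max}$ is finite, i.e., that view entries before $\GST$ form a finite set. This relies on the implicit well-formedness assumption that processes take steps at a bounded (or at least locally finite) rate, which is standard in partial synchrony models~\cite{dls}; without it, an adversary could force an unbounded number of $\currview$ increments into the bounded interval $[0,\GST]$ and no uniform $\V$ would exist. Everything else is bookkeeping: monotonicity of $\currview$ at each process (built into the algorithm since line~\ref{protocol:sync:wish:guard} only increases $\currview$) ensures that $\V = V_{\max}+1$ is indeed a threshold beyond which first entries happen post-$\GST$, and then Lemma~\ref{lemma:lgk} does the rest.
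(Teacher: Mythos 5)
Your proof is correct and takes essentially the same route as the paper: the paper also sets $d = \mathtt{diameter}(\CQ)\delta$ and $\V = \max\{v \mid \EF{\CQ}(v)\fdef \wedge \EF{\CQ}(v) < \GST\} + 1$ (your $V_{\max}+1$) and then invokes Lemma~\ref{lemma:lgk} directly. Your additional remark on the finiteness of $V_{\max}$ is a reasonable elaboration of a point the paper leaves implicit.
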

\begin{proof}
    Let $\V = \max\{v\ |\ \EF{\CQ}(v)\fdef\ \land\ \EF{\CQ}(v) < \GST\} + 1$
    so that $\forall v \geq \V.\ \EF{\CQ}(v)\fdef\ \implies \EF{\CQ}(v) \geq \GST$
    and $d = {\tt diameter}(\CQ)\delta$.
    Then, by Lemma \ref{lemma:lgk}, Bounded Entry holds.
\end{proof}

Given arrays $V$ and $V'$ we say that $V \leq V'$ iff $\forall i.\ V[i] \leq V'[i]$.

\begin{lemma}
    \label{lemma:nvg}
    If a process sends $\WISH(V)$ before sending $\WISH(V')$, then $V \leq V'$.
\end{lemma}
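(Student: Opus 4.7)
The plan is to reduce the lemma to the fact that at each process the $\views$ array is componentwise non-decreasing over time. By inspection of the algorithm in Figure~\ref{fig:synchronizer_implementation}, every $\WISH$ message sent by a process $p$ carries a snapshot of $p$'s current $\views$ array (lines \ref{protocol:sync:adv:wish}, \ref{protocol:sync:periodically:wish}, \ref{protocol:sync:wish:wish}). Hence, if $p$ sends $\WISH(V)$ at time $t_1$ and $\WISH(V')$ at time $t_2 > t_1$, then $V = p.\views(t_1)$ and $V' = p.\views(t_2)$, so the lemma will follow once we establish $p.\views(t_1) \leq p.\views(t_2)$ componentwise.

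The $\views$ array at $p_i$ is modified only at line \ref{protocol:sync:wish:views}, where each entry is assigned the maximum of its old value and the received value, and at line \ref{protocol:sync:adv:views}, where $\views[i]$ is overwritten with $\currview + 1$. The former cannot decrease any entry. The only concern is the latter: it could potentially decrease $\views[i]$ if an earlier incoming $\WISH$ had raised $\views[i]$ above $\currview + 1$. So the main obstacle is to rule out that possibility.

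I would do so via the auxiliary invariant that at every process $p_i$ and time $t$, $p_i.\views(t)[i] \leq p_i.\currview(t) + 1$, proved by induction on the execution. The base case is trivial since both sides are $0$ initially. For the inductive step, line \ref{protocol:sync:adv:views} sets the left-hand side to the right-hand side, preserving the bound. The only other action that touches $\views[i]$ at $p_i$ is line \ref{protocol:sync:wish:views} upon receipt of some $\WISH(V)$ sent at an earlier time $t_s$: by Proposition~\ref{proposition:tjl} we have $V[i] \leq p_i.\currview(t_s) + 1$, and since $\currview$ at $p_i$ can only grow (it is modified only at line \ref{protocol:sync:wish:newview} under the guard $v' > \currview$), $p_i.\currview(t_s) \leq p_i.\currview(t)$ at the receipt time $t$, whence $V[i] \leq p_i.\currview(t) + 1$; combined with the inductive hypothesis for $\views[i]$, the $\max$ preserves the bound. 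Finally, updates to $\currview$ itself only increase it, so the right-hand side is non-decreasing and the invariant is trivially preserved.

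Given this invariant, the assignment at line \ref{protocol:sync:adv:views} replaces a value bounded above by $\currview + 1$ with exactly $\currview + 1$, and hence cannot decrease $\views[i]$. Combined with the observation that line \ref{protocol:sync:wish:views} uses a $\max$, this gives componentwise monotonicity of $p.\views$ over time at every process $p$, from which the lemma follows directly.
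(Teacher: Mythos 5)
Your proposal is correct and follows essentially the same route as the paper: reduce the lemma to componentwise monotonicity over time of the sender's local $\views$ array, noting that every $\WISH$ carries a snapshot of that array. The only difference is one of rigor: where the paper simply asserts that lines \ref{protocol:sync:adv:views} and \ref{protocol:sync:wish:views} guarantee non-decrease, you correctly flag that line \ref{protocol:sync:adv:views} is an unconditional overwrite and close that gap with the invariant $p_i.\views[i] \leq p_i.\currview + 1$ (the diagonal case of Proposition~\ref{proposition:tjl}), which is a welcome addition rather than a departure.
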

\begin{proof}
    Let $p_j \in \mathcal{P}$ and $t$ and $t'$ be the times at which a process $p_i$ sends $\WISH(V)$
    and $\WISH(V')$, respectively.
    Notice that $V[j] = p_i.\views(t)[j]$ and $V'[j] = p_i.\views(t')[j]$.
    Since $p_i.\views[j]$ is non-decreasing, as guaranteed by lines
    \ref{protocol:sync:adv:views} and \ref{protocol:sync:wish:views},
    $p_i.\view(t)[j] \leq p_i.\view(t')[j]$.
    Therefore, $V[j] \leq V'[j]$. Since $p_j$ was picked arbitrarily, $V \leq V'$.
\end{proof}

\begin{lemma}
    \label{lemma:pzt}
    For all processes $p_i \in \CQ$, times $t \geq \OGST$ and arrays
    $V$, if $p_i$ sends $\WISH(V)$ at a time $\leq t$, then there exists
    an array $V' \geq V$ and a time $t'$ such that $\GST \leq t' \leq t$
    and $p_i$ sends $\WISH(V')$ at $t'$.
\end{lemma}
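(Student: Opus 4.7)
The plan is to establish the lemma by a simple case analysis on whether the given $\WISH(V)$ was sent before or after $\GST$. Let $s \leq t$ be the time at which $p_i$ sends $\WISH(V)$.

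First I would handle the easy case where $s \geq \GST$. Here one just picks $t' = s$ and $V' = V$; the inequality $\GST \leq t' \leq t$ is immediate, and $V' \geq V$ holds trivially.

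The interesting case is $s < \GST$. Here the argument relies on the periodic resending at line~\ref{protocol:sync:periodically}: since $p_i$ is in $\CQ$ (and hence correct) and has already executed at time $s < \GST$, the periodic handler fires every $\rho$ units of real time from some start time $\leq s < \GST$. Consequently there is at least one firing of the handler in the half-open interval $[\GST,\, \GST + \rho]$, and each such firing sends a $\WISH$ message at line~\ref{protocol:sync:periodically:wish}. Let $t'$ be such a time and let $V'$ be the array sent, so that $\GST \leq t' \leq \GST + \rho = \OGST \leq t$. Finally, since $s \leq t'$ and $p_i$ sends $\WISH(V)$ at $s$ and $\WISH(V')$ at $t'$, Lemma~\ref{lemma:nvg} gives $V \leq V'$, as required.

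The main obstacle, if any, will be the bookkeeping about when the periodic handler's timer is first armed: one needs to make sure the assumption ``$p_i$ sent some $\WISH$ at time $s$'' guarantees that $p_i$ is already up and scheduling periodic sends at every subsequent interval of length $\rho$. This should be immediate from the model (processes start before they send messages, and the $\textbf{periodically}$ block fires every $\rho$ units once scheduled), but it is worth stating explicitly to rule out degenerate cases at process startup. Once that is in place, monotonicity of sent $\WISH$es (Lemma~\ref{lemma:nvg}) closes the argument.
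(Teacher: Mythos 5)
Your proposal is correct and follows essentially the same route as the paper's proof: the same case split on whether the send time $s$ is before or after $\GST$, the same appeal to the periodic retransmission handler (together with clock accuracy after $\GST$) to find a resend in $[\GST, \GST+\rho] \subseteq [\GST, t]$, and the same use of the monotonicity of sent $\WISH$ arrays to conclude $V' \geq V$. The only nitpick is that the handler fires every $\rho$ units of \emph{local} time, so the guarantee of a firing within $\rho$ real-time units only kicks in after $\GST$ when clocks stop drifting -- which is exactly the fact both you and the paper ultimately rely on.
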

\begin{proof}
    Let $s \leq t$ be the time at which $p_i$ sends $\WISH(V)$.
    If $s \geq \GST$, then choosing $t' = s$ and $V' = V$ validates the lemma.
    Assume now that $s < \GST$.
    Since after $\GST$ the $p_i$'s local clock advances at the same rate as real time,
    there exists a time $t'$ satisfying $\GST \leq t' \leq t$ such that $p_i$
    executes the retransmission code in lines \ref{protocol:sync:periodically}
    and \ref{protocol:sync:periodically:wish} at $t'$.
    Then, there exists an array $V'$ such that $p_i$ sends $\WISH(V')$ at $t'$.
    By Lemma \ref{lemma:nvg}, $V' \geq V$.
\end{proof}

\begin{lemma}[Startup]
    \label{lemma:rlw}
    If more than $\frac{n}{2}$ processes from $\CQ$ invoke $\adv$, then some process from $\CQ$
    will enter view $1$:
    \[
      (\exists P \subseteq \CQ.\ |P| > \frac{n}{2}\ \land\ (\forall p_i \in P.\
      A_i(0)\fdef)) \implies \EF{\CQ}(1)\fdef.
      \]
\end{lemma}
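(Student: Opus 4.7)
The plan is to show directly that, under the hypothesis, the ``wish'' values $\views[j] = 1$ set by each $p_j \in P$ eventually propagate throughout $\CQ$ via the gossip mechanism (lines~\ref{protocol:sync:periodically} and~\ref{protocol:sync:wish}); once every $p_i \in \CQ$ has accumulated $>n/2$ entries $\geq 1$ in its $\views$ array, the next WISH it processes will yield $v' \geq 1$ at line~\ref{protocol:sync:wish:v} and force $p_i$ to enter a view $\geq 1$. A final appeal to Lemma~\ref{lemma:yqh} then turns this into $\EF{\CQ}(1)\fdef$.

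The technical heart is the gossip step: for every $p_j \in P$ and every $p_i \in \CQ$ I claim there is a time after which $p_i.\views[j] \geq 1$. Because $\CQ$ is strongly connected in $\RG{\GG}{f}$, there is a path $p_j = q_0, q_1, \ldots, q_d = p_i$ whose edges are correct (eventually reliable) channels, and I prove the claim by induction on $k$. The base case is immediate: line~\ref{protocol:sync:adv:views} sets $p_j.\views[j] = p_j.\currview + 1 \geq 1$ when $p_j$ invokes $\adv$, and neither line~\ref{protocol:sync:adv:views} nor line~\ref{protocol:sync:wish:views} can decrease this value. For the inductive step, suppose $q_k.\views[j] \geq 1$ stabilizes at some time; thereafter line~\ref{protocol:sync:periodically} makes $q_k$ send $\WISH(q_k.\views)$ infinitely often, and eventual reliability of $(q_k, q_{k+1})$ ensures that some such WISH reaches $q_{k+1}$, whose handler at line~\ref{protocol:sync:wish:views} then pins $q_{k+1}.\views[j]$ to a value $\geq 1$ by monotonicity.

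With the claim in hand, there is a time beyond which every $p_i \in \CQ$ sees at least $|P| > n/2$ of its $\views$ entries $\geq 1$; hence on the next WISH receipt the quantity $v'$ at line~\ref{protocol:sync:wish:v} is $\geq 1$, and if $p_i$ has not already entered a view $\geq 1$, the guard at line~\ref{protocol:sync:wish:guard} makes it do so now. If $v' = 1$ we get $\EF{\CQ}(1)\fdef$ directly; otherwise $v' > 1$ and Lemma~\ref{lemma:yqh} applied with $v = 1$ supplies the conclusion. The main obstacle is the gossip induction: one must leverage the monotonicity of $\views[j]$ at every hop, the unconditional periodic retransmission at line~\ref{protocol:sync:periodically} (which continues regardless of whether the process ever enters a view), and the eventual (rather than immediate) reliability of the channels in $\CQ$ to carry the initial wish of $p_j$ along the whole path despite message loss before $\GST$.
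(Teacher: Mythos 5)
Your proposal is correct and follows essentially the same route as the paper's proof: the technical core in both is the induction on distance along the correct channels of $\CQ$, using the monotonicity of $\views$, the periodic retransmission at line~\ref{protocol:sync:periodically}, and eventual reliability to propagate each $p_j$'s wish, after which the majority threshold at line~\ref{protocol:sync:wish:v} forces every process in $\CQ$ to compute $v'\geq 1$. The only cosmetic difference is that the paper argues by contradiction (using Lemma~\ref{lemma:yqh} up front to conclude all of $\CQ$ stays in view $0$, so the guard at line~\ref{protocol:sync:wish:guard} fires), whereas you argue directly and invoke Lemma~\ref{lemma:yqh} at the end to reduce entry into a view $v'>1$ back to entry into view $1$ --- both are valid uses of the same lemma.
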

\begin{proof}
    Assume by contradiction that there exists a set $P \subseteq \CQ$ of more than $\frac{n}{2}$
    processes such that $\forall p_i \in P.\ A_i(0)\fdef$, and
    no process in $\CQ$ enters view $1$. By Lemma \ref{lemma:yqh}, the latter implies
    \begin{equation}\label{lemma:rlw:eq1}
        \forall v > 0.\ \forall p_i \in \CQ.\ E_i(v)\fndef.
    \end{equation}
    We now show that for every $p_i \in \CQ$ and $p_j \in P$, there exists an array $V_j$ and a time $t_j$ such
    that $V_j[j] \geq 1$ and $p_i$ receives $\WISH(V_j)$ at $t_j$.
    Fix $p_j \in P$. We prove the result by induction on the distance $d$ from $p_j$ to $p_i$
    (considering the reliable links that make up $\CQ$).

    Since $A_j(0)\fdef$, lines \ref{protocol:sync:adv:views} and \ref{protocol:sync:adv:wish}
    guarantee that there exists an array $V_j$ and a time $t_j$ such that $V_j[j] = 1$ and
    $p_j$ sends $\WISH(V_j)$ at $t_j$. Since messages sent to itself are always delivered
    and $p_j$ is the only process at distance $0$ from $p_j$, the result trivially holds for $d = 0$.

    Assume now the result holds for all processes at a distance $d$ from $p_j$ and suppose that
    $p_i \in \CQ$ is a process at a distance $d + 1$ from $p_j$.
    Therefore, there exists a process $p_k \in \CQ$ such that the distance from $p_j$ to $p_k$
    is $d$ and the distance from $p_k$ to $p_i$ is $1$.
    By induction hypothesis there exists an array $V_j$ and a time $t_j$ such that
    $V_j[j] \geq 1$ and $p_k$ receives $\WISH(V_j)$ at $t_j$.
    Upon receipt of the $\WISH$ message, $p_k$ executes lines
    \ref{protocol:sync:wish:viewsfor} and \ref{protocol:sync:wish:views},
    ensuring that $p_k.\views(t_j)[j] \geq V[j] \geq 1$.
    Because the periodic handler in line \ref{protocol:sync:periodically}
    fires every $\rho$ units of time, there exists an array $V_{k'}$ and a time $t_{k'} \leq t_j + \rho$
    such that $V_{k'}[j] \geq 1$ and $p_k$ sends $\WISH(V_{k'})$ at $t_{k'}$.
    Let $T_k=\max(\OGST,t_k')$. By Lemma \ref{lemma:pzt},
    there exists an array $V_{k} \geq V_{k'}$ and a time $t_{k}$ such that
    $\GST \leq t_{k} \leq T_{k}$ and $p_k$ sends $\WISH(V_{k})$ at $t_{k}$.
    Since the link between $p_k$ and $p_i$ is
    timely after $\GST$, the $\WISH(V_{k})$ is received by $p_i$ no
    later than $t_{k} + \delta$. And because $V_k \geq V_{k'}$ then $V_k[j] \geq V_{k'}[j] \geq 1$.
    This completes the induction.

    Let $p_l \in \CQ$ be any process and, for each $p_i \in P$, let $V_i$ and $t_i$ be such that
    $V_i[i] \geq 1$ and $p_l$ receives $\WISH(V_i)$ at $t_i$. Let $T=\max t_i$.
    Lines \ref{protocol:sync:wish:viewsfor} and \ref{protocol:sync:wish:views} guarantee
    that $p_l.\views(T)[i] \geq 1$ for each $p_i \in P$. Since $|P| > \frac{n}{2}$,
    $p_k.\views(T)$ includes more than $\frac{n}{2}$ entries $\geq 1$, and thus, $p_l.v'(T) \geq 1$.
    By (\ref{lemma:rlw:eq1}), $p_l.\currview(T) = 0$. Hence, line
    \ref{protocol:sync:wish:guard} ensures that $p_l$ enters a view $\geq 1$ by $T$,
    contradicting (\ref{lemma:rlw:eq1}).
\end{proof}

\begin{lemma}[Progress]
    \label{lemma:msz}
    If a process from $\CQ$ enters a view $v$ and, for some set $P \subseteq \CQ$
    of more than $\frac{n}{2}$ processes, any process in $P$ that enters $v$ eventually invokes $\adv$, then some process from $\CQ$ will enter $v+1$:
    \[
      \forall v.\, \EF{\CQ}(v)\fdef\, \land\, (\exists P \subseteq \CQ.\ |P| >
      \frac{n}{2}\, \land\, (\forall p_i \in P. E_i(v)\fdef \implies A_i(v)\fdef))
      \implies \EF{\CQ}(v + 1)\fdef.
      \]
\end{lemma}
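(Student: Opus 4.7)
The plan is to prove Lemma~\ref{lemma:msz} by contradiction. Assume $\EF{\CQ}(v)\fdef$, the hypothesized set $P \subseteq \CQ$ with $|P| > \frac{n}{2}$ exists, yet no process in $\CQ$ enters $v+1$. Applying Lemma~\ref{lemma:yqh} inductively, no process in $\CQ$ can then enter any view strictly greater than $v$ (otherwise $\CQ$ would have to enter $v+1$ on the way up), and no process at all can enter a view $\geq v+2$. Hence, under Monotonicity, every process in $\CQ$ remains in some view $\leq v$ throughout the execution.

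Next I would show that every process in $\CQ$ eventually enters view $v$. The witness $p_j$ of $\EF{\CQ}(v)$ sends $\WISH(p_j.\views)$ carrying more than $\frac{n}{2}$ entries $\geq v$ upon entering $v$ (lines~\ref{protocol:sync:wish:v}--\ref{protocol:sync:wish:wish}). Mirroring the proof of Startup (Lemma~\ref{lemma:rlw})---an induction along a path in $\CQ$ of length at most $\diameter(\CQ)$, combined with the periodic retransmission of line~\ref{protocol:sync:periodically} and Lemma~\ref{lemma:pzt} to push such messages across the eventually reliable channels of $\CQ$ after $\GST$---every $p_k \in \CQ$ eventually accumulates more than $\frac{n}{2}$ entries $\geq v$ in $p_k.\views$, triggering line~\ref{protocol:sync:wish:guard} and making $p_k$ enter some view $\geq v$. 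By the previous paragraph that view is exactly $v$.

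In particular every $p_i \in P \subseteq \CQ$ enters $v$, so by hypothesis each such $p_i$ eventually invokes $\adv$ while in view $v$, thereby setting $\views[i] = v+1$ and emitting a $\WISH$ with $V[i] = v+1$ (lines~\ref{protocol:sync:adv:views}--\ref{protocol:sync:adv:wish}). Reapplying the gossip-plus-retransmission argument to these $\WISH$es, any fixed $p_k \in \CQ$ eventually has $p_k.\views[i] \geq v+1$ for every $p_i \in P$; since $|P| > \frac{n}{2}$, the value $v'$ computed at line~\ref{protocol:sync:wish:v} of $p_k$ becomes $\geq v+1$, and line~\ref{protocol:sync:wish:newview} forces $p_k$ to enter a view $\geq v+1$, contradicting the standing assumption. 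The main obstacle is dealing with pre-$\GST$ message loss on $\CQ$'s channels: since these are only eventually reliable and can be flaky before $\GST$, any individual $\WISH$ may be dropped. This is exactly what Lemma~\ref{lemma:pzt} is designed to handle, by guaranteeing that any $\WISH$ emitted is effectively re-sent after $\GST$ through the periodic loop at line~\ref{protocol:sync:periodically}; the induction along a shortest path in $\CQ$ then transports the information, just as in Startup.
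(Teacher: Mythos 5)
Your proposal is correct and follows essentially the same route as the paper's proof: derive from Lemma~\ref{lemma:yqh} that no process in $\CQ$ exceeds view $v$, propagate entry into $v$ to all of $\CQ$ by induction on distance using the periodic retransmission and Lemma~\ref{lemma:pzt}, conclude that all of $P$ invokes $\adv$, and gossip the resulting $\WISH$es until some process accumulates a majority of entries $\geq v+1$ and enters a higher view, contradicting the assumption. The only cosmetic difference is that the paper spells out the two distance inductions in full rather than citing the Startup proof as a template.
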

\begin{proof}
    Assume by contradiction that there exists a set $P \subseteq \CQ$ of more than $\frac{n}{2}$ processes
    such that $\forall p_i \in P.\ E_i(v)\fdef\ \implies A_i(v)\fdef$, $\EF{\CQ}(v)\fdef$,
    and no process in $\CQ$ enters view $v + 1$.
    By Lemma \ref{lemma:yqh}, the latter implies
    \begin{equation}\label{lemma:msz:eq1}
        \forall v' > v.\ \forall p_i \in \CQ.\ E_i(v')\fndef.
    \end{equation}
    Let $p_i \in \CQ$ be the first process to enter $v$.
    We show that every process $p_j \in \CQ$ enters $v$ by induction on the
    distance $d$ from $p_i$ to $p_j$ (considering the reliable links that make up $\CQ$).
    Since $p_i$ is the only process at distance $0$ from $p_i$, the result trivially
    holds for $d = 0$.
    Assume now the result holds for all processes in $\CQ$ at a distance $d$ from $p_i$
    and suppose the distance from $p_i$ to $p_j$ is $d + 1$.
    Therefore, there exists a process $p_k \in \CQ$ such that the distance from $p_i$
    to $p_k$ is $d$ and the distance from $p_k$ to $p_j$ is $1$.
    By induction hypothesis $p_k$ enters $v$.
    The process $p_k$ enters view $v$ upon executing line \ref{protocol:sync:wish:newview},
    and by lines \ref{protocol:sync:wish:v} and \ref{protocol:sync:wish:guard}, $p_k.\views(E_k(v))$ includes more than $\frac{n}{2}$ entries $\geq v$.

    Line \ref{protocol:sync:wish:wish} guarantees that upon entering $v$, $p_k$ sends
    $\WISH(p_k.\views(E_k(v)))$ to every process.
    Let $T_k = \max(\OGST, E_k(v))$.
    By Lemma \ref{lemma:pzt},
    there exists an array $V_{k} \geq p_k.\views(E_k(v))$ and a time $t_{k}$ such that
    $\GST \leq t_{k} \leq T_{k}$ and $p_k$ sends $\WISH(V_{k})$ at $t_{k}$.
    Since the link between $p_k$ and $p_j$ is timely after $\GST$, the $\WISH$ message
    is received by $p_j$ at a time $t_j$.
    Upon receipt of the $\WISH$ message, $p_j$ executes lines
    \ref{protocol:sync:wish:viewsfor} and \ref{protocol:sync:wish:views},
    ensuring that $p_j.\views(t_j) \geq V_{k} \geq p_k.\views(E_k(v))$.
    Therefore, $p_j.\views(t_j)$ includes more than $\frac{n}{2}$ entries $\geq v$, and thus
    $p_j$ is guaranteed to enter a view $v' \geq v$ no later than $t_j$.
    By (\ref{lemma:msz:eq1}), $p_j$ never enters a view $> v$ and thus
    $p_j$ enters $v$. This completes the induction.
    Because every process from $\CQ$ enters $v$, we have $\forall p_i \in P.\ E_i(v)\fdef$
    and thus $\forall p_i \in P.\ A_i(v)\fdef$.

    We now show that for every $p_i \in \CQ$ and $p_j \in P$, there exists an array $V_j$
    and a time $t_j$ such that $V_j[j] \geq v + 1$ and $p_i$ receives $\WISH(V_j)$ at $t_j$.
    Fix $p_j \in P$. We prove the result by induction on the distance $d$ from $p_j$ to $p_i$
    (considering the reliable links that make up $\CQ$).

    Since $A_j(v)\fdef$, lines \ref{protocol:sync:adv:views} and \ref{protocol:sync:adv:wish}
    guarantee that there exists an array $V_j$ and a time $t_j$ such that $V_j[j] = v + 1$
    and $p_j$ sends $\WISH(V_j)$ at $t_j$.
    Since messages sent to itself are always delivered and $p_j$ is the only process at
    distance $0$ from $p_j$, the result trivially holds for $d = 0$.

    Assume now the result holds for all processes at a distance $d$ from $p_j$ and suppose
    that $p_i \in \CQ$ is a process at a distance $d + 1$ from $p_j$.
    Therefore, there exists a process $p_k \in \CQ$ such that the distance from $p_j$ to $p_k$
    is $d$ and the distance from $p_k$ to $p_i$ is $1$.
    By induction hypothesis there exists an array $V_j$ and a time $t_j$ such that
    $V_j[j] \geq v + 1$ and $p_k$ receives $\WISH(V_j)$ at $t_j$.
    Upon receipt of the $\WISH$ message, $p_k$ executes lines
    \ref{protocol:sync:wish:viewsfor} and \ref{protocol:sync:wish:views},
    ensuring that $p_k.\views(t_j)[j] \geq V[j] \geq v + 1$.
    Because the periodic handler in line \ref{protocol:sync:periodically}
    fires every $\rho$ units of time, there exists an array $V_{k'}$ and a time $t_{k'} \leq t_j + \rho$
    such that $V_{k'}[j] \geq v + 1$ and $p_k$ sends $\WISH(V_{k'})$ at $t_{k'}$.
    Let $T_k=\max(\OGST,t_k')$. By Lemma \ref{lemma:pzt},
    there exists an array $V_{k} \geq V_{k'}$ and a time $t_{k}$ such that
    $\GST \leq t_{k} \leq T_{k}$ and $p_k$ sends $\WISH(V_{k})$ at $t_{k}$.
    Since the link between $p_k$ and $p_i$ is
    timely after $\GST$, the $\WISH(V_{k})$ is received by $p_i$ no
    later than $t_{k} + \delta$. And because $V_k \geq V_{k'}$ then $V_k[j] \geq V_{k'}[j] \geq v + 1$.
    This completes the induction.

    Let $p_l \in \CQ$ be any process and, for each $p_i \in P$, let $V_i$ and $t_i$ be such that
    $V_i[i] \geq v + 1$ and $p_l$ receives $\WISH(V_i)$ at $t_i$. Let $T=\max t_i$.
    Lines \ref{protocol:sync:wish:viewsfor} and \ref{protocol:sync:wish:views} guarantee
    that $p_l.\views(T)[i] \geq v + 1$ for each $p_i \in P$. Since $|P| > \frac{n}{2}$,
    $p_k.\views(T)$ includes more than $\frac{n}{2}$ entries $\geq v + 1$, and thus, $p_l.v'(T) \geq v + 1$.
    By (\ref{lemma:msz:eq1}), $p_l.\currview(T) \leq v$. Hence, line
    \ref{protocol:sync:wish:guard} ensures that $p_l$ enters a view $\geq v + 1$ by $T$,
    contradicting (\ref{lemma:msz:eq1}).
\end{proof}

\begin{proof}[Proof of Theorem~\ref{theorem:nxp}]
    The guard at line \ref{protocol:sync:wish:guard} ensures Monotonicity.
    Validity, Bounded Entry, Startup and Progress are given by Lemmas
    \ref{lemma:rcq}, \ref{lemma:awl}, \ref{lemma:rlw} and \ref{lemma:msz}, respectively.
\end{proof}

    \section{Proof of Theorem~\ref{thm:upper-consensus}}
\label{sec:app-consensus}

Fix a failure pattern $f$ and let $\CQ$ be the corresponding connected core
guaranteed to exist by the assumptions of Theorem~\ref{thm:upper-consensus}.

\begin{proof}[Proof of Lemma~\ref{lemma:not_ok_adv}]
	Upon entering $v$ the process $p_i$ starts $\timerdecision$ (line
	\ref{protocol:consensus:newview:starttimer}).
        The process $p_i$ cannot stop the timer at line
        \ref{protocol:consensus:newview:starttimer}, since together with
        Monotonicity, this would imply that $p_i$ enters a view higher than $v$.
	The process $p_i$ cannot stop the timer at
	line \ref{protocol:consensus:decided:stoptimer} either, since this would imply
        that $p_i$ decides in $v$.
	Therefore the timer must expire, after which $p_i$ invokes $\adv$ in
        $v$ (line \ref{protocol:consensus:timerexpires:adv}).
\end{proof}

Lemma~\ref{lemma:ok_stay_all} together with Validity implies that all processes in
$\CQ$ will stay in a view where progress is possible provided the timers are
high enough. However, to apply it we need to argue that, if processes from $\CQ$
keep changing views due to lack of progress, all of them will increase their
timeouts high enough to satisfy the bounds in Lemma~\ref{lemma:ok_stay_all}. To
this end, we prove the following generalization of
Lemma~\ref{lemma:ok_stay_all}: in a sufficiently high view $v$ with a leader
from the connected core, if the timeout at a process from $\CQ$ that enters $v$
is high enough, then this process cannot be the first to initiate a view
change. Hence, for the protocol to enter another view, some other process with a
lower timeout must call $\adv$ and thus increase its duration.
Let $\Delta = (\delta + \rho) \diameter(\CQ)$.

\begin{lemma}
	\label{lemma:ok_stay}
	Let $v \geq \V$ be a view such that $\leader(v) \in \CQ$,
        $\EF{\CQ}(v) \geq \GST$ and $\leader(v)$ invokes $\propose$ no later
        than $\EF{\CQ}(v)$.  If $p_i \in \CQ$ enters $v$ and
        $\durdecision_i(v) > d + 3\Delta$, then $p_i$ is not the first process
        in $\CQ$ to invoke $\adv$ in $v$.
\end{lemma}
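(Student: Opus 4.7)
\textbf{Proof plan for Lemma~\ref{lemma:ok_stay}.} The plan is to argue by contradiction: assume $p_i$ is the first process in $\CQ$ to invoke $\adv$ in $v$, and derive that $p_i$ must actually have decided in $v$ before its timer could expire. Let $t_0 = \EF{\CQ}(v)$. Since $p_i$ is (by assumption) the first $\CQ$-process to call $\adv$ in $v$, we have $\AF{\CQ}(v) = A_i(v) \geq E_i(v) + \durdecision_i(v) > t_0 + d + 3\Delta$, so the hypothesis of Bounded Entry is satisfied for $v$. Hence every process in $\CQ$ enters $v$ by time $t_0 + d$.

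Next I would track the propagation of $\STATE$ messages through $\CQ$ after time $t_0 + d$. Each $p_j \in \CQ$ that has entered $v$ writes $(v, \cdot, \cdot)$ into $\MSGOB[j]$ at line~\ref{protocol:consensus:newview:msg1b} and (re)sends its state at least once every $\rho$ units via the periodic handler. Combining $\rho$-periodic retransmission with the $\delta$-bounded links that make up the core (using the fact that $t_0 \geq \GST$), each hop within $\CQ$ consumes at most $\delta + \rho$ time, so any information present at some $\CQ$-process reaches the entire connected core within $\diameter(\CQ)(\delta+\rho)=\Delta$ time. Therefore by time $t_0 + d + \Delta$ the leader $\ell = \leader(v) \in \CQ$ has $\MSGOB[j].\itview = v$ for every $p_j \in \CQ$, which is a majority. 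The guard at line~\ref{protocol:consensus:proposed} thus fires; since $\ell$ invoked $\propose$ by $\EF{\CQ}(v)$, its $\pval$ is set and the early-return at line~\ref{protocol:consensus:return} is not taken, so $\ell$ writes some $(v,x)$ into $\MSGTA[\ell]$.

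By the same propagation argument, $\MSGTA[\ell].\itview = v$ reaches every process in $\CQ$ by time $t_0 + d + 2\Delta$, at which point each such process executes the handler at line~\ref{protocol:consensus:accepted} and writes $(v,x)$ into its $\MSGTB$ entry. A further $\Delta$ of propagation yields, by time $t_0 + d + 3\Delta$, that every $p_j \in \CQ$ has $\MSGTB[k] = (v,x)$ for every $p_k \in \CQ$; since $|\CQ| > n/2$, the guard at line~\ref{protocol:consensus:decided} fires at $p_i$, so $p_i$ decides in $v$ and stops $\timerdecision$ at line~\ref{protocol:consensus:decided:stoptimer}. But $p_i$'s timer was set to expire no earlier than $E_i(v) + \durdecision_i(v) \geq t_0 + \durdecision_i(v) > t_0 + d + 3\Delta$, so the timer is stopped before it can expire and $p_i$ does not invoke $\adv$ in $v$, contradicting the assumption.

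The main obstacle is the careful timing bookkeeping in the propagation argument: one has to justify the $\delta + \rho$ per-hop bound rigorously (accounting for the fact that a process may have to wait up to $\rho$ before its next scheduled retransmission) and then apply it three times in sequence (for $\MSGOB$, $\MSGTA$, $\MSGTB$), making sure each intermediate guard is enabled at every relaying $\CQ$-process; the condition $v \geq \V$ together with $\EF{\CQ}(v) \geq \GST$ is what lets us invoke Bounded Entry and the post-$\GST$ message-delay bound uniformly.
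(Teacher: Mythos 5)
Your plan follows essentially the same route as the paper's proof: assume for contradiction that $p_i$ is the first $\CQ$-process to call $\adv$ in $v$, deduce that no $\CQ$-process advances before $T = \EF{\CQ}(v) + d + 3\Delta$, apply Bounded Entry to get all of $\CQ$ into $v$ by $\EF{\CQ}(v)+d$, and then chain three $\Delta$-bounded gossip rounds (for $\MSGOB$, $\MSGTA$, $\MSGTB$) to show that $p_i$ decides and stops its timer before it can expire. The per-hop $\delta+\rho$ bound you sketch is exactly the content of the paper's Proposition on state propagation.

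There is, however, one ingredient your plan omits that the propagation argument genuinely depends on: you must also establish that \emph{no process enters a view higher than $v$ before $T$}. From "no $\CQ$-process calls $\adv$ in $v$ before $T$" this does not follow by itself; the paper gets it from the synchronizer's Validity property (a process can only enter $v+1$ if some $\CQ$-process has attempted to advance from $v$) combined with the no-skipping lemma (Lemma~\ref{lemma:yqh}) to cover all views $>v$. Without this step the chain breaks in several places: a process that entered a higher view would overwrite its $\MSGOB$/$\MSGTA$/$\MSGTB$ entries with higher-view data (so the relayed view-$v$ entries would not persist long enough to propagate), the handlers at lines~\ref{protocol:consensus:proposed} and~\ref{protocol:consensus:accepted} compare against the process's current $\view$ and so would no longer fire for $v$, and the decision guard at line~\ref{protocol:consensus:decided} at $p_i$ likewise presupposes $p_i$ still has $\view = v$. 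Note also that this is exactly where the Validity property earns its keep against processes \emph{outside} $\CQ$: they may call $\adv$ at will, but Validity prevents their wishes from pushing anyone past $v$. Adding this observation (one application of Validity plus Lemma~\ref{lemma:yqh}) makes your plan match the paper's proof.
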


The proof of Lemma~\ref{lemma:ok_stay} relies on the following technical proposition.
It states that the processes from
$\CQ$ exchange their most up to date information within time $\Delta$, as long as they
do not enter a higher view.

\begin{proposition}
	\label{proposition:fwd}
			Let $t \geq \GST$ be such that no process in $\CQ$ enters a view higher than $v$ before
      $t + \Delta$. Then
      \begin{enumerate}
    	  \item $\forall p_i, p_j, p_k \in \CQ.\ \ \ p_i.\MSGOB(t)[k].\itview=v {\implies}
      		p_j.\MSGOB (t + \Delta)[k] = p_i.\MSGOB (t)[k]$.
      	\item $\forall p_i, p_j \in \CQ.\ \forall p_k.\  p_i.\MSGTA(t)[k].\itview=v {\implies}
      		p_j.\MSGTA (t + \Delta)[k] = p_i.\MSGTA (t)[k]$.
      	\item $\forall p_i, p_j \in \CQ.\ \forall p_k.\  p_i.\MSGTB(t)[k].\itview=v {\implies}
      		p_j.\MSGTB (t + \Delta)[k] = p_i.\MSGTB(t)[k]$.
    	\end{enumerate}
\end{proposition}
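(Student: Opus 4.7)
My plan is a uniform two-step argument (stability plus propagation) that applies to all three parts. The common idea is that under the proposition's hypothesis the target entry at $p_i$ is pinned down uniquely by its view-$v$ tag, and the $\STATE$ gossip in Figure~\ref{fig:consensus_protocol} (lines~\ref{protocol:consensus:periodically}--\ref{protocol:consensus:state}) carries it to $p_j$ within time $\Delta$ along the reliable subnetwork induced by $\CQ$.

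First, I would establish stability: throughout $[t, t+\Delta]$ no process holds an entry in $\MSGOB[k]$, $\MSGTA[k]$, or $\MSGTB[k]$ whose $\itview$ exceeds $v$ (for the relevant $k$). The key observation is that each of these slots is only \emph{natively} written by its owner $p_k$---$\MSGOB[k]$ at line~\ref{protocol:consensus:newview:msg1b}, $\MSGTA[k]$ at lines~\ref{protocol:consensus:msgta1}--\ref{protocol:consensus:msgta2}, and $\MSGTB[k]$ at line~\ref{protocol:consensus:accepted:msgtb}---whereas every other update at any process is a verbatim copy performed by the $\STATE$-receipt handler at line~\ref{protocol:consensus:state}. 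Hence it suffices to bound $p_k$'s native writes. For part~1 and for the sub-case of parts~2--3 with $p_k \in \CQ$, the bound is immediate from the hypothesis. For $p_k \notin \CQ$ in parts~2--3, I would invoke the fact that $\CQ$ is a strict majority ($|\CQ| > n/2$): a native $\MSGTA[k]$-write in a view $v' > v$ would force $p_k$ to have collected $>\tfrac{n}{2}$ entries of $\MSGOB$ in view $v'$ at line~\ref{protocol:consensus:proposed}, whose owners are themselves native writers; since at most $n - |\CQ| < \tfrac{n}{2}$ owners lie outside $\CQ$, at least one native writer would have to belong to $\CQ$, contradicting the hypothesis. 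A native $\MSGTB[k]$-write in view $v' > v$ reduces to this case, since it presupposes receiving a matching view-$v'$ $\mathsf{2A}$ (line~\ref{protocol:consensus:accepted}). Monotonicity together with the phase transitions in the three write sites (line~\ref{protocol:consensus:phase} and line~\ref{protocol:consensus:accepted:phase}) then ensures each owner performs at most one write tagged with view $v$, pinning down the tuple to $p_i.\MSGOB(t)[k]$, $p_i.\MSGTA(t)[k]$, and $p_i.\MSGTB(t)[k]$, respectively.

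Second, I would handle propagation within $\CQ$ by induction on the graph distance $d$ from $p_i$ to $p_j$ in the subgraph of $\RG{\GG}{f}$ induced by $\CQ$ (bounded by $\diameter(\CQ)$), proving that $p_j$ holds the target entry by time $t + d(\rho + \delta)$. The base case $d = 0$ is trivial. For the inductive step, fix $p_m \in \CQ$ at distance $d-1$ from $p_i$ that is a predecessor of $p_j$ along a shortest $\CQ$-path; the IH gives $p_m$ the entry by $t + (d-1)(\rho + \delta)$. Within another $\rho$ the periodic handler at line~\ref{protocol:consensus:periodically} fires at $p_m$, sending $\STATE$ to all processes including $p_j$; since $t \ge \GST$ and $(p_m, p_j)$ is correct (both endpoints are in $\CQ$), delivery occurs within a further $\delta$; and the handler at line~\ref{protocol:consensus:state} then installs the entry at $p_j$ via line~\ref{protocol:consensus:state:msgob}, \ref{protocol:consensus:state:msgta}, or \ref{protocol:consensus:state:msgtb} (its $\itview$-guard cannot block, by stability). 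Summing $d \le \diameter(\CQ)$ hops of length $\rho + \delta$ yields exactly $t + \Delta$.

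The hard part will be the $p_k \notin \CQ$ sub-case of the stability step: ruling out any higher-view native $\mathsf{2A}$ or $\mathsf{2B}$ at $p_k$ requires unfolding one level of data dependency and leaning on the majority property of $\CQ$ as above. Everything else is a routine gossip-propagation induction.
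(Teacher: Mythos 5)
Your proposal is correct and takes essentially the same route as the paper: a stability observation (entries tagged with view $v$ are unique and cannot be overwritten before $t+\Delta$) followed by an induction on the distance from $p_i$ to $p_j$ within $\CQ$, advancing the entry one hop per $\rho+\delta$ via the periodic \STATE{} gossip. Your treatment of the $p_k \notin \CQ$ sub-case for $\MSGTA$/$\MSGTB$ (reducing a higher-view native write to a majority of $\MSGOB$ owners, one of whom must lie in $\CQ$) is spelled out more explicitly than in the paper, which simply asserts the corresponding stability facts, but the argument is the same.
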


Informally, each process propagates its state every $\rho$ units of time
(line \ref{protocol:consensus:periodically:state}).
Upon receipt of the state (line \ref{protocol:consensus:state}), processes update
their vectors with the most up to date information (lines \ref{protocol:consensus:state:msgob},
\ref{protocol:consensus:state:msgta} and \ref{protocol:consensus:state:msgtb}).
Because no process from $\CQ$ is further than $\diameter(\CQ)$ from $p_i$, its state is guaranteed
to be received by every other process in $\CQ$ no later than $t + \Delta$.
And because no process from $\CQ$ enters a view higher than $v$ at least until after $t + \Delta$,
these values remain unchanged until then.

\begin{proof}[Proof of Proposition \ref{proposition:fwd}]
	Let's consider the case of $\MSGTB$; the other cases are similar.
	Fix $p_i \in \CQ$ and let $t \geq \GST$ be a time, $p_k \in \PP$ and $x$ be a value such that
	$p_i.\MSGTB(t)[k]=(v,x)$. Let $p_j \in \CQ$ and $\mathsf{dist}(p_i,p_j)$ be
	the distance from $p_i$ to $p_j$ in $\RG{\GG}{f}$. We now show that
	$p_j.\MSGTB(t + (\rho + \delta)\mathsf{dist}(p_i,p_j))[k] = (v,x)$.
	We prove the result by induction on the distance $d$ from $p_i$ to $p_j$.
	Since $p_i$ is the only process at distance $0$ from $p_i$, the result
	trivially holds for $d=0$. Assume now that the result holds for all processes in $\CQ$
	at distance $d$ from $p_i$ and suppose that the distance from $p_i$ to $p_j$ is $d+1$.
	Therefore, there exists a process $p_m \in \CQ$ such that the distance from $p_i$
	to $p_m$ is $d$ and the distance from $p_m$ to $p_j$ is $1$. By induction hypothesis,
	$p_m.\MSGTB(t + (\rho + \delta)d)[k] = (v,x)$.
	Because no process in $\CQ$ enters a view higher than $v$ before $t + \Delta$,
	the value of $p_m.\MSGTB(t + (\rho + \delta)d)[k]$ remains unchanged from
	$t + (\rho + \delta)d$ until $t + \Delta$.

	Because the handler at line \ref{protocol:consensus:periodically} runs every $\rho$
	units of time, there exists a time $t_s$ and a vector $\VTB$ such that
	$t + (\rho + \delta)d \leq t_s \leq t + (\rho + \delta)d + \rho$,
	$\VTB[k]=(v,x)$ and $p_m$ sends $\STATE(\_,\_,\VTB)$ at $t_s$
	(line \ref{protocol:consensus:periodically:state}).
	Since the channel between $p_m$ and $p_j$ is reliable after $\GST$,
	the $\STATE(\_,\_,\VTB)$ message is received by $p_j$ at a time $t_r$ such that
	$t_r \leq t_s+\delta \leq t + (\rho + \delta)d + \rho + \delta = t + (\rho + \delta)(d+1)$.
	Thus the process $p_j$ executes the handler at line \ref{protocol:consensus:state}
	by $t_r$.

	If $p_j.\MSGTB(t_r)[k].\itview = v$, and since at most one value can be proposed
	per view, then $p_j.\MSGTB(t_r)[k]=(v,x)$. If $p_j.\MSGTB(t_r)[k].\itview < v$,
	then $p_j$ sets $p_j.\MSGTB(t_r)[k]=(v,x)$ (line \ref{protocol:consensus:state:msgtb}).
	Because no process in $\CQ$ enters a view higher than $v$ before $t + \Delta$,
	then $p_j.\MSGTB(t_r)[k].\itview$ cannot be greater than $v$, and also $p_j.\MSGTB$ remains
	unchanged from $t_r$ until $t + \Delta$. Therefore,
	$p_j.\MSGTB(t + (\rho + \delta)(d+1))[k]=(v,x)$.
\end{proof}

\begin{proof}[Proof of Lemma \ref{lemma:ok_stay}]
	Let $T = \EF{\CQ}(v) + d + 3\Delta$.
	By contradiction, assume that $p_i$ is the first process in $\CQ$ to call $\adv$ in $v$.
	This only occurs if $\timerdecision$ expires at $p_i$ (line \ref{protocol:consensus:timerexpires:adv}).
	Because $p_i$ is the first process to call $\adv$ in $v$ and $\durdecision_i(v) >
	d + 3\Delta$, no process in $\CQ$ invokes $\adv$ in $v$ until after $T$.
	Then by Bounded Entry all processes in $\CQ$ enter $v$ by $\EL{\CQ}(v) \le \EF{\CQ}(v) + d$.
	By Validity no process can enter $v + 1$ until after $T$,
	and by Lemma \ref{lemma:yqh}, the same holds for any view $> v$.
	Thus, all processes in $\CQ$ stay in $v$ at least until $T$.
	We now show that during this time processes in $\CQ$
	exchange the messages needed for $p_i$ to decide and stop the timer,
        thereby reaching a contradiction.

	Let $p_l = \leader(v)$. We first prove that $p_l$ proposes a value in
        view $v$.
	Indeed, pick an arbitrary process $p_j \in \CQ$. Upon entering $v$, $p_j$ sets
	$\MSGOB[j]=(v,\_,\_)$ and $\phase = \ENTERED$
	(lines \ref{protocol:consensus:newview:msg1b} and \ref{protocol:consensus:newview:phase}).
	By Proposition \ref{proposition:fwd}, $p_l.\MSGOB(E_j(v) + \Delta)[j] = p_j.\MSGOB(E_j(v))[j]$.
	Because no process enters a view $> v$ until after $T$, the value of $p_l.\MSGOB[j]$
	remains unchanged from $E_j(v) + \Delta$ until $T$.
	Then, since $p_j$ was picked arbitrarily, by $\EL{\CQ}(v) + \Delta$ the
        array $p_l.\MSGOB$ contains $|\CQ| > \frac{n}{2}$ entries with view $v$.
	Thus, the precondition at line \ref{protocol:consensus:proposed} is satisfied at $p_l$
	at a time $t_{pl}$ such that $\EF{\CQ}(v) \leq t_{pl} \leq \EL{\CQ}(v) + \Delta \leq \EF{\CQ}(v) + d + \Delta$.
	Since $p_l$ invokes $\propose$ no later than $\EF{\CQ}(v)$, $p_l.\pval(t_{pl}) \neq \bot$.
	Therefore, by $t_{pl}$ the process $p_l$ sets $\MSGTA[l]=(v,x)$ for some $x$,
	and $\phase = \PROPOSED$ (lines \ref{protocol:consensus:msgta1},
	\ref{protocol:consensus:msgta2} and \ref{protocol:consensus:phase}).
	Moreover, the guard at line \ref{protocol:consensus:proposed} and line \ref{protocol:consensus:phase}
	guarantee that the proposed value $x$ is unique for view $v$.

	We next prove that all processes in $\CQ$ accept the proposal $x$ made by
        $p_l$ in view $v$. Indeed, pick an arbitrary process $p_j \in \CQ$.
	By Proposition \ref{proposition:fwd}, $p_j.\MSGTA(t_{pl} + \Delta)[l] = p_l.\MSGTA(t_{pl})[l]$.
	Thus, the precondition at line \ref{protocol:consensus:accepted} is satisfied at
	$p_j$ at a time $t_{aj}$ such that $t_{aj} \leq t_{pl} + \Delta \leq \EF{\CQ}(v) + d + 2\Delta$.
	Therefore, by $t_{aj}$ the process $p_j$ sets
	$\MSGTB[j] = (v,x)$ and $\phase = \ACCEPTED$ (lines \ref{protocol:consensus:accepted:msgtb} and
	\ref{protocol:consensus:accepted:phase}).

	Finally, we prove that $p_i$ decides in view $v$.
	Indeed, pick an arbitrary process $p_j \in \CQ$.
	By Proposition \ref{proposition:fwd}, $p_i.\MSGTB(t_{aj} + \Delta)[j] = p_j.\MSGTB(t_{aj})[j] = (v,x)$.
	Because no process enters a view $> v$ until after $T$, the value of $p_i.\MSGTB[j]$
	remains unchanged from $t_{aj} + \Delta$ until $T$.
	Then, since $p_j$ was picked arbitrarily, by $\max\{t_{aj} \mid p_j\in\CQ\} + \Delta$
	the array $p_i.\MSGTB$ contains $|\CQ| > \frac{n}{2}$ entries equal to $(v,x)$.
	Thus, the precondition at line \ref{protocol:consensus:decided} is satisfied at $p_i$
	at a time $t_{di}$ such that $t_{di} \leq \EF{\CQ}(v) + d + 3\Delta = T$.
	Therefore, by $t_{di}$ the process $p_i$ stops the timer $\timerdecision$
	(line \ref{protocol:consensus:decided:stoptimer}) before it expires,
	which contradicts the fact that $\timerdecision$ expires at $p_i$ while in $v$.
\end{proof}

Finally, we prove the liveness of the algorithm in
Figure~\ref{fig:consensus_protocol}, from which
Theorem~\ref{thm:upper-consensus} follows.
\begin{theorem}
  Let $\FS$ be a fail-prone system such that for all $f \in \FS$, the graph $\RG{\GG}{f}$ contains a
  connected core $\CQ_f$, and let $\tau: \FS \rightarrow 2^\PP$ be such that
  $\forall f \in \FS.\ \tau(f) = \CQ_f$. Then the algorithm in
  Figure~\ref{fig:consensus_protocol} is $(\FS,\tau)$-wait-free over the model
  $\MCU$ (partially synchronous / eventually reliable / flaky).
\end{theorem}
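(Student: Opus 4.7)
The plan is to argue by contradiction, following the structure of the proof sketch already provided after Lemma~\ref{lemma:ok_stay_all}. I would assume, for contradiction, that there exist $f \in \FS$, a process $p_j \in \tau(f) = \CQ$, and an $f$-compliant fair execution in which $p_j$ invokes $\propose(x)$ at some time $t$ but the invocation never returns. Thus $p_j$ never sets $\phase = \DECIDED$. A first preliminary observation I would establish is that no process in $\CQ$ ever decides: if some $p_k \in \CQ$ executed line~\ref{protocol:consensus:decided:phase} in a view $v$, the matching $\MSGTB$ entries would propagate through the connected core via the periodic $\STATE$ gossip (line~\ref{protocol:consensus:periodically:state}), and $p_j$ would eventually accumulate a majority of matching $\MSGTB$ entries and decide itself, contradicting our assumption.

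The core of the argument then rests on the two claims stated in the sketch. For Claim~1 (every view is entered by some process in $\CQ$), I would proceed by induction on $v$. The base case follows from Startup: line~\ref{protocol:start:adv} makes every correct process in $\CQ$ invoke $\adv$ from view $0$, supplying the required majority. For the inductive step, assuming some process in $\CQ$ enters view $v$, I would apply Lemma~\ref{lemma:not_ok_adv} to every $p_i \in \CQ$ that enters $v$: since no process in $\CQ$ decides (by the preliminary observation), either $p_i$ invokes $\adv$ in $v$ or $p_i$ enters a higher view (in which case Lemma~\ref{lemma:yqh} implies some process in $\CQ$ has already attempted to advance from $v$). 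Feeding this into Progress yields a process in $\CQ$ that enters $v + 1$. Claim~2 — that the timer-expiration handler at line~\ref{protocol:consensus:timerexpires} fires infinitely often at each process in $\CQ$ — would then follow from Claim~1 combined with Bounded Entry, which forces every member of $\CQ$ to enter each sufficiently high view, together with the fact that no $\timerdecision$ can be stopped at line~\ref{protocol:consensus:decided:stoptimer}.

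Combining the claims, the monotone increment of $\durdecision$ at line~\ref{protocol:consensus:timerexpires:durdecision} makes timeouts at every $p_i \in \CQ$ grow without bound, and since leaders rotate round-robin while Claim~1 supplies infinitely many entered views, I would pick a view $v_1 \geq \V$ with $\leader(v_1) = p_j$ such that $\EF{\CQ}(v_1) \geq \max\{\GST, t\}$ and $\durdecision_i(v_1) > d + 3(\delta + \rho)\diameter(\CQ)$ for every $p_i \in \CQ$ that enters $v_1$. Since $p_j$ invoked $\propose$ at time $t \le \EF{\CQ}(v_1)$, Lemma~\ref{lemma:ok_stay_all} applies and no process in $\CQ$ invokes $\adv$ in $v_1$. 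But Claim~1 produces some process in $\CQ$ that enters $v_1 + 1$, and Validity then forces some process in $\CQ$ to have invoked $\adv$ in $v_1$, which is the desired contradiction.

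The main obstacle I anticipate is discharging Claim~1 cleanly. The delicate point is aligning the synchronizer's Progress property, which demands a majority of $\adv$ invocations by members of $\CQ$ specifically, with the per-process conclusion of Lemma~\ref{lemma:not_ok_adv}, while accounting for processes in $\CQ$ that may skip view $v$ by entering a higher view directly. The non-decision observation and Validity (used via Lemma~\ref{lemma:yqh}) are what prevent circularity: without them one cannot rule out scenarios in which $\CQ$ splits its $\adv$ invocations across different views and no single view accumulates the required majority to trigger Progress.
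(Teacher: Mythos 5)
Your overall architecture matches the paper's: the same contradiction setup, the same two claims, and an identical final step combining Lemma~\ref{lemma:ok_stay_all}, Claim~1 and Validity. Your route through Claim~1 (a direct induction on views, discharging the ``decides in $v$'' branch of Lemma~\ref{lemma:not_ok_adv} up front and the ``enters a higher view'' branch via Lemma~\ref{lemma:yqh}) is a legitimate reorganization of the paper's maximal-view argument. The genuine gap is in Claim~2. You assert that it follows from Claim~1 combined with Bounded Entry, ``which forces every member of $\CQ$ to enter each sufficiently high view,'' together with the fact that no $\timerdecision$ is stopped at the decision line. Neither half of this works. Bounded Entry is conditional: it forces all of $\CQ$ into view $v$ only if no process from $\CQ$ attempts to advance within $d$ of $\EF{\CQ}(v)$, and in the runaway scenario you are analyzing that precondition routinely fails, so members of $\CQ$ may skip views. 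More importantly, even a process that does enter every view need not have its timer \emph{expire} in any of them: it can be dragged into view $v+1$ by other processes' $\WISH$es before its $\timerdecision$ for $v$ runs out, in which case the handler at line~\ref{protocol:consensus:timerexpires} never fires for it and its $\durdecision$ never grows. This is exactly what breaks your final step, since Lemma~\ref{lemma:ok_stay_all} needs \emph{every} process of $\CQ$ entering $v_1$ to have a large timeout. The paper closes this hole by partitioning $\CQ$ into the processes whose timers expire finitely often ($\CF$) and infinitely often ($\CI$), and by proving a sharper per-process Lemma~\ref{lemma:ok_stay}: a process of $\CQ$ with a large timeout cannot be the \emph{first} member of $\CQ$ to invoke $\adv$ in a good view. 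In a sufficiently late view led by $p_j$, the first $\adv$ caller can then be neither in $\CF$ (their timers no longer expire) nor in $\CI$ (by Lemma~\ref{lemma:ok_stay}), contradicting Validity. Some such refinement of Lemma~\ref{lemma:ok_stay_all} is unavoidable; your sketch never produces it.

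A secondary weak point is your preliminary observation that no process of $\CQ$ ever decides. The one-line gossip justification is not sound as stated: the entries that let $p_k$ decide are a majority of $\MSGTB$ cells equal to the \emph{same} pair $(v,x)$, and while they are in transit to $p_j$ they can be overwritten (line~\ref{protocol:consensus:state:msgtb}) by higher-view acceptances, so $p_j$ may only ever hold a majority of entries that agree on the value but not on the view, which does not satisfy the guard at line~\ref{protocol:consensus:decided}. The paper avoids this by invoking the propagation argument only in a context where $v$ is the maximal view entered by $\CQ$, so Proposition~\ref{proposition:fwd} applies and the relevant entries are stable. You should either restrict your observation to that context, as the paper does, or supply a separate stability argument.
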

\begin{proof}
	By contradiction, assume there exists $f \in \FS$, $p_j \in \tau(f)=\CQ$ and an $f$-compliant
	fair execution of the algorithm in Figure~\ref{fig:consensus_protocol} such that
	$p_j$ invokes $\propose$ at a time $t_{pj}$ but that the operation never returns.
		We first prove that in this case the protocol keeps moving through views
	forever.

  	\setcounter{myclaim}{0}

	\begin{myclaim}
	\label{clm:all_views_entered}
	Every view is entered by some process in $\CQ$.
	\end{myclaim}
	\begin{proof}
	Since all processes in $\CQ$ invoke $\adv$ upon starting, by Startup some
	process in $\CQ$ enters view $1$.
	We now show that infinitely many views are entered by some process in $\CQ$.
	Assume the contrary, so that there exists a maximal view $v$ entered by any process in $\CQ$.
	Let $P \subseteq \CQ$ be any set of more than $\frac{n}{2}$ processes and
	consider an arbitrary process $p_i \in P$ that enters $v$.

	Suppose $p_i$ decides in $v$.
	Then there exists a time $t$, a view $v' \geq v$ and a value $x$ such that
	$p_i.\MSGTB(t)$ contains
	more than $\frac{n}{2}$ entries equal to $(v',x)$.
        Let $Q$ be the set of processes these entries correspond to. Then one of them
        must be from $\CQ$.
	Because no process from $\CQ$ enters a view $> v$, the precondition at line
	\ref{protocol:consensus:accepted} ensures that $v' = v$.
	For similar reasons, the value of $p_i.\MSGTB$ remains unchanged from $t$ onwards.
	Then there exists a time $t' \geq \GST$ such that $p_i.\MSGTB(t')$ contains
	more than $\frac{n}{2}$ entries equal to $(v,x)$.

	By Proposition \ref{proposition:fwd}, $p_j.\MSGTB(t' + \Delta)[k] = p_i.\MSGTB(t')[k]$
	for each $p_k \in Q$.
	Because no process from $\CQ$ enters a view $> v$,
	the value of $p_j.\MSGTB$ remains unchanged from $t' + \Delta$ onwards.
	Then the precondition at line \ref{protocol:consensus:decided}
	is satisfied at $p_j$ at a time $t_{dj}$. Therefore, by $t_{dj}$
	the process $p_j$ sets $\val = x$ and $\phase = \DECIDED$.
	But then the $\propose$ operation invoked by $p_j$ returns by
	$\max\{t_{pj},t_{dj}\}$, contradicting the assumption that it never does so.

	Therefore, $p_i$ never decides in $v$.
	Then by Lemma \ref{lemma:not_ok_adv}, $p_i$ eventually invokes $\adv$ in $v$.
	Since $p_i$ was picked arbitrarily, we have $\forall p_i \in P.\ E_i(v)\fdef
	\implies A_i(v)\fdef$.
	By Progress, $\EF{\CQ}(v + 1)\fdef$, which contradicts the fact that $v$
	is the maximal view entered by any process in $\CQ$.
	Thus, processes in $\CQ$ keep entering views forever.
	The claim then follows from Lemma \ref{lemma:yqh} ensuring that, if a view is entered by a process
	in $\CQ$, then so are all preceding views.
	\end{proof}

	\begin{myclaim}
	\label{clm:exp_inf_times}
	Every process in $\CQ$ executes the timer expiration handler at line \ref{protocol:consensus:timerexpires}
	infinitely often.
	\end{myclaim}
	\begin{proof}
	Assume the contrary and let $\CF$ and $\CI$ be the sets of
	processes in $\CQ$ that execute the timer expiration handler finitely and
	infinitely often, respectively. Then $\CF \neq \varnothing$, and by Claim
	\ref{clm:all_views_entered} and Validity, $\CI \neq \varnothing$.
		Let view $v_2$ be the first view such that $v_2 \geq \V$ and
        $\EF{\CQ}(v_2)$ $\geq \max\{\GST,t_{pj}\}$;
	such a view exists by Claim \ref{clm:all_views_entered}.
	The $\durdecision$ value increases unboundedly at processes from $\CI$, and does not change
	after some view $v_3$ at processes from $\CF$.
	By Claim \ref{clm:all_views_entered} and since leaders rotate round-robin, there exists a view
	$v_4 \geq \max\{v_2, v_3\}$ led by $p_j$ such that any process
	$p_i \in \CI$ that enters $v_4$ has $\durdecision_i(v_4) > d + 3\Delta$.
	By Claim \ref{clm:all_views_entered} and Validity, at least one process in $\CQ$ calls $\adv$
	in $v_4$; let $p_k$ be the first process to do so. Because $v_4 \geq v_3$,
	this process cannot be in $\CF$, since none of these processes can
	increase their timers in $v_4$. Then $p_k \in \CI$, contradicting
	Lemma~\ref{lemma:ok_stay}.
	\end{proof}

Since a process increases $\durdecision$ every time $\timerdecision$
expires, by Claim~\ref{myclaim:exp_inf_times} all processes will eventually 
have $\durdecision > d + 3 (\delta + \rho) \diameter(\CQ)$. Since leaders rotate
round-robin, by Claim 1 there will be infinitely many views led by $p_j$.
Hence, there exists a view $v_1 \geq \V$ led by $p_j$ such that
$\EF{\CQ}(v_1) \geq \max\{\GST,t_{pj}\}$ and for any process $p_i \in \CQ$ that
enters $v_1$ we have
$\durdecision_i(v_1) > d + 3 (\delta + \rho) \diameter(\CQ)$.  Then by
Lemma~\ref{lemma:ok_stay_all}, no process in $\CQ$ calls $\adv$ in $v_1$. On the
other hand, by Claim 1 some process in $\CQ$ enters $v_1 + 1$.  Then by
Validity, some process in $\CQ$ calls $\adv$ in $v_1$, which is a contradiction.
\end{proof}

\fi

\end{document}